\documentclass[letterpaper, 10 pt, conference]{ieeeconf}
\IEEEoverridecommandlockouts
\usepackage[utf8]{inputenc}
\usepackage{amsmath,amssymb,amsfonts,mathtools}

\usepackage{amsthm}
\usepackage{xcolor}
\usepackage{graphicx}
\usepackage{float}
\usepackage{babel}
\usepackage{algorithm}
\usepackage{algorithmic}
\usepackage{bbm}
\usepackage{booktabs}
\usepackage{caption}
\usepackage{subcaption}
\usepackage{array}
\usepackage[hidelinks]{hyperref}
\usepackage{multirow}
\usepackage{bm}

\newtheorem{theorem}{Theorem}
\newtheorem{lemma}{Lemma}
\newtheorem{proposition}{Proposition}
\newtheorem{corollary}{Corollary}

\newtheorem{assumption}{Assumption}
\newtheorem{remark}{Remark}
\newtheorem{designreq}{Design Requirement}

\newcommand{\R}{\mathbb{R}}
\newcommand{\E}{\mathbb{E}}
\newcommand{\Sigw}{\Sigma_w}
\newcommand{\Sigs}{\Sigma_s^*}
\newcommand{\norm}[1]{\left\|#1\right\|}
\newcommand{\normF}[1]{\left\|#1\right\|_F}
\newcommand{\tr}{\operatorname{tr}}
\newcommand{\Fcal}{\mathcal{F}}
\newcommand{\Ucal}{\mathcal{U}}
\newcommand{\Xcal}{\mathcal{X}}
\newcommand{\Ncal}{\mathcal{N}}
\newcommand{\hSigk}[1]{{\hat{\Sigma}_{#1}}^{(k)}}
\newcommand{\hSigkp}[1]{{\hat{\Sigma}_{#1}}^{(k+1)}}
\newcommand{\Lcal}{\mathcal{L}}

\title{Adaptive MPPI with Online Disturbance Covariance Estimation:
Provable Stability Tightening via Spatial Smoothing}

\author{
Hyung-Jin Yoon$^{\dagger}$ and
Hunmin Kim$^{\ddagger}$
\thanks{$^{\dagger}$H.-J. Yoon is with the
Department of Mechanical and Nuclear Engineering, Tennessee Technological
University, Cookeville, TN, USA.}%
\thanks{$^{\ddagger}$H. Kim is with the School of Engineering, Department
of Electrical and Computer Engineering, Mercer University, Macon, GA, USA.}%
\thanks{This work was supported by internal funding at Tennessee
Technological University.}
\thanks{\thanks{This paper is part of a companion series on the closed-loop
stability of Model Predictive Path Integral (MPPI) control. Manuscript
submitted June 2026.}}
}

\begin{document}
\maketitle

% ------------------------------------------------------------------
\begin{abstract}
We address Model Predictive Path Integral (MPPI) control for systems
subject to additive process disturbances whose covariance is unknown
and spatially varying. An incorrect disturbance covariance induces a
persistent mismatch penalty in the closed-loop stability certificate,
whereas online covariance estimation can reduce this penalty over
time. We propose an online cell-wise disturbance-covariance estimator
with spatial diffusion and prove that it converges to a smoothed fixed
point, with an error bound separating three effects: a
stochastic-approximation term that decreases as more closed-loop data
are collected, a spatial-smoothing bias caused by diffusion across
neighboring cells, and a drift term caused by slow temporal variation
of the disturbance field. Choosing the diffusion kernel proportional
to the stationary visitation measure makes the spatial diffusion term
dissipative in the weighted norm used in the Lyapunov analysis.
Substituting the resulting covariance estimate into the MPPI sampling
distribution yields a time-varying adaptation penalty that quantifies
the transient cost of learning in the closed-loop bound; this penalty
decreases with the stochastic-approximation component and, over any
prescribed finite horizon, is bounded by the sum of a residual
smoothing bias and an accumulated drift allowance. The main result is
a \emph{payoff theorem}: although the adaptive controller may
initially pay a larger transient penalty while learning, for any
fixed mismatched disturbance covariance whose mismatch exceeds this
residual smoothing-bias-plus-drift-allowance bound, there exists a
finite, computable crossover time, within the prescribed horizon,
after which the adaptive stability certificate is strictly tighter.
In contrast to heuristic covariance adaptation, the proposed estimator
treats the disturbance covariance as a statistical estimand and
updates the MPPI sampling covariance and corresponding control penalty
together, preserving the path-integral structure throughout
adaptation.
\end{abstract}

% -------------------------------------------------------
\section{Introduction}
\label{sec:intro}
% -------------------------------------------------------
Model Predictive Path Integral (MPPI) control~\cite{williams} solves
finite-horizon stochastic optimal control by drawing $M$ parallel
sample trajectories and returning an importance-weighted control
update. Because it requires no gradient of the dynamics or cost,
MPPI has been applied to off-road navigation~\cite{williams}, legged
locomotion, and aerial vehicles.

Companion work has begun to develop formal closed-loop stability
guarantees for MPPI. For linear time-invariant systems, one companion
paper~\cite{p1lti} establishes exponential stability in expectation up
to residual floors caused by process noise, finite-sample MPPI
approximation error, and sampling-confidence effects, via a
finite-sample MPPI--LQR approximation bound and a Lyapunov perturbation
argument. A second companion paper extends the analysis to nonlinear
systems using contraction theory and Control Lyapunov Functions (CLFs),
yielding a finite-horizon, high-probability localized mean practical
stability bound with the same qualitative structure~\cite{p2nonlinear}.

Both certified bounds share a common shape: an exponentially decaying
nominal term plus additive residual terms. Among these residuals, the
process-noise contribution depends explicitly on the covariance
$\Sigw$, whereas increasing the MPPI sample count mainly reduces the
Monte Carlo approximation error and does not remove the disturbance
floor. So once the finite-sample MPPI error is controlled, inaccurate
knowledge of $\Sigw$ becomes a direct limitation on the tightness of
the stability certificate. The present paper closes this gap by
introducing an online adaptive covariance estimator together with
finite-sample convergence guarantees and a plug-in stability analysis.

In practice $\Sigw$ is rarely known and rarely uniform: terrain
variation, wind gusts, and actuator wear produce spatially
heterogeneous, slowly time-varying noise. The consequences of a wrong
noise estimate are structural, not merely quantitative. Because MPPI
enforces the coupling constraint
$\sigma\sigma^\top = \lambda_0(R^\top R)^{-1}$~\cite{med}, the sampling
covariance and control cost weight are not independent: any mismatch
between the nominal $\hat\sigma$ and the true $\sigma$ propagates
through both simultaneously. As derived in~\cite{med} (eqs.~(12)--(13)),
overestimating $\hat\sigma > \sigma$ decreases the effective control
cost weight $\|(RR^\top)(\hat\sigma)\|$ and the risk sensitivity
$\alpha(\hat\sigma)$ at the same time, weakening the controller against
the true disturbance; underestimation reverses both effects, producing
over-aggressive control.

Our prior work~\cite{med} addressed this by modeling the selection of
$(\hat\sigma, \lambda_\alpha)$ as a Markov Decision Process and
training an actor-critic agent to choose these parameters online. The
MDP formulation was rigorous: it proved that the state transition is
Markov (Proposition~3 of~\cite{med}) and demonstrated performance
improvements over static risk-sensitive MPPI on double integrator and
bicycle model environments. However, the RL policy treats $\hat\sigma$
as a control parameter rather than a statistical estimand, so it
provides no convergence rate, no characterization of steady-state
error, and no connection to the closed-loop stability bound.

This paper provides that connection. Under slow time variation
(Assumption~\ref{ass:slowvar}), spatial Lipschitz continuity
(Assumption~\ref{ass:smooth}), and the stability and regularity
conditions stated below, we prove that a cell-wise recursive estimator
with stationary-measure-weighted spatial diffusion converges to a
smoothed covariance field with an explicit bias--variance--drift
decomposition, and we show how the resulting estimation error enters
the companion nonlinear MPPI stability certificate as an additive
adaptation penalty. This yields a precise payoff statement: after a
computable crossover time within any prescribed finite horizon, the
adaptive covariance estimate gives a strictly tighter certified
stability bound than any fixed covariance whose mismatch exceeds the
estimator's residual smoothing bias and finite-horizon drift
allowance. We make four formal contributions.

\begin{enumerate}
\item \textbf{Fixed-point characterization
(Lemma~\ref{lem:fixedpoint}).}
We characterize the unique fixed point of the expected estimator
dynamics under stationary-measure-weighted spatial diffusion. The
fixed point is a smoothed version of the true covariance field, with
cell-wise bias bounded by
$\max_s\normF{\Sigs - \Sigw(s)} \leq \beta d_\kappa L_\sigma
r_{\Ncal} / p_{\min}$.

\item \textbf{Estimator convergence
(Theorem~\ref{thm:convergence}).}
We prove that the online estimator converges toward this smoothed
fixed point with an explicit three-term error bound consisting of an
$O(1/\sqrt{k})$ stochastic-approximation term, a spatial-smoothing
bias, and a finite-horizon accumulated-drift term. The proof uses a
$p_s$-weighted Lyapunov function and the dissipativity of the
diffusion operator (Lemma~\ref{lem:dissipative}), avoiding
two-timescale stochastic approximation machinery.

\item \textbf{Plug-in stability analysis
(Proposition~\ref{prop:adaptive}).}
We show that substituting the online estimate
$\hSigk{s(x_k)}$ for the unknown process covariance in the companion
nonlinear MPPI stability bound introduces an additive adaptation
penalty $\psi^{(k)}$. The stochastic-approximation component of this
penalty decays as $O(1/\sqrt{k})$, while the remaining terms are
explicitly controlled by the spatial-smoothing bias and the
time-variation rate.

\item \textbf{Payoff theorem
(Theorem~\ref{thm:payoff}).}
We prove that, for any fixed horizon $T$ and any fixed mismatched
covariance $\bar\Sigma_w$ whose mismatch exceeds the estimator's
smoothing bias plus its accumulated drift allowance over $[0,T]$,
there exists a finite crossover time $k_T^*\le T$ after which the
adaptive controller achieves a strictly tighter certified stability
bound. The crossover time is given in closed form.
\end{enumerate}
% --------------------------------------------------------

% -------------------------------------------------------
\section{Related Work}
\label{sec:related}
% -------------------------------------------------------

\textbf{MPPI and path-integral control.}
MPPI builds on the path-integral formulation of stochastic optimal
control, in which the control update is computed from
importance-weighted sampled trajectories rather than from gradients
of the dynamics or cost~\cite{williams}. This structure has made MPPI
attractive for robotic systems with nonlinear dynamics, nonconvex
costs, and real-time parallel rollout requirements. Most existing
theoretical work on MPPI focuses on open-loop sampling accuracy,
optimizer convergence, or the performance of a single planning
update. The closed-loop stability question is more delicate because
finite-sample approximation errors are injected repeatedly through
receding-horizon execution.

\textbf{Covariance adaptation in path-integral control.}
Stulp and Sigaud~\cite{pi2} introduced PI$^2$, which adapts the
sampling covariance $\Sigma_\epsilon$ during policy improvement. In
standard path-integral control, however, the sampling covariance and
the control-cost weight are not independent: the path-integral
matching condition ties $\Sigma_\epsilon$ to the control weighting
matrix through the coupling constraint~\eqref{eq:sigR}, so adapting
$\Sigma_\epsilon$ while holding $R$ fixed implicitly changes the
effective control penalty and risk sensitivity at every update. Our
prior risk-sensitive MPPI work~\cite{med} exploited this coupling as
an adaptive design mechanism. In contrast, the present paper treats
the process covariance $\Sigw$ as a statistical estimand: the
estimator targets $\Sigw$ directly and then updates both
$\Sigma_\epsilon$ and $R$ consistently so that~\eqref{eq:sigR} is
maintained throughout adaptation.

\textbf{Noise covariance estimation in MPC.}
Noise covariance estimation has a long history in model predictive
control and filtering. Odelson et al.~\cite{odelson} proposed an
autocovariance least-squares method for estimating process and
measurement noise covariances in linear MPC settings. Such methods
are typically offline or batch procedures designed for fixed linear
models. The present setting differs in three respects: the covariance
is spatially varying, it may drift slowly over time, and the estimate
is used online inside a nonlinear sampling-based controller. These
differences require a recursive estimator whose error can be
propagated into a closed-loop MPPI stability certificate.

\textbf{Unknown noise in control sample complexity.}
Dean et al.~\cite{dean} studied the sample complexity of LQR under
unknown dynamics and noise statistics, showing how statistical
uncertainty in system quantities affects controller synthesis and
performance. Our objective is narrower but complementary: we do not
identify the full dynamics and assume the nominal model used by MPPI
is available, instead estimating the process-noise covariance field
online. In the single-cell, time-invariant limit, the
stochastic-approximation component of our estimator has the standard
$O(1/\sqrt{k})$ statistical rate, consistent with the usual
$O(\epsilon^{-2})$ sample complexity scaling for estimating second
moments.

\textbf{Stochastic approximation and spatial diffusion.}
Classical stochastic approximation provides convergence guarantees
for recursive estimators with diminishing step sizes. Coupled
stochastic approximation schemes are often analyzed using
two-timescale methods, as in Borkar~\cite{borkar}, where one
recursion evolves sufficiently slowly relative to another. Our
estimator instead combines a diminishing data-driven covariance
update with a fixed spatial diffusion term: choosing the diffusion
kernel from the stationary visitation measure makes the diffusion
operator dissipative in the $p_s$-weighted norm, so the stochastic
approximation and diffusion terms can be handled by a single
Lyapunov argument rather than a two-timescale separation.

\textbf{Closed-loop stability of MPPI.}
Companion work established the closed-loop stability foundation used
here. The linear companion paper~\cite{p1lti} proves exponential mean
stability for LTI systems up to residual floors caused by process
noise, finite-sample MPPI error, and sampling-confidence effects. The
nonlinear companion paper~\cite{p2nonlinear} extends the argument
using contraction theory and Control Lyapunov Functions, yielding a
finite-horizon, high-probability localized mean practical stability
bound. Both certificates contain a process-noise contribution that
depends explicitly on $\Sigw$. The present paper builds on this
interface: by estimating $\Sigw$ online, we convert covariance
mismatch into an explicit adaptation penalty and prove when this
penalty becomes smaller than the permanent mismatch penalty of any
fixed, non-adaptive covariance choice.

\subsection*{Paper Organization}

Section~\ref{sec:setup} defines the system, discretization, and
constrained MPPI law. Section~\ref{sec:mainresults} presents the
main results: the estimator and its fixed-point analysis
(Sec.~\ref{sec:estimator}), its convergence guarantee
(Sec.~\ref{sec:convergence}), the resulting adaptive stability bound
(Sec.~\ref{sec:plugin}), and the payoff theorem
(Sec.~\ref{sec:payoff}). Section~\ref{sec:simulation} validates on
two environments. Section~\ref{sec:discussion} discusses
limitations.

% ------------------------------------------------------------------
\section{Problem Setup and Notation}
\label{sec:setup}
% ------------------------------------------------------------------

We consider the nonlinear stochastic discrete-time system
\begin{equation}
x_{k+1} = f(x_k,u_k) + w_k,
\label{eq:dynamics}
\end{equation}
where $x_k\in\R^n$, $u_k\in\R^m$, and
$f:\R^n\times\R^m\to\R^n$ is continuously differentiable with
$f(0,0)=0$. The process disturbance is conditionally Gaussian with
state-dependent covariance:
\begin{equation*}
w_k \mid x_k=x \sim \mathcal{N}\left(0,\Sigw^{(k)}(x)\right),
% \label{eq:spatialnoise}
\end{equation*}
where $\Sigw^{(k)}:\R^n\to\mathbb{S}_{++}^n$ is an unknown,
spatially smooth, slowly time-varying positive-definite covariance
field. When the covariance is constant in both space and time, this
recovers the additive-noise setting considered in the companion
nonlinear stability analysis~\cite{p2nonlinear}. For notational
simplicity, we write $\Sigw(x)$ when the time index is clear from
context.

\subsection{State Space Discretization}

Let $\Xcal\subset\R^n$ be a compact operating region. Partition
$\Xcal$ into $S$ non-overlapping cells $\{C_s\}_{s=1}^S$ with
diameter
\begin{equation*}
  r := \max_{1\le s\le S}\operatorname{diam}(C_s).  
\end{equation*}
Let $s(x)$ denote the index of the cell containing $x$, and let
$c_s$ denote a representative point of $C_s$, such as its centroid.
Within each cell, we approximate the spatially varying covariance by
the cell representative
\begin{equation*}
  \Sigw^{(k)}(x) \approx {\Sigw^{(k)}}_s
  := \Sigw^{(k)}(c_s),
  \qquad x\in C_s .  
\end{equation*}
By the spatial Lipschitz condition in Assumption~\ref{ass:smooth},
\begin{equation*}
  \normF{\Sigw^{(k)}(x)-{\Sigw^{(k)}}_s}
  \le L_\sigma \operatorname{diam}(C_s)
  \le L_\sigma r,
  \qquad x\in C_s,  
\end{equation*}
so the cell-discretization error is $O(L_\sigma r)$ and vanishes as
$r\to0$.

\subsection{MPPI Control Law and the $\sigma$-$R$ Constraint}

At each step $k$, MPPI draws $M$ i.i.d.\ perturbation sequences with
$\epsilon_i^{(j)}\!\sim\!\mathcal{N}(0,\Sigma_\epsilon)$ and returns:
\begin{equation}
\begin{aligned}
  w^{(j)} &= \exp\!\bigl(-J(x_k,U^{(j)})/\lambda\bigr), \\
  u_k^{\mathrm{MPPI}} &=
  \frac{\sum_{j=1}^M w^{(j)} v_k^{(j)}}{\sum_{j=1}^M w^{(j)}}.  
\end{aligned}\label{eq:mppi}
\end{equation}
For the path integral HJB to admit a forward-sampling solution, the
coupling constraint must hold~\cite{med}:
\begin{equation}
  \Sigma_\epsilon = \lambda_0\,(R^\top R)^{-1}.
  \label{eq:sigR}
\end{equation}
When $\Sigw$ is unknown we set $\Sigma_\epsilon = \hSigk{s(x_k)}$
and update $R^{(k)} = \lambda_0^{1/2}(\hSigk{s(x_k)})^{-1/2}$
at each step, maintaining~\eqref{eq:sigR} throughout adaptation.
This is the key structural departure from PI$^2$-CMA~\cite{pi2}:
there, covariance adaptation is used to tune exploration in policy
improvement, whereas here the covariance is treated as an estimate of
the process disturbance and is coupled with the corresponding control
penalty through~\eqref{eq:sigR}.

\subsection{Notation}

Throughout the paper, $\norm{\cdot}$ denotes the Euclidean norm for
vectors, and $\normF{\cdot}$ denotes the Frobenius norm for matrices.
The set $\mathbb{S}_{++}^n$ denotes the cone of $n\times n$ symmetric
positive-definite matrices. The notation $\Ncal(s)$ denotes the
neighbor set of cell $s$, and $\Fcal_k$ denotes the sigma-algebra
generated by all information available up to time $k$.

\subsection{Assumptions}
% ------------------------------------------------------------------
\begin{assumption}[Slow Time Variation]
\label{ass:slowvar}
For all cells $s$ and all $k\geq0$,
\[
  \normF{\Sigw^{(k+1)}(s)-\Sigw^{(k)}(s)}
  \leq \epsilon_v,
\]
where $\epsilon_v\geq0$. When $\epsilon_v=0$, the noise field is
time-invariant.
\end{assumption}
% ------------------------------------------------------------------

% ------------------------------------------------------------------
\begin{assumption}[Spatial Smoothness]
\label{ass:smooth}
For any two cells $s,s'$ and all $k\geq0$,
\[
  \normF{\Sigw^{(k)}(s)-\Sigw^{(k)}(s')}
  \leq L_\sigma d(s,s'),
\]
where $d(s,s')$ is the Euclidean distance between the centroids of
$C_s$ and $C_{s'}$, and $L_\sigma\geq0$.
\end{assumption}
% ------------------------------------------------------------------

\begin{assumption}[Geometric Ergodicity]
\label{ass:geometric}
The closed-loop process generated by~\eqref{eq:dynamics} under MPPI,
restricted to the compact operating region $\Xcal$, is geometrically
ergodic over the cells $\{C_s\}_{s=1}^S$: there exist stationary
visitation probabilities $\{p_s\}_{s=1}^S$ with
\[
  p_{\min}:=\min_{1\le s\le S}p_s>0,
\]
and constants $C_{\mathrm{erg}}<\infty$ and $\rho\in(0,1)$ such that
\[
  \bigl|\Pr[x_k\in C_s\mid x_0=x] - p_s\bigr|
  \le C_{\mathrm{erg}}\,\rho^{k},
  \, \forall x\in\Xcal,\ \forall s,\ \forall k\ge0.
\]
This bound is taken to hold uniformly over the (slowly varying)
closed-loop policy induced by any admissible estimator value
$\hat\Sigma\in\Theta^S$: since $\hat\Sigma^{(k)}$ changes by only
$O(\alpha^{(k)})$ per step (Sec.~\ref{sec:estimator}), the cell
process is approximately time-homogeneous over any window of
length $\tau=O(\log k)$, and Lemma~\ref{lem:decorrelation} below
uses the assumption only over such short windows.
\end{assumption}

\begin{remark}[Role of the Ergodicity Assumption]
Assumption~\ref{ass:geometric} is a persistent-excitation condition,
as in classical adaptive control: a parameter can only be identified
along directions the closed-loop trajectory actually visits. Here
the ``direction'' is a cell of the operating region, so no estimator
can recover $\Sigw(s)$ in a cell with $p_s=0$. $p_{\min}>0$ ensures
every cell is visited infinitely often, while $\rho$ controls how
fast empirical visitation converges to $\{p_s\}$. This is the
standard regime for robotic tasks with recurrent coverage --
patrolling, lawnmower inspection, and repeated docking routes all
revisit a bounded region by design, as in the lawnmower trajectory
of Section~\ref{sec:simulation}. For non-recurrent tasks, $\Xcal$
can instead denote a corridor revisited across episodes, or an
initial calibration phase can seed the estimate before deployment.
\end{remark}

\begin{designreq}[Reversible Diffusion Kernel]
\label{req:kernel}
Let $\{p_s\}_{s=1}^S$ be the stationary visitation probabilities of
Assumption~\ref{ass:geometric}. The cell-neighbor graph is undirected:
if $s'\in\Ncal(s)$, then $s\in\Ncal(s')$. We require the diffusion
kernel $\kappa(s,s')$ to be nonnegative and to satisfy the
detailed-balance identity
\[
  p_s\kappa(s,s') = p_{s'}\kappa(s',s),
  \qquad s'\in\Ncal(s).
\]
Equivalently, we choose symmetric edge weights
$a_{ss'}=a_{s's}\ge0$ and set
\[
  \kappa(s,s') = \frac{a_{ss'}}{p_s},
  \qquad s'\in\Ncal(s).
\]
Moreover,
\[
  d_\kappa
  :=
  \max_{1\le s\le S}
  \sum_{s'\in\Ncal(s)} \kappa(s,s')
  <\infty .
\]
\end{designreq}

\begin{remark}[Design Nature of Requirement~\ref{req:kernel}]
Unlike Assumptions~\ref{ass:slowvar}--\ref{ass:geometric}, which
constrain the unknown, physical disturbance field, Requirement~\ref{req:kernel}
constrains only the diffusion kernel $\kappa$, which is chosen by the
algorithm designer and never touches the true process noise. The kernel
governs only how the \emph{estimator} shares information across cells;
it induces no smoothing of the environment itself, and the true field
$\Sigw^{(k)}(\cdot)$ is never altered by it. Because $\kappa(s,s')$ is
under our control, the detailed-balance identity is enforced by
construction: any symmetric edge weights $a_{ss'}=a_{s's}$ and the
choice $\kappa(s,s')=a_{ss'}/p_s$ satisfy it automatically, with no
constraint on $\Sigw^{(k)}(\cdot)$ required.
\end{remark}

% ------------------------------------------------------------------
\begin{assumption}[Observed Process Residuals: Unbiasedness and Bounded Fourth Moment]
\label{ass:noise}
At each time $k$, after applying $u_k$, the next state $x_{k+1}$ is
observed and the nominal model $f$ is available, so the realized
process residual
\[
  \hat w_k := x_{k+1}-f(x_k,u_k)
\]
can be computed. Conditional on $x_k\in C_s$ (equivalently $s_k=s$),
this residual satisfies:
\begin{enumerate}
  \item[(i)] \textbf{Conditional unbiasedness.}
  \[
    \E\big[\hat w_k\hat w_k^\top \mid x_k\in C_s\big] = \Sigw^{(k)}(s).
  \]
  \item[(ii)] \textbf{Bounded fourth moment.} There exists
  $\sigma_\xi^2<\infty$, independent of $k$ and $s$, such that
  \[
    \E\Big[\normF{\hat w_k\hat w_k^\top - \Sigw^{(k)}(s)}^2
    \,\Big|\, x_k\in C_s\Big]
    \le \sigma_\xi^2.
  \]
  This is a bound on a fourth moment of $\hat w_k$: since
  $\hat w_k\hat w_k^\top$ is already quadratic in $\hat w_k$, its
  squared Frobenius norm is degree four in the entries of
  $\hat w_k$, analogous to the fourth-moment conditions that
  control the variance of a sample covariance estimator.
\end{enumerate}
\end{assumption}
% ------------------------------------------------------------------

% ------------------------------------------------------------------
\begin{assumption}[Bounded Disturbance Covariance]
\label{ass:bounded}
There exist constants $0<\mu_{\mathrm{lo}}\le\mu_{\mathrm{hi}}<\infty$
such that
\[
  \mu_{\mathrm{lo}} I \preceq \Sigw^{(k)}(s) \preceq \mu_{\mathrm{hi}} I,
  \qquad \forall s,\ \forall k\ge0.
\]
Let
\[
  \Theta:=\{\Sigma\in\mathbb{S}_{++}^n:
  \mu_{\mathrm{lo}}I\preceq\Sigma\preceq\mu_{\mathrm{hi}}I\},
\]
a closed convex subset of $\mathbb{S}_{++}^n$, and let
$\Pi_\Theta:\mathbb{S}^n\to\Theta$ denote the metric projection onto
$\Theta$ in Frobenius norm.
\end{assumption}
% ------------------------------------------------------------------

% ------------------------------------------------------------------
\begin{assumption}[Companion Nonlinear Stability Certificate]
\label{ass:p2}
The companion nonlinear MPPI stability certificate~\cite{p2nonlinear}
applies on the operating region under consideration. In particular,
the nominal MPC closed loop is contracting in a Riemannian metric
$M(x)$ satisfying
\[
  \mu I \preceq M(x) \preceq \bar\mu I,
\]
and the MPPI approximation satisfies the corresponding small-gain
condition, yielding an effective contraction rate
$\tilde\beta\in(0,1)$. The associated constants are
\[
  c = \sqrt{\bar\mu/\mu},
  \qquad
  \gamma =
  \frac{\sqrt{\bar\mu}}{(1-\tilde\beta)\sqrt{\mu}} .
\]
\end{assumption}
% ------------------------------------------------------------------

% ------------------------------------------------------------------
\begin{assumption}[MPPI Weight Regularity]
\label{ass:weight_reg}
On the compact operating region $\Xcal$, the hypotheses of
Lemma~1 in the companion nonlinear MPPI stability analysis
\cite{p2nonlinear} hold for the chosen MPPI sampling distribution,
temperature, horizon, and compact set of nominal control sequences.
Equivalently, the unnormalized MPPI weights
\[
  \omega^{(j)} := \exp(-J^{(j)}/\lambda)
\]
satisfy the finite weighted-moment and denominator bounds used in
that lemma: there exist constants $\underline Z>0$ and
$C_\epsilon<\infty$ such that
\[
  0<\omega^{(j)}\le 1,
  \qquad
  \mathbb{E}[\omega^{(j)}]\ge \underline Z,
\]
and
\[
  \mathbb{E}\!\left[\|\omega^{(j)}\epsilon^{(j)}_0\|^2\right]
  \le C_\epsilon .
\]
These bounds hold \emph{uniformly} over $\Sigma\in\Theta$
(Assumption~\ref{ass:bounded}): the same constants
$\underline Z,C_\epsilon$ apply regardless of which admissible
sampling covariance generates $\{\omega^{(j)},\epsilon_0^{(j)}\}$,
paralleling the uniformity over $\Xcal\times\Ucal_N$ in Lemma~1
of~\cite{p2nonlinear}. Consequently, by the same
Hoeffding-concentration argument used in~\cite[Lemma~2]{p2nonlinear}
applied to the bounded i.i.d.\ summands $\omega^{(j)}\in(0,1]$, there
exists $M_1(\eta)<\infty$ such that, for all $M\ge M_1(\eta)$ and
all $\Sigma\in\Theta$,
\begin{equation}
  D(\Sigma) := \sum_{j=1}^M\omega^{(j)}(\Sigma)
  \;\ge\; \underline Z M/2
  \label{eq:denom_concentration}
\end{equation}
with conditional probability at least $1-\eta$ given $\Fcal_k$.
Throughout, $\eta\in(0,1)$ denotes the allowable failure
probability for the high-probability MPPI approximation
guarantees inherited from~\cite{p2nonlinear}; we condition on the
good event~\eqref{eq:denom_concentration} when invoking
Lemma~\ref{lem:mppi_sensitivity} below.
\end{assumption}
% ------------------------------------------------------------------

\begin{assumption}[Clipped Admissible Controls]
\label{ass:clipping}
Each sampled control $v^{(j)}(\Sigma) := \bar u + \epsilon^{(j)}(\Sigma)$
used in the MPPI rollout~\eqref{eq:mppi} is projected onto a fixed
compact admissible set $\Ucal_{\mathrm{adm}}\subset\R^m$ with
$\sup_{v\in\Ucal_{\mathrm{adm}}}\|v\|=U_{\max}<\infty$ before being
applied to the dynamics or evaluated in the cost $J$, e.g.\ by
actuator saturation as in Assumption~6 of~\cite{p2nonlinear}.
Consequently $\|v^{(j)}(\Sigma)\|\le U_{\max}$ \emph{for every
individual sample} $j$, not merely for the resulting weighted
average.
\end{assumption}

\subsubsection{Diffusion Kernel Design}

The covariance field is estimated cell-wise from realized process
residuals. Spatial diffusion is used to share information among
neighboring cells. The statistical justification for sharing information
comes from the spatial smoothness condition in
Assumption~\ref{ass:smooth}: nearby cells are assumed to have similar
disturbance covariance. Given this modeling assumption, the diffusion
kernel is a design choice. We choose it so that the resulting diffusion
operator is dissipative in the $p_s$-weighted Frobenius norm used in
the convergence proof.

Specifically, the kernel is designed to satisfy the detailed-balance
identity
\begin{equation}
  p_s\kappa(s,s') = p_{s'}\kappa(s',s),
  \qquad s'\in\Ncal(s).
  \label{eq:detailed_balance}
\end{equation}
This identity is a structural condition on the diffusion operator,
not an additional assumption on the unknown covariance field. To
enforce~\eqref{eq:detailed_balance}, let
$a_{ss'}=a_{s's}\ge0$ be symmetric edge weights on the undirected
cell-neighbor graph and define
\begin{equation}
  \kappa(s,s') = \frac{a_{ss'}}{p_s},
  \qquad s'\in\Ncal(s),
  \label{eq:kernel}
\end{equation}
with $\kappa(s,s')=0$ for $s'\notin\Ncal(s)$. Then
\[
  p_s\kappa(s,s') = a_{ss'} = a_{s's}
  = p_{s'}\kappa(s',s),
\]
so~\eqref{eq:detailed_balance} holds by construction.

The symmetric edge weights $a_{ss'}$ determine how strongly covariance
information is shared between neighboring cells. A simple default
choice is
\[
  a_{ss'} =
  \begin{cases}
    a_0, & s'\in\Ncal(s),\\
    0, & s'\notin\Ncal(s),
  \end{cases}
\]
with $a_0>0$. This gives uniform diffusion over the neighbor graph.
Alternatively, one may use distance-decaying weights such as
\[
  a_{ss'} = a_0
  \exp\!\left(-\frac{d(s,s')^2}{\ell^2}\right),
  \qquad s'\in\Ncal(s),
\]
where $\ell>0$ is a length-scale parameter. In all cases, the
requirement for the dissipativity proof is the symmetry
$a_{ss'}=a_{s's}$. The scalar diffusion weight $\beta>0$ controls the
overall strength of spatial smoothing.

Define
\begin{equation}
  d_\kappa :=
  \max_{1\le s\le S}
  \sum_{s'\in\Ncal(s)} \kappa(s,s') .
  \label{eq:dkappa}
\end{equation}
Since the number of cells is finite and $p_{\min}>0$ by
Assumption~\ref{ass:geometric}, $d_\kappa<\infty$ for bounded edge
weights $a_{ss'}$.

% ==================================================================
\section{Main Results}
\label{sec:mainresults}
% ==================================================================

\subsection{Online Noise Covariance Estimator}
\label{sec:estimator}

\subsubsection{Estimator Update}

For each cell $s\in\{1,\ldots,S\}$, maintain
$\hSigk{s}\in\mathbb{S}_{++}^n$. At time $k$, after applying $u_k$,
the next state $x_{k+1}$ is observed and the realized process residual
is computed from the nominal model as
\begin{equation}
  \hat w_k := x_{k+1}-f(x_k,u_k).
  \label{eq:residual}
\end{equation}
Thus, if $x_k\in C_s$, the matrix
$\hat w_k\hat w_k^\top$ provides one local second-moment sample for
cell $C_s$.

The covariance estimate is updated as
\begin{equation}
\boxed{
\begin{aligned}
  \hSigkp{s}
  &=
  \Pi_\Theta\Bigg(
  \hSigk{s}
  + \alpha^{(k)}
  \Bigg[
    \mathbf{1}[x_k\!\in\! C_s]
    \bigl(\hat w_k\hat w_k^\top-\hSigk{s}\bigr) \\
  &\qquad\qquad
    + \beta
    \sum_{s'\in\Ncal(s)}
    \kappa(s,s')
    \bigl(\hSigk{s'}-\hSigk{s}\bigr)
  \Bigg]\Bigg),
\end{aligned}
}
\label{eq:update}
\end{equation}
where $\kappa(s,s')$ is given by~\eqref{eq:kernel},
$\alpha^{(k)}=\alpha_0/(k+k_0)$ is a diminishing stochastic
approximation step size, $\beta>0$ is the spatial diffusion weight,
and $\Pi_\Theta$ is the projection of Assumption~\ref{ass:bounded}.

We choose $\alpha_0$ and $k_0$ so that
\begin{equation}
  \alpha^{(k)}\bigl(1+\beta d_\kappa\bigr)\le 1,
  \qquad \forall k\ge0.
  \label{eq:stepsize_condition}
\end{equation}
Condition~\eqref{eq:stepsize_condition} simply caps the total outgoing
weight applied to cell $s$ at each step: the own-cell update weight
$\alpha^{(k)}$ and the total diffusion outflow weight
$\alpha^{(k)}\beta d_\kappa$ together never exceed $1$. This is
exactly the requirement needed for the next sentence's
convex-combination argument: with the total outgoing weight bounded
by $1$, the remaining weight $1-\alpha^{(k)}(1+\beta d_\kappa)\ge0$
stays with the previous estimate $\hSigk{s}$, so the entire update is
a genuine convex combination rather than an extrapolation that could
leave $\mathbb{S}_{++}^n$.
Under~\eqref{eq:stepsize_condition}, the unprojected update is a
convex combination of the previous estimate in cell $s$, neighboring
cell estimates, and the observed second-moment sample, hence remains
in $\mathbb{S}_{++}^n$ whenever $\hSigk{s}\in\mathbb{S}_{++}^n$ for
all $s$; the subsequent projection $\Pi_\Theta$ additionally enforces
$\hSigkp{s}\in\Theta$ for all $k\ge0$, which is used below to
establish almost-sure boundedness of the estimation error.

\textbf{Interpretation.}
The estimator does not assume prior knowledge of
$\Sigw^{(k)}(s)$. Instead, whenever the closed-loop trajectory visits
cell $C_s$, the realized residual~\eqref{eq:residual} provides a
sample of the local disturbance second moment. Under the conditional
moment model,
\[
  \mathbb{E}\!\left[
    \hat w_k\hat w_k^\top \mid x_k=x
  \right]
  =
  \Sigw^{(k)}(x).
\]
Therefore, when $x_k\in C_s$, the matrix
$\hat w_k\hat w_k^\top$ is an empirical sample of the covariance in
that cell, up to the cell-discretization error controlled by
Assumption~\ref{ass:smooth}. The first term in~\eqref{eq:update} is
therefore a recursive local covariance update. The second term diffuses
covariance information across neighboring cells and is designed to be
dissipative in the $p_s$-weighted Frobenius norm. Both terms are
multiplied by the same diminishing step size $\alpha^{(k)}$, so the
expected estimator dynamics have a well-defined stochastic
approximation structure. The projection $\Pi_\Theta$ is a standard
device in stochastic approximation (see, e.g.,~\cite{kushneryin}) and
reflects the physical fact that any real disturbance covariance is
bounded; it does not affect the location of the fixed point analyzed
below, since by Assumption~\ref{ass:bounded} the true field
$\Sigw^{(k)}(s)$ already lies in $\Theta$, and (as shown in
Corollary~\ref{cor:fixedpoint_bounded}) the smoothed fixed point lies
in $\Theta$ as well.

\subsubsection{Fixed-Point Analysis}

The mean-field dynamics of the estimator~\eqref{eq:update} have a
unique fixed point $\{\Sigs\}_{s=1}^S$ toward which the estimates
converge as data accumulate. Because the diffusion term continuously
smooths covariance information across neighboring cells, this fixed
point is not the true field $\Sigw(s)$ but a spatially smoothed
version of it. The central result of this subsection is a closed-form
bound on the bias $\normF{\Sigs - \Sigw(s)}$, which will serve as the
irreducible estimation floor in the convergence theorem and the payoff
condition.

Define the diffusion operator
\begin{equation}\label{eqn:diffusion_op}
  (\Lcal_\kappa X)_s
  :=
  \sum_{s'\in\Ncal(s)}
  \kappa(s,s')\bigl(X_s-X_{s'}\bigr),  
\end{equation}
for any cell-indexed matrix field
$X=\{X_s\}_{s=1}^S$. Also define the maximum neighbor distance
\[
  r_{\Ncal}
  :=
  \max_{\substack{1\le s\le S\\ s'\in\Ncal(s)}} d(s,s').
\]

% ------------------------------------------------------------------
\begin{lemma}[Fixed Point of the Smoothed Estimator]
\label{lem:fixedpoint}
Under Assumptions~\ref{ass:smooth} and \ref{ass:geometric}, and the
kernel design requirement~\ref{req:kernel}, the mean-field form of
the estimator update~\eqref{eq:update} has a unique fixed point
$\{\Sigs\}_{s=1}^S$ satisfying
\begin{equation}
  p_s\bigl(\Sigs-\Sigw(s)\bigr)
  +\beta
  \sum_{s'\in\Ncal(s)}
  \kappa(s,s')
  \bigl(\Sigs-\Sigma_{s'}^*\bigr)
  =0,
  \qquad \forall s.
  \label{eq:fixedpoint}
\end{equation}
Equivalently,
\[
  p_s\bigl(\Sigs-\Sigw(s)\bigr)
  +\beta(\Lcal_\kappa\Sigma^*)_s
  =0.
\]
Moreover, the fixed point satisfies the bias bound
\begin{equation}
  \max_{1\le s\le S}
  \normF{\Sigs-\Sigw(s)}
  \le
  \frac{\beta d_\kappa}{p_{\min}}\,L_\sigma r_{\Ncal},
  \label{eq:bias}
\end{equation}
where
\[
  d_\kappa :=
  \max_{1\le s\le S}
  \sum_{s'\in\Ncal(s)}\kappa(s,s').
\]
\end{lemma}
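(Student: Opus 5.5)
First I would derive the mean-field dynamics. Conditioning the update~\eqref{eq:update} on $\Fcal_k$ and replacing the visitation indicator with its stationary value $p_s$ (Assumption~\ref{ass:geometric}), and the residual outer product with $\Sigw(s)$ (conditional unbiasedness), the unprojected drift in cell $s$ becomes $-\bigl[p_s(X_s-\Sigw(s))+\beta(\Lcal_\kappa X)_s\bigr]$. A fixed point is a zero of this drift, which is exactly~\eqref{eq:fixedpoint}. For the time being I ignore the projection: once the fixed point is shown to lie in $\Theta$ (this is the content of the later corollary), $\Pi_\Theta$ leaves it unchanged.

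\textbf{Existence and uniqueness.} Stack the field as $X=\{X_s\}$ and write~\eqref{eq:fixedpoint} as $(P+\beta\Lcal_\kappa)X=P\Sigw$, where $P=\operatorname{diag}(p_s)$ acts cell-wise and is applied entrywise on the matrix components. By Requirement~\ref{req:kernel}, $P\Lcal_\kappa$ has off-diagonal entries $-a_{ss'}$ and diagonal entries $\sum_{s'}a_{ss'}$, so it is the symmetric graph Laplacian $L_a$ of the weights $a$, which is positive semidefinite. The system is therefore $(P+\beta P\Lcal_\kappa)X=P^2\Sigw$ after multiplying by $P$. Alternatively, I keep the original form and take the inner product $\langle X,\cdot\rangle$ with $X$ in the unweighted Frobenius sum: $\langle X,(P+\beta L_aP^{-1}\cdot P)\ldots\rangle$. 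The cleaner route is to note $P+\beta\Lcal_\kappa$ is, up to the invertible diagonal factor, equal to $P^{-1}(P^2+\beta L_a)$ in the stated form. Concretely, $p_sX_s+\beta\sum_{s'}\kappa(s,s')(X_s-X_{s'})$ equals $\bigl[(P+\beta P^{-1}L_a)X\bigr]_s$, and $P+\beta P^{-1}L_a=P^{-1}(P^2+\beta L_a)$. Here $P^2+\beta L_a$ is symmetric positive definite because $P^2\succ0$, which uses $p_{\min}>0$. Hence the operator is invertible and the solution $\Sigma^*$ exists and is unique. Applied componentwise on $\mathbb S^n$, this also shows $\Sigma^*_s$ is symmetric.

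\textbf{Bias bound.} Let $e_s:=\Sigs-\Sigw(s)$. Substituting into~\eqref{eq:fixedpoint} gives
\[
p_se_s+\beta\sum_{s'}\kappa(s,s')(e_s-e_{s'})=-\beta\sum_{s'}\kappa(s,s')\bigl(\Sigw(s)-\Sigw(s')\bigr)=:b_s .
\]
By Assumption~\ref{ass:smooth}, $\normF{b_s}\le\beta d_\kappa L_\sigma r_{\Ncal}$. I then use a maximum-principle argument. Pick $s^\star$ maximizing $\normF{e_s}$ and take the Frobenius inner product of the equation at $s^\star$ with $e_{s^\star}$. By Cauchy–Schwarz and maximality, $\langle e_{s^\star}-e_{s'},e_{s^\star}\rangle\ge\normF{e_{s^\star}}^2-\normF{e_{s'}}\normF{e_{s^\star}}\ge0$, and the kernel is nonnegative. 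It follows that $p_{s^\star}\normF{e_{s^\star}}^2\le\normF{b_{s^\star}}\normF{e_{s^\star}}$, so
\[
\max_s\normF{e_s}\le\normF{b_{s^\star}}/p_{s^\star}\le\beta d_\kappa L_\sigma r_{\Ncal}/p_{\min},
\]
which is~\eqref{eq:bias}.

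The main obstacle is making the mean-field reduction honest rather than the algebra. The time-varying field must be frozen at a given $k$ (the fixed point is defined per time instant, with drift handled later). The cell-discretization identification $\E[\hat w\hat w^\top\mid x_k\in C_s]=\Sigw(s)$ must be justified, which Assumption~\ref{ass:noise}(i) supplies directly. The projection must also be shown inactive at $\Sigma^*$. For that last point, I would note that the equation at each $s$ expresses $\Sigs$ as a convex combination $\bigl(p_s\Sigw(s)+\beta\sum\kappa\Sigma^*_{s'}\bigr)/(p_s+\beta\sum\kappa)$. Since $\Theta$ is convex, an iteration of this map starting in $\Theta^S$ contracts (in the max norm with factor $\max_s\beta\sum\kappa/(p_s+\beta\sum\kappa)<1$) and stays in $\Theta^S$. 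Its limit is the unique fixed point, which therefore lies in $\Theta^S$. This contraction argument also gives an alternative uniqueness proof if the Laplacian argument proves awkward.
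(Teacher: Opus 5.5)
Your proposal is correct and is essentially the paper's argument. Invertibility of $P^{-1}(P^2+\beta L_a)$, with $L_a=P\Lcal_\kappa$ the symmetric weighted Laplacian, is the operator form of the paper's $p_s$-weighted inner-product and Dirichlet-form computation. Likewise, your inner-product maximum principle at the maximizing cell gives the same bound as the paper's norm-and-rearrange step on $(p_s+\beta d_s)B_s$.

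One slip: multiplying $(P+\beta\Lcal_\kappa)X=P\Sigw$ by $P$ gives $(P^2+\beta L_a)X=P^2\Sigw$, not $(P+\beta P\Lcal_\kappa)X=P^2\Sigw$, though your subsequent factorization is right. Your max-norm contraction argument for $\Sigma^*\in\Theta^S$ also justifies that claim more cleanly than the paper's Corollary~\ref{cor:fixedpoint_bounded}.
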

% ------------------------------------------------------------------
\begin{proof}
Taking conditional expectation on the estimator equation in~\eqref{eq:update} with the
covariance field frozen gives
\begin{align*}
  &\mathbb{E}[\hSigkp{s}-\hSigk{s}\mid \Fcal_k] \\
  &=
  \alpha^{(k)}
  \Bigg[
    p_s\bigl(\Sigw(s)-\hSigk{s}\bigr)
    +
    \beta
    \sum_{s'\in\Ncal(s)}
    \kappa(s,s')
    \bigl(\hSigk{s'}-\hSigk{s}\bigr)
  \Bigg].
\end{align*}
Therefore any fixed point satisfies~\eqref{eq:fixedpoint}.

\textbf{Uniqueness.}
Suppose $\Sigma^a$ and $\Sigma^b$ are two fixed
points and define $E_s:=\Sigma_s^a-\Sigma_s^b$. Subtracting the two
fixed-point equations, using~\eqref{eq:fixedpoint}, gives
\[
  p_s E_s + \beta(\Lcal_\kappa E)_s =0,
  \qquad \forall s.
\]
Taking the $p_s$-weighted Frobenius inner product with $E_s$ yields
\[
  0
  =
  \sum_{s=1}^S p_s^2\normF{E_s}^2
  +
  \beta
  \sum_{s=1}^S
  p_s
  \left\langle E_s,(\Lcal_\kappa E)_s\right\rangle_F .
\]

We now show that the second sum is nonnegative. Write
$w(s,s'):=p_s\kappa(s,s')$; by the detailed-balance condition in the kernel design
requirement~\ref{req:kernel}, $w(s,s')=w(s',s)$. Expanding the
definition of $\Lcal_\kappa$,
\begin{align*}
  &\sum_{s=1}^S p_s\langle E_s,(\Lcal_\kappa E)_s\rangle_F \\
  &\qquad=
  \sum_{s=1}^S\sum_{s'\in\Ncal(s)}
  w(s,s')\,\langle E_s,\,E_s-E_{s'}\rangle_F \\
  &\qquad=
  \sum_{s,s'} w(s,s')\normF{E_s}^2
  -\sum_{s,s'} w(s,s')\langle E_s,E_{s'}\rangle_F.
\end{align*}
On the other hand, the polarization identity
$\normF{E_s-E_{s'}}^2=\normF{E_s}^2+\normF{E_{s'}}^2
-2\langle E_s,E_{s'}\rangle_F$, multiplied by $w(s,s')$ and summed
over all ordered pairs $(s,s')$, gives
\begin{align*}
  &\sum_{s,s'} w(s,s')\normF{E_s-E_{s'}}^2 \\
  &\qquad=
  \sum_{s,s'} w(s,s')\normF{E_s}^2
  +\sum_{s,s'} w(s,s')\normF{E_{s'}}^2 \\
  &\qquad \qquad -2\sum_{s,s'} w(s,s')\langle E_s,E_{s'}\rangle_F.  
\end{align*}

Because $w(s,s')=w(s',s)$, relabeling indices in the second term on
the right gives
\begin{align*}
  &\sum_{s,s'} w(s,s')\normF{E_{s'}}^2 \\
  &\qquad=\sum_{s,s'} w(s',s)\normF{E_s}^2 =\sum_{s,s'} w(s,s')\normF{E_s}^2,  
\end{align*}
 so the two \emph{diagonal} sums coincide. Substituting back and dividing
by $2$,
\begin{align*}
  &\frac12\sum_{s,s'} w(s,s')\normF{E_s-E_{s'}}^2 \\
  & \qquad = \sum_{s,s'} w(s,s')\normF{E_s}^2 -\sum_{s,s'} w(s,s')\langle E_s,E_{s'}\rangle_F.  
\end{align*}
The right-hand side is exactly the expression obtained above for
$\sum_s p_s\langle E_s,(\Lcal_\kappa E)_s\rangle_F$. Hence
\begin{align*}
  &\sum_{s=1}^S
  p_s
  \left\langle E_s,(\Lcal_\kappa E)_s\right\rangle_F \\
  & \qquad = \frac{1}{2}
  \sum_{s=1}^S
  \sum_{s'\in\Ncal(s)}
  p_s\kappa(s,s')
  \normF{E_s-E_{s'}}^2 \ge 0,
\end{align*}
where the inequality holds because every summand is a nonnegative
weight times a squared norm.

Hence
\[
  \sum_{s=1}^S p_s^2\normF{E_s}^2 =0,
\]
and since $p_s>0$ for all $s$, we have $E_s=0$ for all $s$. Thus the
fixed point is unique. Existence follows because the corresponding
finite-dimensional linear operator is injective and therefore
invertible.

\textbf{Bias bound.} Let
\[
  B_s := \Sigs-\Sigw(s),
  \qquad
  B_{\max}:=\max_s\normF{B_s}.
\]
Using the fixed point equation~\eqref{eq:fixedpoint}, we obtain
\[
  (p_s+\beta d_s)B_s
  =
  \beta\!\sum_{s'\in\Ncal(s)}\!\kappa(s,s')
  \bigl[B_{s'} - (\Sigw(s)-\Sigw(s'))\bigr],
\]
where
\[
  d_s:=\sum_{s'\in\Ncal(s)}\kappa(s,s').
\]
Taking Frobenius norms and using Assumption~\ref{ass:smooth} gives
\[
  \normF{B_s}
  \le
  \frac{\beta d_s}{p_s+\beta d_s}B_{\max}
  +
  \frac{\beta d_s}{p_s+\beta d_s}
  L_\sigma r_{\Ncal}.
\]
Taking the maximum over $s$ and using
$d_s\le d_\kappa$ and $p_s\ge p_{\min}$ gives
\[
  B_{\max}
  \le
  \frac{\beta d_\kappa}{p_{\min}+\beta d_\kappa}B_{\max}
  +
  \frac{\beta d_\kappa}{p_{\min}+\beta d_\kappa}
  L_\sigma r_{\Ncal}.
\]
Rearranging yields
\[
  B_{\max}
  \le
  \frac{\beta d_\kappa}{p_{\min}}L_\sigma r_{\Ncal},
\]
which proves~\eqref{eq:bias}.
\end{proof}
% ------------------------------------------------------------------
\begin{remark}
The bias bound increases with the diffusion strength $\beta$, the
maximum diffusion degree $d_\kappa$, and the spatial variation
$L_\sigma r_{\Ncal}$. It also worsens when $p_{\min}$ is small,
reflecting the difficulty of estimating covariance in rarely visited
regions. When $L_\sigma=0$, the covariance field is spatially uniform
and the smoothing bias vanishes.
\end{remark}

% ------------------------------------------------------------------
\begin{corollary}[Fixed Point Lies in $\Theta$]
\label{cor:fixedpoint_bounded}
Under Assumption~\ref{ass:bounded}, the fixed point $\{\Sigs\}_{s=1}^S$
of Lemma~\ref{lem:fixedpoint} satisfies $\Sigs\in\Theta$ for all $s$.
\end{corollary}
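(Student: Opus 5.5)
Rewrite the fixed-point equation~\eqref{eq:fixedpoint} cell by cell as a convex combination. For each $s$, with $d_s:=\sum_{s'\in\Ncal(s)}\kappa(s,s')$, it reads
\[
  \Sigs=\frac{p_s}{p_s+\beta d_s}\Sigw(s)+\sum_{s'\in\Ncal(s)}\frac{\beta\kappa(s,s')}{p_s+\beta d_s}\Sigma_{s'}^*,
\]
with nonnegative weights summing to one. Stacking over $s$, this says $\Sigma^*=P\Sigma^*+Q\,\Sigw$, where $P$ is a substochastic matrix (entries $\beta\kappa(s,s')/(p_s+\beta d_s)$) acting cell-wise. Its row sums are $\beta d_s/(p_s+\beta d_s)\le \beta d_\kappa/(p_{\min}+\beta d_\kappa)<1$, and $Q$ is diagonal with entries $p_s/(p_s+\beta d_s)$. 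So $I-P$ is invertible with Neumann series $\sum_{j\ge0}P^j$. Uniqueness from Lemma~\ref{lem:fixedpoint} guarantees that this solution is the fixed point.

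Expand $\Sigma^*=\sum_{j\ge0}P^jQ\,\Sigw$. Every entry of $P^jQ$ is nonnegative. Because the row sums of $P$ plus the diagonal of $Q$ equal one, each row of $(I-P)^{-1}Q$ sums to exactly one. One checks this by applying the fixed-point identity to the constant field $X_s\equiv I$: it yields $X=PX+Q\mathbf{1}$, so $(I-P)^{-1}Q\mathbf{1}=\mathbf{1}$. Hence each $\Sigs$ is a convex combination $\sum_{s'}\lambda_{ss'}\Sigw(s')$ with $\lambda_{ss'}\ge0$ and $\sum_{s'}\lambda_{ss'}=1$.

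Finally, $\Theta$ is convex: the conditions $\mu_{\mathrm{lo}}I\preceq\Sigma$ and $\Sigma\preceq\mu_{\mathrm{hi}}I$ are each preserved under convex combinations, since $v^\top\Sigma v$ is linear in $\Sigma$. By Assumption~\ref{ass:bounded}, every $\Sigw(s')$ lies in $\Theta$, so $\Sigs\in\Theta$.

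The main obstacle is justifying the convex-combination representation cleanly, i.e.\ nonnegativity together with unit row sums of $(I-P)^{-1}Q$. Nonnegativity follows from the Neumann series because $\kappa\ge0$ and $p_s>0$.

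An alternative route is a maximum-principle argument, which I would use as a check. Take a unit vector $v$ and the cell $s$ maximizing $v^\top\Sigs v$. The fixed-point equation gives $p_s\bigl(v^\top\Sigs v-v^\top\Sigw(s)v\bigr)=\beta\sum_{s'}\kappa(s,s')\bigl(v^\top\Sigma^*_{s'}v-v^\top\Sigs v\bigr)\le0$. Therefore $\max_{s'} v^\top\Sigma^*_{s'}v\le v^\top\Sigw(s)v\le\mu_{\mathrm{hi}}$, and symmetrically the minimum is at least $\mu_{\mathrm{lo}}$. Applying this for every unit $v$ gives the two Loewner bounds directly.
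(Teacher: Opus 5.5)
Your proof is correct, and it closes the argument differently from the paper. Both start from the same cell-wise rewriting
$\Sigs=\frac{p_s}{p_s+\beta d_s}\Sigw(s)+\sum_{s'\in\Ncal(s)}\frac{\beta\kappa(s,s')}{p_s+\beta d_s}\Sigma^*_{s'}$.

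The paper then argues by invariance. The map $X\mapsto PX+Q\Sigw$ sends $\Theta^S$ into itself, since each output is a convex combination of points of $\Theta$. It therefore has a fixed point in $\Theta^S$, and by the uniqueness part of Lemma~\ref{lem:fixedpoint} that fixed point must be $\Sigma^*$. The paper does not name the theorem that gives existence in $\Theta^S$. Your row-sum bound $\beta d_s/(p_s+\beta d_s)\le\beta d_\kappa/(p_{\min}+\beta d_\kappa)<1$ is exactly the contraction modulus, in the max-over-cells Frobenius norm, that would supply it via Banach; Brouwer on the compact convex set $\Theta^S$ would also work.

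You instead solve the system explicitly. The Neumann series together with $(I-P)\mathbf{1}=Q\mathbf{1}$ shows that $(I-P)^{-1}Q$ is nonnegative and row-stochastic. Hence each $\Sigs$ is a convex combination $\sum_{s'}\lambda_{ss'}\Sigw(s')$ of true-field values. This buys two things. Existence and uniqueness become self-contained, so invertibility of $I-P$ makes the appeal to the lemma unnecessary. You also get an explicit averaging representation of the smoothed field, which the invariance argument yields only implicitly. The cost is a few more lines.

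Your maximum-principle check is a valid third route. It is essentially the argument of Corollary~\ref{cor:nonexpansive}, applied to the quadratic forms $v^\top\Sigs v$. Note that it only controls the symmetric part of $\Sigs$, so symmetry must be obtained separately. For example, the transpose solves the same linear system because the data $\Sigw(s)$ are symmetric, so uniqueness forces $\Sigs=(\Sigs)^\top$. Your main route gives symmetry for free.
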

\begin{proof}
The fixed-point equation~\eqref{eq:fixedpoint} can be rewritten,
using $d_s:=\sum_{s'\in\Ncal(s)}\kappa(s,s')$, as
\[
  \Sigs
  = \frac{p_s}{p_s+\beta d_s}\,\Sigw(s)
  + \sum_{s'\in\Ncal(s)}
  \frac{\beta\kappa(s,s')}{p_s+\beta d_s}\,\Sigma_{s'}^*,
\]
which expresses $\Sigs$ as a convex combination (the coefficients are
nonnegative and sum to $1$) of $\Sigw(s)\in\Theta$ and the neighboring
fixed-point values $\{\Sigma_{s'}^*\}_{s'\in\Ncal(s)}$. Since $\Theta$
is convex and $\Sigw(s)\in\Theta$ for all $s$ by
Assumption~\ref{ass:bounded}, the unique fixed point of this
convex-combination system (existence and uniqueness from
Lemma~\ref{lem:fixedpoint}) must itself lie in $\Theta$: $\Theta$ is
closed under all convex combinations appearing in the system, so the
fixed point of the induced linear map restricted to $\Theta^S$ exists
in $\Theta^S$, and by uniqueness this is $\{\Sigs\}_{s=1}^S$.
\end{proof}
% ------------------------------------------------------------------

% ------------------------------------------------------------------
% ==================================================================
\subsection{Estimator Convergence}
\label{sec:convergence}
% ==================================================================

Define the error relative to the smoothed fixed point
$e_s^{(k)} = \hSigk{s} - \Sigs$, and the zero-mean data noise
$\xi_s^{(k)} = \hat w_k\hat w_k^\top - \Sigw^{(k)}(s)$. By
Assumption~\ref{ass:noise}, conditional on $\Fcal_k$ and $x_k\in C_s$,
\[
  \E\big[\xi_s^{(k)} \mid \Fcal_k\big] = 0,
  \qquad
  \E\big[\normF{\xi_s^{(k)}}^2 \,\big|\, \Fcal_k\big] \le \sigma_\xi^2.
\]

Define the $p_s$-weighted candidate Lyapunov function:
\begin{equation}
  V_e^{(k)} = \sum_{s=1}^S p_s\normF{e_s^{(k)}}^2.
  \label{eq:lyapunov}
\end{equation}
The key analytical step, carried out in the proof of
Theorem~\ref{thm:convergence} below, is showing that the diffusion
operator is dissipative with respect to $V_e^{(k)}$ in the
$p_s$-weighted inner product, so that it contributes a non-positive
term to the decrease of $V_e^{(k)}$ and does not interfere with the
SA convergence argument.

% ------------------------------------------------------------------
\begin{lemma}[Dissipativity of the Diffusion Operator]
\label{lem:dissipative}
Under kernel~\eqref{eq:kernel} and the kernel design
requirement~\ref{req:kernel}, the diffusion operator $\Lcal_\kappa$ defined in~\eqref{eqn:diffusion_op}
satisfies, for any collection $\{e_s\}$,
\begin{equation}
  \sum_{s=1}^S p_s
  \left\langle e_s^{(k)},\,(\Lcal_\kappa\,e^{(k)})_s
  \right\rangle_{\!F}
  \;\geq\; 0,
  \label{eq:dissipative}
\end{equation}
with equality if and only if $e_s^{(k)}$ is constant on each connected
component of the cell-neighbor graph (in particular, constant across
all cells if the graph is connected).
\end{lemma}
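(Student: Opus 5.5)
The plan is to reuse the Dirichlet-form identity already derived in the uniqueness part of the proof of Lemma~\ref{lem:fixedpoint}. That argument never used the fixed-point structure of $E$, so it applies verbatim to an arbitrary collection $\{e_s\}$.

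First, set $w(s,s'):=p_s\kappa(s,s')=a_{ss'}$. By the kernel design requirement~\ref{req:kernel} and~\eqref{eq:kernel}, $w(s,s')=w(s',s)\ge 0$, and $w(s,s')=0$ unless $s'\in\Ncal(s)$. Expanding~\eqref{eqn:diffusion_op} gives
\[
\sum_s p_s\langle e_s,(\Lcal_\kappa e)_s\rangle_F=\sum_{s,s'}w(s,s')\normF{e_s}^2-\sum_{s,s'}w(s,s')\langle e_s,e_{s'}\rangle_F .
\]
Next, expand $\normF{e_s-e_{s'}}^2$ by polarization, sum it against $w(s,s')$, and relabel indices using the symmetry of $w$ so that the two diagonal sums coincide. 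This yields the identity
\[
\sum_s p_s\langle e_s,(\Lcal_\kappa e)_s\rangle_F=\tfrac12\sum_{s}\sum_{s'\in\Ncal(s)}a_{ss'}\normF{e_s-e_{s'}}^2 .
\]
The right-hand side is a sum of nonnegative terms, which proves~\eqref{eq:dissipative}.

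For the equality characterization, the quadratic form vanishes if and only if $e_s=e_{s'}$ for every edge with $a_{ss'}>0$. One subtlety needs settling here: the cell-neighbor graph should be understood as the graph whose edges carry positive weight. Equivalently, I would state the convention that $a_{ss'}>0$ for all $s'\in\Ncal(s)$, as in the default and Gaussian weight choices with $a_0>0$; zero-weight "neighbors" would otherwise break the ``only if'' direction. This convention is the main point requiring care, since the inequality itself is routine.

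Under that convention, the ``if'' direction is immediate: if $e$ is constant on each connected component, every edge term vanishes. For the ``only if'' direction, zero total means $e_s=e_{s'}$ across each edge. Chaining this along a path shows that $e$ is constant on each connected component, and hence constant on all cells when the graph is connected.
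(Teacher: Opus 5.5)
Your proposal is correct and follows essentially the paper's route: both reduce the weighted sum to the Dirichlet form $\tfrac12\sum_s\sum_{s'\in\Ncal(s)}a_{ss'}\normF{e_s-e_{s'}}^2$ via the symmetry $a_{ss'}=a_{s's}$. The paper gets there by pairing the two directed copies of each edge, and you get there by the polarization-and-relabeling computation already used in the uniqueness part of Lemma~\ref{lem:fixedpoint}. Your explicit convention that every listed neighbor carries positive weight is a worthwhile sharpening of the equality case, which the paper handles only implicitly by reading ``edge'' as ``edge with $a_{ss'}>0$''; your ordered-pair normalization of the factor $\tfrac12$ is also the right one, since for the sum over unordered edges that the paper writes, the $\tfrac12$ should be dropped.
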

% ------------------------------------------------------------------
\begin{proof}
Substituting $\kappa(s,s') = a_{ss'}/p_s$ from~\eqref{eq:kernel}:
\begin{align*}
  \sum_s p_s \langle e_s,(\Lcal_\kappa e)_s\rangle_F
  &= \sum_s \sum_{s'\in\Ncal(s)} p_s\kappa(s,s')
     \langle e_s, e_s-e_{s'}\rangle_F \\
  &= \sum_s \sum_{s'\in\Ncal(s)} a_{ss'}
     \langle e_s, e_s-e_{s'}\rangle_F.
\end{align*}

For each undirected edge $\{s,s'\}$, collect the contributions from
both directed edges $(s,s')$ and $(s',s)$, using $a_{ss'}=a_{s's}$:
\begin{align*}
  &a_{ss'}\langle e_s, e_s-e_{s'}\rangle_F
  + a_{ss'}\langle e_{s'}, e_{s'}-e_s\rangle_F \\
  &= a_{ss'}\langle e_s - e_{s'},\, e_s-e_{s'}\rangle_F
  = a_{ss'}\normF{e_s-e_{s'}}^2 \geq 0.
\end{align*}
Summing over all edges gives
\[
  \sum_s p_s\langle e_s,(\Lcal_\kappa e)_s\rangle_F
  = \frac{1}{2}
  \sum_{\{s,s'\}\in E} a_{ss'}\normF{e_s-e_{s'}}^2
  \geq 0.
\]
Equality holds if and only if $\normF{e_s-e_{s'}}^2=0$ for every edge
$\{s,s'\}$ with $a_{ss'}>0$, i.e., $e_s$ is constant on each connected
component of the cell-neighbor graph. If the graph is connected, this
means $e_s$ is constant across all cells.
\end{proof}
% ------------------------------------------------------------------

% ------------------------------------------------------------------
\begin{corollary}[Non-Expansiveness of the Fixed-Point Map]
\label{cor:nonexpansive}
Let $\Sigma^{*}=\{\Sigma_s^*\}_{s=1}^S$ and $\Sigma^{*\prime}=\{\Sigma_s^{*\prime}\}_{s=1}^S$
denote the fixed points of~\eqref{eq:fixedpoint} corresponding to two
disturbance fields $\Sigw(\cdot)$ and $\Sigw'(\cdot)$, respectively.
Then
\begin{equation}
  \max_{1\le s\le S}\normF{\Sigma_s^*-\Sigma_s^{*\prime}}
  \;\le\;
  \max_{1\le s\le S}\normF{\Sigw(s)-\Sigw'(s)}.
  \label{eq:nonexpansive}
\end{equation}
\end{corollary}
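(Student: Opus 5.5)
The approach is to subtract the two fixed-point equations and run a maximum-principle argument, the same one used for the bias bound in Lemma~\ref{lem:fixedpoint}, but now with no Lipschitz forcing term. Define $D_s := \Sigma_s^*-\Sigma_s^{*\prime}$ and $\Delta_s := \Sigw(s)-\Sigw'(s)$. Existence and uniqueness of both fixed points come from Lemma~\ref{lem:fixedpoint}, and the kernel $\kappa$ and weights $p_s$ are the same for both fields. Subtracting the two instances of~\eqref{eq:fixedpoint} cell by cell therefore gives the linear system
\[
  p_s\,(D_s-\Delta_s) + \beta\sum_{s'\in\Ncal(s)}\kappa(s,s')\,(D_s-D_{s'}) = 0,
  \qquad \forall s .
\]

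Next, with $d_s:=\sum_{s'\in\Ncal(s)}\kappa(s,s')$, rearrange this exactly as in the proof of Corollary~\ref{cor:fixedpoint_bounded}:
\[
  D_s = \frac{p_s}{p_s+\beta d_s}\,\Delta_s
  + \sum_{s'\in\Ncal(s)}\frac{\beta\kappa(s,s')}{p_s+\beta d_s}\,D_{s'} .
\]
The coefficients are nonnegative and sum to one, so $D_s$ is a convex combination of $\Delta_s$ and the neighboring values $D_{s'}$. Let $D_{\max}:=\max_s\normF{D_s}$ and $\Delta_{\max}:=\max_s\normF{\Delta_s}$. The triangle inequality then gives
\[
  \normF{D_s}\le \frac{p_s}{p_s+\beta d_s}\,\Delta_{\max}+\frac{\beta d_s}{p_s+\beta d_s}\,D_{\max}.
\]

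The one step needing care is how to close this bound. Naively taking the max over $s$ and bounding the coefficients by $p_{\min}$ and $d_\kappa$ would not give constant one. Instead, evaluate the inequality at a maximizing index $s^\star$, where $\normF{D_{s^\star}}=D_{\max}$. This gives $p_{s^\star}D_{\max}\le p_{s^\star}\Delta_{\max}$. Since $p_{s^\star}>0$ by Assumption~\ref{ass:geometric}, dividing through yields $D_{\max}\le\Delta_{\max}$, which is~\eqref{eq:nonexpansive}.

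No constant depending on $\beta$, $d_\kappa$ or $p_{\min}$ survives. This is because the convex weight of the forcing term is strictly positive at every cell, so the maximum principle closes with constant one.
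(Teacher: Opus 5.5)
Your proof is correct and matches the paper's own argument. The paper likewise subtracts the two instances of~\eqref{eq:fixedpoint} and rewrites the result as $(p_s+\beta d_s)D_s = p_s\Delta_s + \beta\sum_{s'\in\Ncal(s)}\kappa(s,s')D_{s'}$, takes Frobenius norms at a cell $s^\star$ attaining $D_{\max}$ so that the $\beta d_{s^\star}D_{\max}$ terms cancel, and divides by $p_{s^\star}>0$.
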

% ------------------------------------------------------------------
\begin{proof}
Let $D_s:=\Sigma_s^*-\Sigma_s^{*\prime}$ and $F_s:=\Sigw(s)-\Sigw'(s)$.
Subtracting the two instances of the fixed-point equation~\eqref{eq:fixedpoint}
gives $p_s D_s + \beta(\Lcal_\kappa D)_s = p_s F_s$ for all $s$. Writing
$(\Lcal_\kappa D)_s = d_s D_s - \sum_{s'\in\Ncal(s)}\kappa(s,s')D_{s'}$
with $d_s:=\sum_{s'\in\Ncal(s)}\kappa(s,s')$, this rearranges to
\[
  (p_s+\beta d_s)D_s
  = p_s F_s
  + \beta\sum_{s'\in\Ncal(s)}\kappa(s,s')D_{s'}.
\]
Let $D_{\max}:=\max_s\normF{D_s}$ and $F_{\max}:=\max_s\normF{F_s}$, and let
$s^*$ attain $D_{\max}$. Taking Frobenius norms at $s=s^*$ and bounding
$\normF{D_{s'}}\le D_{\max}$ for all $s'\in\Ncal(s^*)$:
\[
  (p_{s^*}+\beta d_{s^*})D_{\max}
  \le p_{s^*}F_{\max} + \beta d_{s^*}D_{\max},
\]
so $p_{s^*}D_{\max}\le p_{s^*}F_{\max}$, and since $p_{s^*}>0$,
$D_{\max}\le F_{\max}$, which is~\eqref{eq:nonexpansive}.
\end{proof}
% ------------------------------------------------------------------

\begin{lemma}[Decorrelation via Geometric Ergodicity]
\label{lem:decorrelation}
Let $\{Z^{(k)}\}_{k\ge0}$ be a real-valued sequence adapted to
$\{\Fcal_k\}$, satisfying, for constants $\bar Z,L_Z<\infty$:
\begin{enumerate}
  \item[(i)] $|Z^{(k)}|\le\bar Z$ a.s.\ for all $k$;
  \item[(ii)] $\E[|Z^{(j+1)}-Z^{(j)}|]\le\alpha^{(j)}L_Z$ for all
  $j\ge0$.
\end{enumerate}
Then under Assumption~\ref{ass:geometric}, for every cell $s$, every
$k\ge1$, and every integer $\tau$ with $1\le\tau\le k$,
\begin{equation}
  \bigl|\E\bigl[(\mathbf{1}_s^{(k)}-p_s)\,Z^{(k)}\bigr]\bigr|
  \;\le\;
  C_{\mathrm{erg}}\,\bar Z\,\rho^\tau
  \;+\;
  \tau\,\alpha^{(k-\tau)}\,L_Z.
  \label{eq:decorr}
\end{equation}
\end{lemma}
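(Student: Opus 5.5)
The plan is the standard ``look back $\tau$ steps'' decoupling argument. Write $\mathbf{1}_s^{(k)}=\mathbf{1}[x_k\in C_s]$ and split
\[
  \E\bigl[(\mathbf{1}_s^{(k)}-p_s)Z^{(k)}\bigr]
  =
  \E\bigl[(\mathbf{1}_s^{(k)}-p_s)Z^{(k-\tau)}\bigr]
  +
  \E\bigl[(\mathbf{1}_s^{(k)}-p_s)\bigl(Z^{(k)}-Z^{(k-\tau)}\bigr)\bigr].
\]
The first term is a mixing term, where $Z$ is frozen at a time $\tau$ steps earlier. The second is a drift term, controlled by the slow variation (ii).

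\textbf{Drift term.} Since $|\mathbf{1}_s^{(k)}-p_s|\le1$, the second term is bounded by $\E|Z^{(k)}-Z^{(k-\tau)}|$. Telescoping gives at most $\sum_{j=k-\tau}^{k-1}\E|Z^{(j+1)}-Z^{(j)}|\le L_Z\sum_{j=k-\tau}^{k-1}\alpha^{(j)}$. Because $\alpha^{(j)}=\alpha_0/(j+k_0)$ is nonincreasing, each summand is at most $\alpha^{(k-\tau)}$. This yields $\tau\,\alpha^{(k-\tau)}L_Z$. The constraint $\tau\le k$ is exactly what keeps the index $k-\tau\ge0$.

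\textbf{Mixing term.} Condition on $\Fcal_{k-\tau}$. Since $Z^{(k-\tau)}$ is $\Fcal_{k-\tau}$-measurable,
\[
  \E\bigl[(\mathbf{1}_s^{(k)}-p_s)Z^{(k-\tau)}\bigr]
  =
  \E\Bigl[Z^{(k-\tau)}\bigl(\Pr[x_k\in C_s\mid\Fcal_{k-\tau}]-p_s\bigr)\Bigr].
\]
By the Markov property of the closed loop, restarted at $x_{k-\tau}\in\Xcal$, the conditional probability is a $\tau$-step visitation probability from $x_{k-\tau}$. Assumption~\ref{ass:geometric} then bounds $|\Pr[x_k\in C_s\mid\Fcal_{k-\tau}]-p_s|\le C_{\mathrm{erg}}\rho^\tau$ uniformly. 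Combined with $|Z^{(k-\tau)}|\le\bar Z$ from (i), the mixing term is at most $C_{\mathrm{erg}}\bar Z\rho^\tau$. Adding the two bounds gives~\eqref{eq:decorr}.

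\textbf{Main obstacle.} The hard step is justifying this Markov restart. Conditional on $\Fcal_{k-\tau}$, the state alone is not Markov, because the controller depends on the estimator state $\hat\Sigma^{(j)}$ and the MPPI samples, and $\hat\Sigma^{(j)}$ keeps changing during the window. I would handle this through the uniformity clause of Assumption~\ref{ass:geometric}. The geometric bound is assumed to hold uniformly over closed-loop policies induced by any admissible $\hat\Sigma\in\Theta^S$, and it is invoked only over windows of length $\tau$. So I would read the assumption as directly providing the conditional $\tau$-step bound given $\Fcal_{k-\tau}$, for the (slowly varying) policy in force on the window.

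If a more careful argument is needed, the fallback is a coupling with the frozen-policy chain. This introduces an additional $O(\tau\alpha^{(k-\tau)})$ discrepancy, which would be absorbed into the drift term, possibly after enlarging $L_Z$. I would state explicitly that the lemma uses the assumption in this conditional, uniform-over-$\Theta^S$ form. Finally, I would note that $\rho^\tau$ versus $\rho^{\tau}$-with-offset indexing is immaterial to the bound.
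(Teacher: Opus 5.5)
Your proposal is correct and follows the paper's proof essentially step for step. It uses the same split at time $k-\tau$, bounds the mixing term by $C_{\mathrm{erg}}\bar Z\rho^\tau$ via Assumption~\ref{ass:geometric} restarted at $x_{k-\tau}$, and bounds the drift term by telescoping hypothesis (ii) together with the monotonicity of $\alpha^{(\cdot)}$. The only difference is that the paper simply asserts $\{x_k\}$ is Markov and conditions on $x_{k-\tau}$, whereas you make explicit that this restart rests on the uniform-over-$\Theta^S$ clause of Assumption~\ref{ass:geometric}; your coupling fallback would change the constant rather than give the bound exactly as stated.
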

\begin{proof}
Write
\[
\begin{aligned}
  \E[(\mathbf 1_s^{(k)}-p_s)Z^{(k)}]
  &= \E[(\mathbf 1_s^{(k)}-p_s)Z^{(k-\tau)}] \\
  &\quad + \E[(\mathbf 1_s^{(k)}-p_s)(Z^{(k)}-Z^{(k-\tau)})].  
\end{aligned}
\]
\emph{First term.} Since $Z^{(k-\tau)}$ is $\Fcal_{k-\tau}$-measurable
and $\{x_k\}$ is Markov,
\[
  \E[(\mathbf 1_s^{(k)}-p_s)Z^{(k-\tau)}]
  = \E\bigl[Z^{(k-\tau)}\bigl(\Pr[x_k\in C_s\mid x_{k-\tau}]-p_s\bigr)\bigr].
\]
By Assumption~\ref{ass:geometric}, applied with $x_{k-\tau}$ in the
role of the initial condition and horizon $\tau$,
$|\Pr[x_k\in C_s\mid x_{k-\tau}=x]-p_s|\le C_{\mathrm{erg}}\rho^\tau$
for every $x\in\Xcal$; combined with $|Z^{(k-\tau)}|\le\bar Z$ a.s.,
the first term is bounded in absolute value by
$C_{\mathrm{erg}}\bar Z\rho^\tau$.

\emph{Second term.} By the triangle inequality over $\tau$ one-step
increments and hypothesis~(ii),
\[
  \E[|Z^{(k)}-Z^{(k-\tau)}|]
  \le \sum_{j=k-\tau}^{k-1}\E[|Z^{(j+1)}-Z^{(j)}|]
  \le \tau\,\alpha^{(k-\tau)}L_Z,
\]
using that $\alpha^{(\cdot)}$ is nonincreasing. Since
$|\mathbf 1_s^{(k)}-p_s|\le1$, the second term is bounded in absolute
value by $\tau\alpha^{(k-\tau)}L_Z$. Combining the two bounds proves
\eqref{eq:decorr}.
\end{proof}

% ------------------------------------------------------------------
\begin{theorem}[Estimator Convergence, Finite Horizon]
\label{thm:convergence}
Fix a finite operating horizon $T\ge1$. Under
Assumptions~\ref{ass:slowvar}--\ref{ass:bounded}, with
kernel~\eqref{eq:kernel}, $\alpha^{(k)}=\alpha_0/(k+k_0)$,
$\alpha_0 > 1/(2p_{\min})$, and
\begin{equation}
  q := 2\alpha_0\beta d_\kappa < 1,
  \label{eq:qcondition}
\end{equation}
the estimator~\eqref{eq:update} satisfies, for all $s$ and all
$1\le k\le T$:
\begin{equation}
  \E\bigl[\normF{\hSigk{s}-\Sigw^{(k)}(s)}\bigr]
  \leq
  \underbrace{\dfrac{C_{1,T}}{\sqrt{k}}}_{\text{SA error}}
  +\underbrace{\dfrac{\beta\, d_\kappa\, L_\sigma r_{\Ncal}}
               {p_{\min}}}_{\text{smoothing bias}}
  +\underbrace{C_v\,\epsilon_v\,T}_{\text{accumulated drift}},
  \label{eq:convergencebound}
\end{equation}
where $C_{1,T}>0$ depends on $\alpha_0$, $\sigma_\xi$, $p_{\min}$,
$C_{\mathrm{erg}}$, $\rho$, $\mu_{\mathrm{hi}}-\mu_{\mathrm{lo}}$,
$\beta$, $d_\kappa$, $\normF{e_s^{(0)}}$, and -- as the subscript
indicates -- on the horizon $T$: mildly, through a factor
$\sqrt{1+\ln T}$, when $\epsilon_v=0$; additionally, through the
linear term $\epsilon_v T$ entering via $\bar E$ in~\eqref{eq:barE}
when $\epsilon_v>0$ -- see Remark~\ref{rem:logfactor} -- and $C_v>0$
depends on $\alpha_0$, $\beta$, $d_\kappa$, $k_0$, and $p_{\min}$ --
the last only through the generic per-cell conversion of Step~3
below, not through the drift dynamics itself (see
Remark~\ref{rem:pathwise_drift}).
\end{theorem}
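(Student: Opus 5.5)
The plan is to decompose the error at the reference cell by the triangle inequality,
\[
\normF{\hSigk{s}-\Sigw^{(k)}(s)} \le \normF{e_s^{(k)}} + \normF{\Sigma_s^{*,(k)}-\Sigw^{(k)}(s)} + \normF{\Sigma_s^{*}-\Sigma_s^{*,(k)}},
\]
where $\Sigma^{*,(k)}$ is the smoothed fixed point of Lemma~\ref{lem:fixedpoint} for the frozen field $\Sigw^{(k)}$, and $\Sigma^*$ is the fixed point against which $e^{(k)}$ is measured. I will take $\Sigma^*$ to be the one for the initial field $\Sigw^{(0)}$, or equivalently re-anchor at each $k$ and carry the drift into the recursion. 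The second term is bounded directly by the bias bound~\eqref{eq:bias}. The third term is controlled by Corollary~\ref{cor:nonexpansive} together with Assumption~\ref{ass:slowvar}: the fixed point moves by at most $\epsilon_v$ per step in max-Frobenius norm, so it moves by at most $\epsilon_v T$ over the horizon. This yields the $C_v\epsilon_v T$ term. The remaining work is to show $\E\normF{e_s^{(k)}}\le C_{1,T}/\sqrt{k}$.

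For the SA term I will use the Lyapunov function $V_e^{(k)}$ from~\eqref{eq:lyapunov}. The projection $\Pi_\Theta$ is nonexpansive, and $\Sigma_s^*\in\Theta$ by Corollary~\ref{cor:fixedpoint_bounded}, so $\normF{e_s^{(k+1)}}$ is at most the norm of the unprojected error. I will expand the square of the unprojected update. The linear cross term equals
\[
-2\alpha^{(k)}\sum_s p_s\big\langle e_s,\ \mathbf 1_s^{(k)} e_s + \beta(\Lcal_\kappa e)_s\big\rangle + 2\alpha^{(k)}\sum_s p_s\mathbf 1_s^{(k)}\langle e_s,\xi_s^{(k)}\rangle + \text{(fixed-point residual)}.
\]
Replacing $\mathbf 1_s^{(k)}$ by $p_s$ and using the fixed-point identity~\eqref{eq:fixedpoint}, the residual terms cancel. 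Lemma~\ref{lem:dissipative} then makes the diffusion contribution nonpositive, which leaves a contraction of the form $-2\alpha^{(k)}\sum_s p_s^2\normF{e_s}^2\le -2\alpha^{(k)}p_{\min}V_e^{(k)}$. The martingale term $\langle e_s,\xi_s\rangle$ has zero conditional mean by Assumption~\ref{ass:noise}. The quadratic term is $O((\alpha^{(k)})^2)$ with constant built from $\sigma_\xi^2$, $(\mu_{\mathrm{hi}}-\mu_{\mathrm{lo}})^2$ (since $e^{(k)}\in\Theta-\Sigma^*$ is a.s.\ bounded), and $(\beta d_\kappa)^2$. The condition $q<1$ is used here to absorb part of the quadratic diffusion term into the linear dissipation without losing the $p_{\min}$ contraction.

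The main obstacle is the mismatch term $2\alpha^{(k)}\E\big[\sum_s p_s(\mathbf 1_s^{(k)}-p_s)(\cdots)\big]$, because $\mathbf 1_s^{(k)}$ is correlated with the current estimate. I will handle it with Lemma~\ref{lem:decorrelation}, applied to $Z^{(k)}=\normF{e_s^{(k)}}^2$ and to the inner product with $\Sigw(s)-\Sigma_s^*$. Hypothesis (i) holds by boundedness on $\Theta$. Hypothesis (ii) holds because each step of~\eqref{eq:update} moves the estimate by $O(\alpha^{(k)})$ in mean; when $\epsilon_v>0$ the reference also moves, which is why the bound $\bar E$ picks up an $\epsilon_vT$ dependence. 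Choosing $\tau=\lceil \ln k/\ln(1/\rho)\rceil$ makes this term $O\big(\alpha^{(k)}(1/k+\alpha^{(k)}\ln k)\big)=O\big((\alpha^{(k)})^2\ln k\big)$. The resulting recursion is
\[
\E V_e^{(k+1)}\le\big(1-2\alpha_0p_{\min}/(k+k_0)\big)\E V_e^{(k)}+C(1+\ln T)/(k+k_0)^2.
\]
The standard Chung-type lemma, valid because $2\alpha_0p_{\min}>1$, gives $\E V_e^{(k)}\le C'(1+\ln T)/k$. Finally, Jensen's inequality together with $p_s\normF{e_s}^2\le V_e$ gives
\[
\E\normF{e_s^{(k)}}\le\sqrt{C'(1+\ln T)/(p_{\min}k)},
\]
which is the source of the $\sqrt{1+\ln T}$ factor and the per-cell $p_{\min}$ conversion noted in the statement.
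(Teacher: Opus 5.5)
Your argument is sound when $\epsilon_v=0$; it is then essentially the paper's Lyapunov--decorrelation--Chung argument. When $\epsilon_v>0$, however, the drift is never accounted for inside the recursion.

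The failing step is your claim that, after replacing $\mathbf 1_s^{(k)}$ by $p_s$, the fixed-point identity~\eqref{eq:fixedpoint} cancels the residual terms. That would leave a $V_e$ recursion whose only forcing is $C(1+\ln T)/(k+k_0)^2$. With your primary anchor, the fixed point $\Sigma^*$ of the initial field $\Sigw^{(0)}$, this is false. The sample $\hat w_k\hat w_k^\top$ has conditional mean $\Sigw^{(k)}(s)$, so after the replacement the averaged increment is $\alpha^{(k)}\bigl[-p_se_s^{(k)}-\beta(\Lcal_\kappa e^{(k)})_s+p_s\bigl(\Sigw^{(k)}(s)-\Sigw^{(0)}(s)\bigr)\bigr]$. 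The last piece has size up to $k\epsilon_v$. It feeds a cross term of order $\alpha^{(k)}E_{\max}k\epsilon_v\approx\alpha_0E_{\max}\epsilon_v$ into every step, where $E_{\max}$ is the a.s.\ bound on $\normF{e_s^{(k)}}$. That term is neither summable nor $O((\alpha^{(k)})^2)$.

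This is not mere bookkeeping. The estimator genuinely follows the moving target, so its distance to the initial fixed point typically grows like $\epsilon_v k$. Your intermediate goal $\E\normF{e_s^{(k)}}\le C_{1,T}/\sqrt{k}$ therefore cannot hold in general with the $T$-dependence the statement allows for $C_{1,T}$.

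Re-anchoring is not equivalent either. With $e_s^{(k)}=\hSigk{s}-\Sigma_s^{*(k)}$, the third term of your triangle inequality disappears. But the recursion acquires $-\Delta_s^{*(k)}=-(\Sigma_s^{*(k+1)}-\Sigma_s^{*(k)})$, which is \emph{not} multiplied by $\alpha^{(k)}$. This adds a term of order $E_{\max}\epsilon_v$ per step to $V_e$, chiefly $-2\sum_s p_s\langle e_s^{(k)},\Delta_s^{*(k)}\rangle_F$. It also makes $\normF{e_s^{(k)}}^2$ move by $O(\epsilon_v)$ rather than $O(\alpha^{(k)})$ per step, so hypothesis~(ii) of Lemma~\ref{lem:decorrelation} fails as stated. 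Your displayed recursion contains no drift forcing under either choice.

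The paper closes this gap by re-anchoring and splitting $e^{(k)}=\tilde e^{(k)}+\delta^{(k)}$ along the linear part of the recursion. The component $\delta^{(k)}$ is driven only by $-\Delta^{*(k)}$ and is bounded pathwise through $D^{(k+1)}\le(1+2\alpha^{(k)}\beta d_\kappa)D^{(k)}+\epsilon_v$. This is where $q<1$ is really used: it gives $D^{(k)}\le C_v''\epsilon_v k$. The Lyapunov, decorrelation and Chung argument you describe is then run on $\tilde e^{(k)}$, whose forcing is entirely $\alpha^{(k)}$-scaled. The constant $\bar E=E_{\max}+C_v''\epsilon_vT$ is just the almost-sure bound on $\tilde e=e-\delta$, not an effect of a moving reference in the decorrelation lemma. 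Your use of $q<1$ to absorb the quadratic diffusion term is unnecessary, since that term is already $O((\alpha^{(k)})^2)$ for bounded $e^{(k)}$.

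A repair closer to your own plan also works:
\begin{itemize}
\item Re-anchor, and use non-expansiveness of $\Pi_\Theta$ toward $\Sigma_s^{*(k+1)}\in\Theta$ plus Young's inequality to get $\normF{e_s^{(k+1)}}^2\le(1+\eta\alpha^{(k)})\normF{u_s}^2+(1+1/(\eta\alpha^{(k)}))\epsilon_v^2$. Here $u_s$ is the unprojected update measured from $\Sigma_s^{*(k)}$, and $0<\eta<2p_{\min}-1/\alpha_0$.
\item Restore hypothesis~(ii) via $\epsilon_v\le\alpha^{(j)}\epsilon_v(T+k_0)/\alpha_0$ for $j\le T$.
\item Unrolling then gives $\E V_e^{(k)}\lesssim(1+\epsilon_vT)(1+\ln T)/k+\epsilon_v^2(k+k_0)^2$.
\end{itemize}
This route avoids $q<1$. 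The cost is that $C_v$ then depends on $p_{\min}$ through the contraction margin $2\alpha_0p_{\min}-1$, not only through the per-cell conversion. Either way, the drift has to enter the recursion itself, not only the triangle inequality.
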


\begin{remark}[Origin and Harmlessness of the $\sqrt{\ln T}$ Factor]
\label{rem:logfactor}
The decorrelation argument used in Step~2 of the proof (Lemma
\ref{lem:decorrelation}) controls the Markovian correlation between
$\mathbf{1}_s^{(k)}$ and $\tilde e_s^{(k)}$ by waiting $\tau_k=
O(\log k)$ steps for the chain to mix, at the cost of an
$O(\log k)$ rather than $O(1)$ constant in the SA-noise floor. Since
Theorem~\ref{thm:convergence} fixes the horizon $T$ in advance and
only claims the bound for $k\le T$, $\log k\le\log T$ throughout, so
the factor is absorbed into $C_{1,T}$ once and for all and the bound
\eqref{eq:convergencebound} keeps its $C_{1,T}/\sqrt{k}$ form, with
the horizon dependence made explicit in the subscript rather than
hidden. When $\epsilon_v=0$ (the regime used for
Figures~\ref{fig:convergence}--\ref{fig:spatial}), this is the only
place $T$ enters $C_{1,T}$ at all, and it enters at most
logarithmically -- negligible in practice, e.g.\
$\sqrt{1+\ln(8000)}\approx3.06$ for the horizons used in
Section~\ref{sec:simulation}.
\end{remark}

\begin{remark}[Condition~\eqref{eq:qcondition} is a Genuine New Requirement]
\label{rem:qcondition}
Condition~\eqref{eq:qcondition} bounds how much the diffusion
strength $\beta$ may compound with a large step-size constant
$\alpha_0$; it is independent of, and in addition to, the
step-size condition~\eqref{eq:stepsize_condition} and the
Chung--Robbins condition $\alpha_0>1/(2p_{\min})$. It is needed only
for the elementary, pathwise treatment of the drift term in
Step~2 below (Remark~\ref{rem:pathwise_drift}); it plays no role elsewhere in
the proof. In the simulation parameters of
Section~\ref{sec:simulation}, the deployment regime
($\alpha_0=60,\beta=0.001,d_\kappa=4$) gives $q=0.48<1$ and satisfies
\eqref{eq:qcondition} comfortably, but the calibration regime
($\alpha_0=30{,}000$) gives $q=240$, far outside this range. This is
immaterial for Figures~\ref{fig:convergence}--\ref{fig:spatial},
which use the calibration regime only with $\epsilon_v=0$ (the drift
term vanishes identically regardless of $q$), but it means
Theorem~\ref{thm:convergence} as stated does not, strictly, cover
Scenario B (Figure~\ref{fig:scenarioB}), which uses the calibration
regime with $\epsilon_v>0$. A super-linear (but still finite,
elementary) version of the bound holds for $q\ge1$ by the same
calculation (Remark~\ref{rem:pathwise_drift}), and the qualitative
observation already reported there -- that the drift effect is small
relative to the still-decaying SA term -- is consistent with either
bound; reconciling the exact constant for that regime is left as a
cleanup item.
\end{remark}

\begin{remark}[Why the Bound is Finite-Horizon]
\label{rem:finitehorizon}
Assumption~\ref{ass:slowvar} bounds the per-step change of the true
field by a fixed $\epsilon_v$ that does not shrink with the step size
$\alpha^{(k)}$. Because $\alpha^{(k)}\to0$, the estimator's
SA-contraction rate $\alpha^{(k)}p_{\min}$ also vanishes, so it
eventually corrects for the per-step drift more slowly than the drift
accumulates: no diminishing-step-size estimator can maintain a fixed
asymptotic tracking floor against a target that keeps moving by a
constant amount each step. The drift term in
\eqref{eq:convergencebound} is therefore stated over a finite,
pre-specified horizon $T$, with a bound that grows (conservatively,
linearly) in $T$, rather than as an asymptotic $k\to\infty$ floor. This
is the honest price of keeping Assumption~\ref{ass:slowvar} in its
original, step-size-independent form; an asymptotic floor of the form
$\epsilon_v/p_{\min}$ would only be valid if $\epsilon_v$ were instead
required to shrink at rate $\alpha^{(k)}$.
\end{remark}
% ------------------------------------------------------------------
\begin{proof}
\textbf{Step~1: Error recursion.}
Let $\Sigma_s^{*(k)}$ denote the fixed point of~\eqref{eq:fixedpoint}
corresponding to the field $\Sigw^{(k)}(\cdot)$ at time $k$, let
$e_s^{(k)}:=\hSigk{s}-\Sigma_s^{*(k)}$, and let
$B_s^{(k)}:=\Sigma_s^{*(k)}-\Sigw^{(k)}(s)$ be the corresponding bias,
bounded by~\eqref{eq:bias} for every $k$. Write
$\Sigw^{(k+1)}(s)=\Sigw^{(k)}(s)+\delta_s^{(k)}$ with
$\normF{\delta_s^{(k)}}\le\epsilon_v$ by Assumption~\ref{ass:slowvar}.
Subtracting $\Sigma_s^{*(k)}$ from both sides of the update~\eqref{eq:update}
and substituting
$\hat w_k\hat w_k^\top-\hSigk{s}
=\xi_s^{(k)}-e_s^{(k)}-B_s^{(k)}$ (adding and subtracting
$\Sigw^{(k)}(s)$) and
$-\beta(\Lcal_\kappa\hSigk{})_s
=-\beta(\Lcal_\kappa e^{(k)})_s+p_sB_s^{(k)}$ (linearity of $\Lcal_\kappa$
together with the fixed-point equation~\eqref{eq:fixedpoint}, since
the diffusion term in~\eqref{eq:update} equals
$-(\Lcal_\kappa\hSigk{})_s$ by~\eqref{eqn:diffusion_op}) gives
\begin{align*}
  &\hSigkp{s}-\Sigma_s^{*(k)} \\
  & \quad = (1-\alpha^{(k)}\mathbf{1}_s)e_s^{(k)}
  + \alpha^{(k)}\mathbf{1}_s\xi_s^{(k)} \\
  & \quad \quad- \alpha^{(k)}\beta(\Lcal_\kappa e^{(k)})_s  + \alpha^{(k)}(p_s-\mathbf{1}_s)B_s^{(k)},  
\end{align*}
where $\mathbf{1}_s:=\mathbf{1}[x_k\in C_s]$. (For brevity, this step
elides the effect of the projection $\Pi_\Theta$ in~\eqref{eq:update}.
The argument is two-step: first analyze the unprojected update as
above; then, since $\Sigma_s^{*(k)}\in\Theta$ by
Corollary~\ref{cor:fixedpoint_bounded}, the projection $\Pi_\Theta$ is
non-expansive toward any point of $\Theta$ -- in particular toward
$\Sigma_s^{*(k)}$, i.e. $\normF{\Pi_\Theta(X)-\Sigma_s^{*(k)}}\le
\normF{X-\Sigma_s^{*(k)}}$ for any $X$ since $\Sigma_s^{*(k)}\in\Theta$
and $\Theta$ is closed and convex -- so applying $\Pi_\Theta$ cannot
increase $\normF{\hSigkp{s}-\Sigma_s^{*(k)}}$. The unprojected
expression above therefore remains a valid upper bound on the
projected update throughout.) Subtracting
$\Sigma_s^{*(k+1)}-\Sigma_s^{*(k)} =: \Delta_s^{*(k)}$ from both sides
converts the left side to $e_s^{(k+1)}$, giving the error recursion
\begin{align}
  e_s^{(k+1)} &=
  (1-\alpha^{(k)}\mathbf{1}_s)\,e_s^{(k)}
  - \alpha^{(k)}\beta(\Lcal_\kappa\,e^{(k)})_s
  + \alpha^{(k)}\mathbf{1}_s\,\xi_s^{(k)} \notag\\
  &\quad
  + \alpha^{(k)}(p_s-\mathbf{1}_s)\,B_s^{(k)}
  - \Delta_s^{*(k)}.
  \label{eq:errorrecursion}
\end{align}
This recursion is exact. The two correction terms have distinct
roles. The term $\Delta_s^{*(k)}$ is the genuine drift of the fixed
point itself: by Corollary~\ref{cor:nonexpansive} applied to the
fields $\Sigw^{(k+1)}(\cdot)$ and $\Sigw^{(k)}(\cdot)$,
\[
  \normF{\Delta_s^{*(k)}}
  \le
  \max_{s'}\normF{\Sigw^{(k+1)}(s')-\Sigw^{(k)}(s')}
  \le \epsilon_v,
\]
and, as discussed in Remark~\ref{rem:finitehorizon}, this term is
responsible for the finite-horizon accumulated-drift contribution in
Theorem~\ref{thm:convergence}. The term
$\alpha^{(k)}(p_s-\mathbf{1}_s)B_s^{(k)}$ satisfies the crude bound
$\normF{\alpha^{(k)}(p_s-\mathbf{1}_s)B_s^{(k)}}
\le \alpha^{(k)}\beta d_\kappa L_\sigma r_{\Ncal}/p_{\min}$ pointwise
at each $k$, since $|p_s-\mathbf{1}_s|\le1$ and $B_s^{(k)}$ satisfies
the deterministic bias bound~\eqref{eq:bias} \emph{at every $k$},
independently of $\epsilon_v$. Both correction terms, together with
the residual Markovian correlation in the leading SA-contraction
term, are handled together in Step~2 below via the decomposition of
$e^{(k)}$ into two auxiliary sequences and the decorrelation
estimate of Lemma~\ref{lem:decorrelation}.

\textbf{Step~2: Lyapunov decrease under Markovian noise.}
By Assumption~\ref{ass:bounded} and Corollary~\ref{cor:fixedpoint_bounded},
both $\hSigk{s}\in\Theta$ (by construction of the projected
update~\eqref{eq:update}) and $\Sigs\in\Theta$ for all $s,k$, so
\[
  \normF{e_s^{(k)}} = \normF{\hSigk{s}-\Sigs}
  \le E_{\max}:=\sqrt{n}\,(\mu_{\mathrm{hi}}-\mu_{\mathrm{lo}})
\]
almost surely, for all $s$ and $k$.

\emph{Decomposition.} Write the recursion~\eqref{eq:errorrecursion} as
$e_s^{(k+1)} = \mathcal{T}^{(k)}(e^{(k)})_s + F_s^{(k)} - \Delta_s^{*(k)}$,
where $\mathcal{T}^{(k)}(X)_s := (1-\alpha^{(k)}\mathbf{1}_s)X_s
-\alpha^{(k)}\beta(\Lcal_\kappa X)_s$ is linear in $X$, and
$F_s^{(k)} := \alpha^{(k)}\mathbf{1}_s\xi_s^{(k)}
+\alpha^{(k)}(p_s-\mathbf{1}_s)B_s^{(k)}$ collects the noise and
smoothing-bias forcing. The two forcing terms, $F_s^{(k)}$ and
$\Delta_s^{*(k)}$, are analyzed separately: $\Delta_s^{*(k)}$ is a
purely deterministic, uniformly bounded sequence, controlled
pathwise; $F_s^{(k)}$ contains the martingale noise $\xi_s^{(k)}$
together with the Markovian visitation correction, both controlled
through the decorrelation estimate of Lemma~\ref{lem:decorrelation}.
Exploiting the linearity of $\mathcal{T}^{(k)}$, define two auxiliary
sequences, each obeying the same linear recursion as $e^{(k)}$ but
driven by only one of the two additive forcings, so that each can be
analyzed with the tool suited to it and the results recombined at
the end.
\begin{align*}
  \tilde e_s^{(k+1)} &= \mathcal{T}^{(k)}(\tilde e^{(k)})_s + F_s^{(k)},
  &\tilde e_s^{(0)} &= e_s^{(0)}, \\
  \delta_s^{(k+1)} &= \mathcal{T}^{(k)}(\delta^{(k)})_s - \Delta_s^{*(k)},
  &\delta_s^{(0)} &= 0.
\end{align*}
Because $\mathcal{T}^{(k)}$ is linear, $\tilde e^{(k)}+\delta^{(k)}$
satisfies the same recursion as $e^{(k)}$, with the same initial
condition; since the recursion (given the realized
$\{\mathbf{1}_s^{(k)},\xi_s^{(k)},B_s^{(k)},\Delta_s^{*(k)}\}$) has a
unique solution, $e_s^{(k)}=\tilde e_s^{(k)}+\delta_s^{(k)}$ for every
$k$. This is an exact identity, not an approximation: $\tilde e$ and
$\delta$ are two bookkeeping sequences computed from the same realized
data that produced $e^{(k)}$ in Step~1, just grouped by forcing term.

\begin{remark}[An Elementary Pathwise Drift Bound for $\delta^{(k)}$]
\label{rem:pathwise_drift}
The sequence $\delta^{(k)}$ has no martingale-noise term -- it is
forced only by the deterministic, uniformly bounded sequence
$\Delta_s^{*(k)}$, with $\normF{\Delta_s^{*(k)}}\le\epsilon_v$ for
every $s$ (Corollary~\ref{cor:nonexpansive}). This makes it possible
to bound $\delta^{(k)}$ \emph{pathwise} (almost surely, for every
trajectory realization), using only the triangle inequality and the
degree bound $d_\kappa$ of the diffusion kernel -- no ergodicity,
mixing, or Markov-chain argument is needed at all. We bound
$\delta^{(k)}$ first, since the bound obtained below is needed as an
input to the decorrelation argument for $\tilde e^{(k)}$ that
follows it.
\end{remark}

\emph{Bounding $\delta^{(k)}$: an elementary, pathwise drift bound.}
Let $D^{(k)} := \max_{1\le s\le S}\normF{\delta_s^{(k)}}$. Since
$\alpha^{(k)}<1$ by~\eqref{eq:stepsize_condition}, $|1-\alpha^{(k)}\mathbf{1}_s|\le1$
always, and the diffusion term satisfies, for every $s$,
\begin{equation*}
    \begin{aligned}
    \normF{(\Lcal_\kappa\delta^{(k)})_s}
  &\le \sum_{s'\in\Ncal(s)}\kappa(s,s')\bigl(\normF{\delta_s^{(k)}}+\normF{\delta_{s'}^{(k)}}\bigr) \\
  &\le 2d_\kappa D^{(k)},
    \end{aligned}
\end{equation*}
using $\sum_{s'\in\Ncal(s)}\kappa(s,s')=d_s\le d_\kappa$ and
$\normF{\delta_s^{(k)}},\normF{\delta_{s'}^{(k)}}\le D^{(k)}$. Applying
the triangle inequality to $\delta_s^{(k+1)}=\mathcal{T}^{(k)}(\delta^{(k)})_s
-\Delta_s^{*(k)}$ and maximizing over $s$ gives the deterministic
recursion
\begin{equation}
  D^{(k+1)} \le \bigl(1+2\alpha^{(k)}\beta d_\kappa\bigr)\,D^{(k)} + \epsilon_v,
  \qquad D^{(0)}=0,
  \label{eq:driftrecursion}
\end{equation}
which holds almost surely, for every sample path, with no expectation
or ergodicity argument anywhere.

\emph{Majorizing sequence.} Define $\bar D^{(k)}$ by the
\emph{equality} version of~\eqref{eq:driftrecursion},
\begin{equation}
  \bar D^{(k+1)} := \bigl(1+2\alpha^{(k)}\beta d_\kappa\bigr)\,\bar D^{(k)} + \epsilon_v,
  \qquad \bar D^{(0)}=0.
  \label{eq:driftrecursion_eq}
\end{equation}
Since both sequences start at $0$ and the map
$x\mapsto(1+2\alpha^{(k)}\beta d_\kappa)x+\epsilon_v$ is nondecreasing
in $x$, induction on~\eqref{eq:driftrecursion} and
\eqref{eq:driftrecursion_eq} together give $D^{(k)}\le\bar D^{(k)}$
for every $k$: the base case $D^{(0)}=\bar D^{(0)}=0$ holds trivially,
and if $D^{(k)}\le\bar D^{(k)}$ then monotonicity of the map applied
to~\eqref{eq:driftrecursion} gives
$D^{(k+1)}\le(1+2\alpha^{(k)}\beta d_\kappa)D^{(k)}+\epsilon_v
\le(1+2\alpha^{(k)}\beta d_\kappa)\bar D^{(k)}+\epsilon_v=\bar D^{(k+1)}$.
It therefore suffices to bound $\bar D^{(k)}$, for which equality
unrolling is legitimate since~\eqref{eq:driftrecursion_eq} is itself
an equality recursion with $\bar D^{(0)}=0$:
\[
  \bar D^{(k)} = \epsilon_v\sum_{j=0}^{k-1}\prod_{i=j+1}^{k-1}\bigl(1+2\alpha^{(i)}\beta d_\kappa\bigr).
\]
Using $1+x\le e^x$ and $\alpha^{(i)}=\alpha_0/(i+k_0)$,
\[
  \prod_{i=j+1}^{k-1}\bigl(1+2\alpha^{(i)}\beta d_\kappa\bigr)
  \le \exp\!\Bigl(2\alpha_0\beta d_\kappa\sum_{i=j+1}^{k-1}\tfrac{1}{i+k_0}\Bigr)
  \le \Bigl(\tfrac{k+k_0}{j+k_0}\Bigr)^{q},
\]
where $q:=2\alpha_0\beta d_\kappa$, using the standard integral
comparison $\sum_{i=j+1}^{k-1}1/(i+k_0)\le\ln\bigl((k+k_0)/(j+k_0)\bigr)$.
By hypothesis~\eqref{eq:qcondition}, $q<1$, so
\begin{equation*}
    \begin{aligned}
    \bar D^{(k)} &\le \epsilon_v(k+k_0)^q\sum_{j=0}^{k-1}(j+k_0)^{-q} \\
            &\le \epsilon_v(k+k_0)^q\Bigl[k_0^{-q}+\frac{(k+k_0)^{1-q}}{1-q}\Bigr],
    \end{aligned}
\end{equation*}
where the bracketed bound uses that $(j+k_0)^{-q}$ is decreasing in
$j$ (sum $\le$ first term plus the integral
$\int_0^{k-1}(x+k_0)^{-q}dx$). Since $q\in(0,1)$ and
$(k+k_0)/k_0\ge1$, $((k+k_0)/k_0)^q\le(k+k_0)/k_0$, giving
\[
  \bar D^{(k)} \le \epsilon_v(k+k_0)\Bigl[\frac{1}{k_0}+\frac{1}{1-q}\Bigr]
  \le C_v''\,\epsilon_v\,k, \qquad k\ge1,
\]
\begin{equation}
  C_v'' := (1+k_0)\Bigl[\frac{1}{k_0}+\frac{1}{1-q}\Bigr],
  \qquad q=2\alpha_0\beta d_\kappa,
  \label{eq:explicitdrift}
\end{equation}
using $k+k_0\le(1+k_0)k$ for $k\ge1$. Combined with $D^{(k)}\le\bar
D^{(k)}$ established above, this gives the almost-sure pathwise bound
$D^{(k)}\le C_v''\epsilon_v k$ for every $k$, and in particular, for
the fixed horizon $T$,
\begin{equation}
  D^{(k)} \le C_v''\,\epsilon_v\,T \qquad\text{a.s.,}\quad 0\le k\le T.
  \label{eq:Dkbound}
\end{equation}
Define the $p_s$-weighted squared error
\begin{equation}
  V_\delta^{(k)} := \sum_{s=1}^S p_s\normF{\delta_s^{(k)}}^2,
  \label{eq:vdelta_def}
\end{equation}
in direct analogy with $V_e^{(k)}$ in~\eqref{eq:lyapunov}. Since
$p_s\le1$ for all $s$, $V_\delta^{(k)}\le(D^{(k)})^2\le
(C_v''\epsilon_v k)^2$, hence $\sqrt{\E[V_\delta^{(k)}]}\le D^{(k)}\le
C_v''\epsilon_v k$ -- in fact an \emph{almost-sure}, not merely
in-expectation, bound. Notably, $C_v''$ does not depend on $p_{\min}$
at all: this bound never credits the data-correction term
$(1-\alpha^{(k)}\mathbf{1}_s)$ with any contraction (it is bounded
crudely by $1$ regardless of whether cell $s$ is visited), so it
carries no information about visitation frequency; the only damping
mechanism it uses is the diffusion's bounded spread, and the
requirement $q<1$ is exactly the condition under which that damping
keeps the accumulated drift from growing faster than linearly in $k$.
For $q\ge1$ the same calculation, applied to $\bar D^{(k)}$ and
transferred to $D^{(k)}$ via $D^{(k)}\le\bar D^{(k)}$ as before, gives
a super-linear but still finite, still elementary bound,
$D^{(k)}\le\epsilon_v(k+k_0)^q Z_q(k_0)$ for an explicit constant
$Z_q(k_0)$ depending on the (now convergent, since $q>1$) tail sum
$\sum_{j\ge0}(j+k_0)^{-q}$; see Remark~\ref{rem:qcondition} for where
this matters.

\emph{Bounding $\tilde e^{(k)}$: a uniform pathwise bound, then
decorrelation.} Since $e_s^{(k)}=\tilde e_s^{(k)}+\delta_s^{(k)}$
exactly, the triangle inequality together with
$\normF{e_s^{(k)}}\le E_{\max}$ and~\eqref{eq:Dkbound} gives, for
every $0\le k\le T$,
\begin{equation}
  \normF{\tilde e_s^{(k)}}
  \;\le\;
  \normF{e_s^{(k)}}+\normF{\delta_s^{(k)}}
  \;\le\;
  \bar E
  :=
  E_{\max}+C_v''\,\epsilon_v\,T
  \qquad\text{a.s.}
  \label{eq:barE}
\end{equation}
(When $\epsilon_v=0$, $\bar E=E_{\max}$ exactly, with no loosening.)

Evaluate the one-step change of
$V_{\tilde e}^{(k)}=\sum_s p_s\normF{\tilde e_s^{(k)}}^2$ by expanding
$\normF{\tilde e_s^{(k+1)}}^2$ and collecting terms by source. Four
distinct contributions appear.

\textbf{(a) SA contraction:}
\begin{equation}
  -2\alpha^{(k)}\sum_s p_s\mathbf{1}_s\normF{\tilde e_s^{(k)}}^2.
  \label{eq:contrib_a}
\end{equation}

\textbf{(b) SA noise:}
\begin{equation}
  (\alpha^{(k)})^2\sum_s p_s\mathbf{1}_s\normF{\xi_s^{(k)}}^2.
  \label{eq:contrib_b}
\end{equation}

\textbf{(c) Diffusion cross term:}
\begin{equation}
  -2\alpha^{(k)}\beta\sum_s p_s\langle\tilde e_s^{(k)},
    (\Lcal_\kappa\tilde e^{(k)})_s\rangle_F.
  \label{eq:contrib_c}
\end{equation}

\textbf{(d) Smoothing-bias forcing:}
\begin{equation}
  2\alpha^{(k)}\sum_s p_s(p_s-\mathbf{1}_s)
    \langle\tilde e_s^{(k)}, B_s^{(k)}\rangle_F.
  \label{eq:contrib_d}
\end{equation}
Term (b) is bounded by $(\alpha^{(k)})^2\sigma_\xi^2$ directly from
Assumption~\ref{ass:noise}(ii). Term (c) is non-positive by
Lemma~\ref{lem:dissipative}: the term carries the sign
$-\alpha^{(k)}\beta(\Lcal_\kappa\tilde e^{(k)})_s$ inside the
recursion, and Lemma~\ref{lem:dissipative} shows
$\sum_s p_s\langle\tilde e_s^{(k)},(\Lcal_\kappa\tilde
e^{(k)})_s\rangle_F\ge0$, so~\eqref{eq:contrib_c}$\;\le 0$. Terms (a)
and (d) both involve the random indicator $\mathbf{1}_s^{(k)}$
correlated with $\tilde e_s^{(k)}$, and are controlled together via
Lemma~\ref{lem:decorrelation}, as follows.

Rewrite~\eqref{eq:contrib_a} as
\[
  -2\alpha^{(k)}\sum_s p_s^2\normF{\tilde e_s^{(k)}}^2
  \;-\;
  2\alpha^{(k)}\sum_s p_s(\mathbf{1}_s^{(k)}-p_s)\normF{\tilde e_s^{(k)}}^2,
\]
where the first piece equals $-2\alpha^{(k)}p_{\min}V_{\tilde e}^{(k)}$
when $p_s\ge p_{\min}$ (the mean-field contraction). For the second
piece, apply Lemma~\ref{lem:decorrelation} cell-by-cell with
$Z^{(k)}:=\normF{\tilde e_s^{(k)}}^2$. By~\eqref{eq:barE},
$\bar Z=\bar E^2$. Writing the one-step increment of $\tilde e_s^{(k)}$
as $\Delta_s^{(k)}:=\tilde e_s^{(k+1)}-\tilde e_s^{(k)}
=\alpha^{(k)}\bigl[-\mathbf{1}_s\tilde e_s^{(k)}-\beta(\Lcal_\kappa
\tilde e^{(k)})_s+\mathbf{1}_s\xi_s^{(k)}+(p_s-\mathbf{1}_s)B_s^{(k)}
\bigr]$, the triangle inequality (using $\bar E$, $\beta$, $d_\kappa$,
and Assumption~\ref{ass:noise}'s second-moment bound via
Jensen's inequality, $\E[\normF{\xi_s^{(k)}}\mid\Fcal_k]\le\sigma_\xi$)
gives
\[
  \E\bigl[\normF{\Delta_s^{(k)}}\bigr]
  \le
  \alpha^{(k)}\,L_Z,
  \qquad
  L_Z := \bar E(1+2\beta d_\kappa)+Y_{\max}+\sigma_\xi,
\]
with $Y_{\max}=\beta d_\kappa L_\sigma r_{\Ncal}/p_{\min}$ as in
\eqref{eq:bias}. Since
$|Z^{(j+1)}-Z^{(j)}|\le2\bar E\normF{\Delta_s^{(j)}}
+\normF{\Delta_s^{(j)}}^2$ and $\alpha^{(j)}\le\alpha^{(0)}$ for all
$j$, $\E[|Z^{(j+1)}-Z^{(j)}|]\le\alpha^{(j)}L_Z^{(a)}$ with
$L_Z^{(a)}:=2\bar EL_Z+2\alpha^{(0)}L_Z^2$, satisfying hypothesis~(ii)
of Lemma~\ref{lem:decorrelation}. Hence, for any $1\le\tau\le k$,
\begin{equation}
  \Bigl|\sum_sp_s\E\bigl[(\mathbf 1_s^{(k)}-p_s)\normF{\tilde e_s^{(k)}}^2\bigr]\Bigr|
  \le C_{\mathrm{erg}}\bar E^2\rho^\tau+\tau\alpha^{(k-\tau)}L_Z^{(a)}.
  \label{eq:adecorr}
\end{equation}
An identical argument applied to $Z^{(k)}:=\langle\tilde
e_s^{(k)},B_s^{(k)}\rangle_F$ (now $\bar Z=\bar EY_{\max}$ by
Cauchy--Schwarz, and $|Z^{(j+1)}-Z^{(j)}|\le Y_{\max}
\normF{\Delta_s^{(j)}}$ since $Z$ is linear in $\tilde e_s$, giving
$L_Z^{(d)}:=Y_{\max}L_Z$) controls term~\eqref{eq:contrib_d}:
\begin{equation}
\begin{aligned}
  &\Bigl|\sum_sp_s\E\bigl[(\mathbf 1_s^{(k)}-p_s)\langle\tilde e_s^{(k)},B_s^{(k)}\rangle_F\bigr]\Bigr| \\
  & \quad \le C_{\mathrm{erg}}\bar EY_{\max}\rho^\tau+\tau\alpha^{(k-\tau)}L_Z^{(d)}.  
\end{aligned}\label{eq:ddecorr}
\end{equation}

\emph{Choosing $\tau$.} Set $\tau_k:=\max\bigl(1,\lceil\ln k/\ln(1/\rho)\rceil\bigr)$,
so $\tau_k\ge1$ and $\rho^{\tau_k}\le1/k$ for $k\ge1$ (the latter
holding trivially at $k=1$ since $\rho^1<1$). Since $\tau_k=O(\log k)=o(k)$,
there is a $\rho$-dependent threshold $k^\dagger$ such that
$\tau_k\le k/2$ for all $k\ge k^\dagger$, hence $k-\tau_k\ge k/2$ and,
by monotonicity of $\alpha^{(\cdot)}$, $\alpha^{(k-\tau_k)}\le
2\alpha^{(k)}$. Substituting $\tau=\tau_k$ into
\eqref{eq:adecorr}--\eqref{eq:ddecorr} and combining with terms (b)
and (c), the full one-step Lyapunov recursion satisfies, for all
$k\ge k^\dagger$,
\begin{equation}
\begin{aligned}
  \E[V_{\tilde e}^{(k+1)}]
  &\le (1-2\alpha^{(k)}p_{\min})\,\E[V_{\tilde e}^{(k)}]
  + (\alpha^{(k)})^2\sigma_\xi^2 \\
  &\quad
  + \frac{2\alpha^{(k)}C_{\mathrm{erg}}\bar E(\bar E+Y_{\max})}{k} \\
  &\qquad + 8(\alpha^{(k)})^2\tau_k\bigl(L_Z^{(a)}+L_Z^{(d)}\bigr).
\end{aligned}
\label{eq:fullrecursion}
\end{equation}
The third term is $O(\alpha^{(k)}/k)=O(1/k^2)$, no larger than the
other forcing terms; the fourth is $O((\alpha^{(k)})^2\log k)$. Since
$\alpha_0p_{\min}>1/2$ by hypothesis, the standard Chung--Robbins
comparison argument (unrolling~\eqref{eq:fullrecursion} exactly as
$\bar D^{(k)}$ was unrolled above, replacing the constant forcing
$\epsilon_v$ by the now $k$-dependent forcing
$O((\alpha^{(k)})^2\log k)$) gives an explicit constant $C_1''$,
depending on $\alpha_0$, $\sigma_\xi$, $p_{\min}$,
$C_{\mathrm{erg}}$, $\rho$, $\bar E$, $\beta$, $d_\kappa$, $Y_{\max}$,
and $\normF{e_s^{(0)}}$, such that
\begin{equation}
  \E[V_{\tilde e}^{(k)}] \;\le\; \frac{(C_1'')^2(1+\ln k)}{k},
  \qquad k\ge1,
  \label{eq:logkbound}
\end{equation}
where the finitely many terms $1\le k<k^\dagger$ are covered by
enlarging $C_1''$ to dominate the a.s.\ bound $V_{\tilde
e}^{(k)}\le\bar E^2$ on that finite range. Restricting to $k\le T$ as
in Theorem~\ref{thm:convergence}, $1+\ln k\le1+\ln T$ throughout, so
\begin{equation}
\begin{aligned}
  \sqrt{\E[V_{\tilde e}^{(k)}]}
  \;\le\;
  \frac{C_1'}{\sqrt{k}}, 
  \quad
  C_1' := C_1''\sqrt{1+\ln T},
  \quad
  1\le k\le T,  
\end{aligned}
\label{eq:vtilde_bound}
\end{equation}
recovering exactly the form claimed in
Theorem~\ref{thm:convergence} and Remark~\ref{rem:logfactor}, with
$C_1'$ depending on $T$ only through the harmless factor
$\sqrt{1+\ln T}$.

\emph{Combining the two pieces.} Since $e_s^{(k)}=\tilde e_s^{(k)}
+\delta_s^{(k)}$ exactly, a further application of Minkowski's
inequality (the $L^2$ triangle inequality holds regardless of any
correlation between $\tilde e^{(k)}$ and $\delta^{(k)}$) gives
\begin{equation}
\begin{aligned}
  \sqrt{\E[V_e^{(k)}]}
  &\le \sqrt{\E[V_{\tilde e}^{(k)}]} + \sqrt{\E[V_\delta^{(k)}]}
  \le \frac{C_1'}{\sqrt{k}} + C_v'\,\epsilon_v\,k \\
  &\le \frac{C_1'}{\sqrt{k}} + C_v'\,\epsilon_v\,T,
  \qquad 1\le k\le T,  
\end{aligned}\label{eq:vebound}
\end{equation}
where $C_v':=C_v''$ as in~\eqref{eq:explicitdrift}, which depends on
$\alpha_0$, $\beta$, $d_\kappa$, and $k_0$ -- not on $p_{\min}$, per
Remark~\ref{rem:pathwise_drift}. The $p_{\min}$-dependence claimed for $C_v$
in the theorem statement enters only in Step~3 below, through the
generic $1/\sqrt{p_{\min}}$ conversion from $V_e$ to a per-cell bound,
which applies identically to the $C_1'$ and $C_v'$ terms alike.

\textbf{Step~3: Per-cell bound.}
From $p_s\normF{e_s^{(k)}}^2 \leq V_e^{(k)}$:
$\E[\normF{e_s^{(k)}}^2] \leq \E[V_e^{(k)}]/p_s
\leq \E[V_e^{(k)}]/p_{\min}$.
By Jensen's inequality and~\eqref{eq:vebound}:
\begin{equation*}
    \begin{aligned}
    \E[\normF{e_s^{(k)}}]
    & \leq \sqrt{\E[V_e^{(k)}]/p_{\min}} \\
    & \leq \frac{C_{1,T}}{\sqrt{k}} + C_v\,\epsilon_v\,T,
  \qquad 1\le k\le T,
    \end{aligned}
\end{equation*}
 where $C_{1,T} = C_1'/\sqrt{p_{\min}}$ and $C_v = C_v'/\sqrt{p_{\min}}$
absorb the dependence on $p_{\min}$, using
$\sqrt{a^2+b^2}\le a+b$ for $a,b\ge0$.
Applying the triangle inequality with the bias bound
$\normF{\Sigma_s^{*(k)}-\Sigw^{(k)}(s)} \leq \beta d_\kappa L_\sigma
r_{\Ncal}/p_{\min}$ from Lemma~\ref{lem:fixedpoint}, applied to the
field $\Sigw^{(k)}(\cdot)$ frozen at time $k$,
gives~\eqref{eq:convergencebound}.
\end{proof}

\begin{remark}[Why Lemma~\ref{lem:dissipative} is essential]
Without the dissipativity result, the diffusion cross term in
Step~2's analysis of $\tilde e^{(k)}$ has unknown sign and could be
positive for an arbitrary kernel, adding a term proportional to
$V_{\tilde e}^{(k)}$ and destroying the contraction needed for
\eqref{eq:vebound}. Lemma~\ref{lem:dissipative} shows this term is
non-positive because the edge weights $a_{ss'}$ are symmetric by
the kernel design requirement~\ref{req:kernel}: each undirected edge contributes exactly
$-a_{ss'}\normF{e_s-e_{s'}}^2$ to the Dirichlet form. This symmetry
fails for a uniform kernel $\kappa(s,s')=1/|\Ncal(s)|$, since
$p_s/|\Ncal(s)| \neq p_{s'}/|\Ncal(s')|$ in general, and the edge-wise
cancellation no longer holds. (The elementary bound on $\delta^{(k)}$
in Remark~\ref{rem:pathwise_drift} does not use this dissipativity fact at
all -- it bounds the diffusion operator's spread crudely via the
degree $d_\kappa$, at the cost of the additional
condition~\eqref{eq:qcondition}.)
\end{remark}

\begin{corollary}[Time-Invariant Noise]
\label{cor:timeinvariant}
When $\epsilon_v=0$, the drift term in~\eqref{eq:convergencebound}
vanishes for every horizon $T$, so the bound holds with $T=k$ at
every $k$. Recall from Theorem~\ref{thm:convergence} that $C_{1,T}$
itself depends mildly on the horizon via the factor
$\sqrt{1+\ln T}$ (Remark~\ref{rem:logfactor}); taking $T=k$
therefore gives a vanishing stochastic-approximation term of order
$O(\sqrt{\ln k/k})$ rather than a constant times $1/\sqrt{k}$, so
\[
  \limsup_{k\to\infty}
  \E\bigl[\normF{\hSigk{s}-\Sigw(s)}\bigr]
  \le
  \frac{\beta d_\kappa L_\sigma r_{\Ncal}}{p_{\min}}
\]
still follows, since $\sqrt{\ln k/k}\to0$.
This is stated as an upper bound rather than exact convergence, since
the theorem only certifies that the SA error vanishes,
not that the estimator's limiting error equals the smoothing-bias
floor exactly. When additionally $\beta\to0$, the floor itself
vanishes, so $\limsup_{k\to\infty}\E[\normF{\hSigk{s}-\Sigw(s)}]=0$
for all cells.
\end{corollary}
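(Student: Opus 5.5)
The corollary is a specialization of Theorem~\ref{thm:convergence}, so the plan is to set $\epsilon_v=0$ in that theorem and then pass to the limit. We track how the horizon enters the constant $C_{1,T}$.

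\textbf{Step 1 (drift term vanishes; take $T=k$).} With $\epsilon_v=0$, the drift forcing satisfies $\Delta_s^{*(k)}=0$. By Corollary~\ref{cor:nonexpansive}, $\normF{\Delta_s^{*(k)}}\le\epsilon_v=0$. Hence $\delta^{(k)}\equiv 0$ pathwise, and the term $C_v\epsilon_v T$ in~\eqref{eq:convergencebound} is identically zero for every $T$. The same fact gives $\bar E=E_{\max}$ in~\eqref{eq:barE}, with no $T$-dependence. So the only horizon dependence left in $C_{1,T}$ is the factor $\sqrt{1+\ln T}$ from~\eqref{eq:vtilde_bound}. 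Concretely, $C_{1,T}=C_1''\sqrt{1+\ln T}/\sqrt{p_{\min}}$, where $C_1''$ does not depend on $T$. Since the theorem holds for every fixed $T\ge 1$ and every $1\le k\le T$, we may choose $T=k$ for each $k$.

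\textbf{Step 2 (explicit bound at each $k$).} Substituting $T=k$ gives, for all $k\ge1$,
\[
\E\bigl[\normF{\hSigk{s}-\Sigw(s)}\bigr]\le \frac{C_1''}{\sqrt{p_{\min}}}\sqrt{\frac{1+\ln k}{k}}+\frac{\beta d_\kappa L_\sigma r_{\Ncal}}{p_{\min}}.
\]
The one point to verify is that $C_1''$ is truly independent of $T$ when $\epsilon_v=0$. Going back through Step~2 of the proof, $C_1''$ depends only on $\alpha_0$, $\sigma_\xi$, $p_{\min}$, $C_{\mathrm{erg}}$, $\rho$, $\bar E=E_{\max}$, $\beta$, $d_\kappa$, $Y_{\max}$ and $\normF{e_s^{(0)}}$. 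Here~\eqref{eq:logkbound} holds for all $k$, and no $T$ was used before the restriction to $k\le T$. This check is the main, though modest, obstacle. One could even bypass $T$ altogether by using~\eqref{eq:logkbound} directly.

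\textbf{Step 3 (limits).} Since $\sqrt{(1+\ln k)/k}\to0$, taking $\limsup_{k\to\infty}$ yields the stated smoothing-bias floor $\beta d_\kappa L_\sigma r_{\Ncal}/p_{\min}$. For the final claim, let $\beta\to0$ in this floor, which tends to zero. Any admissible $\beta$ keeps hypotheses~\eqref{eq:qcondition} and~\eqref{eq:stepsize_condition} satisfied, so this is legitimate. The floor vanishes and the $\limsup$ is $0$, because expectations of norms are nonnegative. Note that $C_1''$ may depend on $\beta$; the intended reading is to take the $\limsup$ in $k$ first for each fixed $\beta$, then let $\beta\to0$.
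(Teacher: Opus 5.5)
Your proposal is correct and follows the paper's own argument, which is contained in the corollary statement itself. With $\epsilon_v=0$ the drift sequence $\delta^{(k)}$ vanishes and $\bar E=E_{\max}$, so the only horizon dependence left is the $\sqrt{1+\ln T}$ factor. Setting $T=k$, or equivalently using \eqref{eq:logkbound} directly, gives an $O(\sqrt{\ln k/k})$ stochastic-approximation term whose $\limsup$ is zero. Your explicit check that $C_1''$ does not depend on $T$ when $\epsilon_v=0$, and your reading of $\beta\to0$ as an iterated limit (first $\limsup_k$ at fixed $\beta$, then $\beta\to0$), make precise what the paper leaves implicit.
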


% ------------------------------------------------------------------
\subsection{Plug-In Stability Analysis}
\label{sec:plugin}

The nonlinear MPPI stability bound established in the companion
paper~\cite{p2nonlinear} (Theorem~1, eq.~(60)) is, in full,
\begin{equation*}
\begin{aligned}
 & \E\bigl[\norm{x_k-x^*}\mathbf{1}_{\{\tau_R>T\}}\bigr] \\
 & \leq c\,\tilde\beta^k\norm{x_0-x^*}
  + \gamma_M e_M(\eta)
  + \gamma\sqrt{\tr(\Sigw)}
  + \gamma_\eta\sqrt{\eta},
\end{aligned}
\end{equation*}
where $\gamma_M e_M(\eta)$ is the MPPI approximation floor and
$\gamma_\eta\sqrt{\eta}$ is the bad-event confidence floor (in the
notation of~\cite{p2nonlinear}, $\gamma$ here is $\gamma_w$ there).
We work throughout in the idealized regime $M\to\infty$, $\eta\to0$
of~\cite[Corollary~3]{p2nonlinear}, in which $e_M(\eta)\to0$ and
$\sqrt{\eta}\to0$ and the localization event holds with probability
$1$, leaving the two-term bound used in this paper:
\begin{equation}
  \E[\norm{x_k-x^*}]
  \leq c\,\tilde\beta^k\norm{x_0-x^*}
  + \gamma\sqrt{\tr(\Sigw)},
  \label{eq:p2bound}
\end{equation}
with $c=\sqrt{\bar\mu/\mu}$, $\tilde\beta\in(0,1)$, and
$\gamma=\sqrt{\bar\mu}/\bigl((1-\tilde\beta)\sqrt{\mu}\bigr)$, as in
Assumption~\ref{ass:p2}.
When MPPI uses $\hSigk{s(x_k)}$ in place of $\Sigw^{(k)}(s(x_k))$,
two modifications enter: the realized control itself shifts because
the covariance enters the closed loop through both the sampling
distribution and the coupling constraint~\eqref{eq:sigR}, and the
explicit noise-floor term in~\eqref{eq:p2bound} is evaluated at the
wrong covariance. We treat each in turn.

\subsubsection{Sampling Perturbation: A Constructive Sensitivity Bound}

The quantity needed is a Lipschitz bound on the \emph{realized} MPPI
control output as a function of the sampling covariance, not merely
on the intermediate map $\Sigma\mapsto R(\Sigma)$ that the coupling
constraint induces. We establish this directly.

\begin{lemma}[MPPI Control Sensitivity to Sampling Covariance]
\label{lem:mppi_sensitivity}
Couple the MPPI samples across $\Sigma,\Sigma'\in\Theta$
(Assumption~\ref{ass:bounded}) via a shared seed
$\zeta^{(j)}\sim\mathcal{N}(0,I)$, setting
$\epsilon^{(j)}(\Sigma)=\Sigma^{1/2}\zeta^{(j)}$ and
$R(\Sigma)=\lambda_0^{1/2}\Sigma^{-1/2}$ as in~\eqref{eq:sigR}.
Under Assumption~\ref{ass:clipping}, every individual sample
$v^{(j)}(\Sigma)$ is bounded by $U_{\max}$, and hence so is the
realized control $u^{\mathrm{MPPI}}(x;\Sigma)$, uniformly on
$\Xcal\times\Theta$. Suppose the path-integral cost $J(x,U)$ is
Lipschitz in $R$ with constant $C_J$ (immediate since $J$ depends
on $R$ only through the quadratic term $\tfrac12\|Ru\|^2$ with $u$
ranging over the compact admissible set
$\Ucal_{\mathrm{adm}}$). Then, under
Assumptions~\ref{ass:bounded}, \ref{ass:weight_reg},
and~\ref{ass:clipping}, conditional on the denominator-concentration
event~\eqref{eq:denom_concentration} (probability $\ge1-\eta$, for
$M\ge M_1(\eta)$), the realized MPPI control map
$\Sigma\mapsto u^{\mathrm{MPPI}}(x;\Sigma)$ is locally Lipschitz on
$\Theta$, uniformly in $x\in\Xcal$:
\begin{equation}
\bigl\|u^{\mathrm{MPPI}}(x;\Sigma)-u^{\mathrm{MPPI}}(x;\Sigma')\bigr\|
  \le L_w\,\normF{\Sigma-\Sigma'},
  \label{eq:mppi_sensitivity}
\end{equation}
$\forall x\in\Xcal,\ \Sigma,\Sigma'\in\Theta$,
with
\begin{equation}
  L_w = \frac{1}{\underline Z}
  \left[
    \Bigl(1+\frac{U_{\max}C_J}{\lambda}\Bigr) L_{1/2}\,\E\|\zeta\|
    + \frac{U_{\max}C_J}{\lambda} L_R
  \right],
  \label{eq:Lw_explicit}
\end{equation}
where
$L_{1/2} = 1/(2\sqrt{\mu_{\mathrm{lo}}})$ and
$L_R = \lambda_0^{1/2}/(2\mu_{\mathrm{lo}}^{3/2})$
are the Lipschitz constants on $\Theta$ of the matrix square root
and inverse square root, respectively, and $\underline Z$ is the
denominator lower bound of Assumption~\ref{ass:weight_reg}.
\end{lemma}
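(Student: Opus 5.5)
The control is a ratio $u^{\mathrm{MPPI}}=N(\Sigma)/D(\Sigma)$ with $N(\Sigma)=\sum_j\omega^{(j)}(\Sigma)v^{(j)}(\Sigma)$ and $D(\Sigma)=\sum_j\omega^{(j)}(\Sigma)$. Both $N$ and $D$ depend on $\Sigma$ through the shared seed $\zeta^{(j)}$. I would compare two covariances $\Sigma,\Sigma'\in\Theta$ with the quotient identity
\[
u(\Sigma)-u(\Sigma')=\frac{\sum_j \omega^{(j)}(\Sigma)\bigl(v^{(j)}(\Sigma)-v^{(j)}(\Sigma')\bigr)}{D(\Sigma)}+\frac{\sum_j\bigl(\omega^{(j)}(\Sigma)-\omega^{(j)}(\Sigma')\bigr)\bigl(v^{(j)}(\Sigma')-u(\Sigma')\bigr)}{D(\Sigma)} .
\]
This follows by adding and subtracting $\sum_j\omega^{(j)}(\Sigma)v^{(j)}(\Sigma')/D(\Sigma)$, and it uses $\sum_j\omega^{(j)}(\Sigma)u(\Sigma')=D(\Sigma)u(\Sigma')$. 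On the good event~\eqref{eq:denom_concentration}, $D(\Sigma)\ge\underline Z M/2$ uniformly over $\Theta$. Each term is then at most $1/\underline Z$ times an empirical average of per-sample perturbations, up to the factor $2$ that I would absorb or track.

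\textbf{Sample perturbation.} Before clipping, $\|\epsilon^{(j)}(\Sigma)-\epsilon^{(j)}(\Sigma')\|\le\|\Sigma^{1/2}-\Sigma'^{1/2}\|\,\|\zeta^{(j)}\|\le L_{1/2}\normF{\Sigma-\Sigma'}\|\zeta^{(j)}\|$. The bound $L_{1/2}=1/(2\sqrt{\mu_{\mathrm{lo}}})$ comes from the Fréchet-derivative (Sylvester equation) estimate for $t\mapsto t^{1/2}$ on $\Theta\succeq\mu_{\mathrm{lo}}I$. The projection onto the convex compact $\Ucal_{\mathrm{adm}}$ is nonexpansive, so the same bound holds for $v^{(j)}$. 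With $\omega^{(j)}\le1$, the first term is at most $\frac{1}{\underline Z}L_{1/2}\normF{\Sigma-\Sigma'}\cdot\frac1M\sum_j\|\zeta^{(j)}\|$. I would replace the empirical average by $\E\|\zeta\|$ in conditional expectation given $\Fcal_k$; alternatively I would accept a concentration slack folded into the $\eta$ event, following the inherited convention of~\cite{p2nonlinear}.

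\textbf{Weight perturbation.} Since $t\mapsto e^{-t/\lambda}$ is $1/\lambda$-Lipschitz on $t\ge0$, we have $|\omega^{(j)}(\Sigma)-\omega^{(j)}(\Sigma')|\le\frac1\lambda|J(x,V^{(j)}(\Sigma);R(\Sigma))-J(x,V^{(j)}(\Sigma');R(\Sigma'))|$. I would split this difference into two parts. The first is the change in $J$ from changing $R$ with the sample fixed, which is at most $C_J\|R(\Sigma)-R(\Sigma')\|\le C_JL_R\normF{\Sigma-\Sigma'}$. Here $L_R$ comes from differentiating $\Sigma^{-1/2}$ on $\Theta$: the derivative of $t^{-1/2}$ is bounded by $\tfrac12\mu_{\mathrm{lo}}^{-3/2}$, and the factor $\lambda_0^{1/2}$ carries through. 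The second is the change from moving the sample, which is at most $C_J L_{1/2}\|\zeta^{(j)}\|\normF{\Sigma-\Sigma'}$. Multiplying by $\|v^{(j)}(\Sigma')-u(\Sigma')\|\le 2U_{\max}$, which holds by Assumption~\ref{ass:clipping} applied per sample (this is why per-sample clipping is needed), gives the $U_{\max}C_J/\lambda$ terms. Collecting everything reproduces the bracket in~\eqref{eq:Lw_explicit}, up to harmless factors of $2$ that I would absorb into the constants or into the convention for $\underline Z$.

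\textbf{Main obstacle.} The delicate step is the second one: bounding the cost sensitivity to the sample perturbation by $C_J$. As stated, $C_J$ is only a Lipschitz constant in $R$. I would justify the $v$-dependence by noting that, on the compact set $\Xcal\times\Ucal_{\mathrm{adm}}$, the rollout cost is Lipschitz in the control sequence because $f$ is $C^1$ and the trajectories stay in a compact set. I would then take $C_J$ as the maximum of the two Lipschitz constants. I would also have to handle the fact that $\E\|\zeta\|$ appears rather than the empirical mean, and state the resulting inequality either in conditional expectation or on the good event, consistent with the conditioning in Assumption~\ref{ass:weight_reg}.
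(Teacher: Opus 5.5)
Your proposal follows essentially the same route as the paper's proof. Both use shared-seed coupling with the matrix square-root bound $L_{1/2}$ for sample positions, the $1/\lambda$-Lipschitz exponential combined with $C_J$ (in $R$ via $L_R$, and in the sampled trajectories) for the weights, a quotient identity with per-sample clipping and $D(\Sigma)\ge\underline Z M/2$, and finally replace the empirical average of $\|\zeta^{(j)}\|$ by $\E\|\zeta\|$ while absorbing factors of $2$ (and $M$) into $\underline Z$. Your centered identity is only a regrouping of the paper's $(N-N')/D-u(\Sigma')(D-D')/D$, and the caveats you flag are all used silently by the paper's proof: $C_J$ must also control the dependence on the sampled controls, clipping must not spoil the Lipschitz bound (your nonexpansiveness argument additionally needs $\Ucal_{\mathrm{adm}}$ convex, e.g.\ box saturation), and $\E\|\zeta\|$ stands in for an empirical mean.
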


\begin{proof}
For $\Sigma,\Sigma'\succeq\mu_{\mathrm{lo}}I$, standard matrix
perturbation bounds for the square root and inverse square root give
$L_{1/2}$ and $L_R$ as stated. Under the shared-seed coupling, the
sample positions satisfy
\begin{equation}
  \|v^{(j)}(\Sigma)-v^{(j)}(\Sigma')\|
  \le L_{1/2}\,\normF{\Sigma-\Sigma'}\,\|\zeta^{(j)}\|,
  \label{eq:vlip}
\end{equation}
where $v^{(j)}(\Sigma):=\bar u + \Sigma^{1/2}\zeta^{(j)}$. Since
$w^{(j)}=\exp(-J(x,U^{(j)})/\lambda)\in(0,1]$ by
Assumption~\ref{ass:weight_reg}, the elementary inequality
$|e^{-a}-e^{-b}|\le|a-b|$ for $a,b\ge0$, together with the Lipschitz
dependence of $J$ on $R(\Sigma)$ (constant $C_J$) and on the sample
trajectories via~\eqref{eq:vlip}, gives
\begin{equation}
  |w^{(j)}(\Sigma)-w^{(j)}(\Sigma')|
  \le \frac{C_J}{\lambda}\bigl(L_R+L_{1/2}\|\zeta^{(j)}\|\bigr)
  \normF{\Sigma-\Sigma'}.
  \label{eq:wlip}
\end{equation}
Write $u(\Sigma)=N(\Sigma)/D(\Sigma)$ for the numerator and
denominator of the MPPI update~\eqref{eq:mppi}, with
$D(\Sigma)=\sum_j\omega^{(j)}(\Sigma)$. The quotient identity
\begin{equation}
  u(\Sigma)-u(\Sigma')
  = \frac{N(\Sigma)-N(\Sigma')}{D(\Sigma)}
  - u(\Sigma')\,\frac{D(\Sigma)-D(\Sigma')}{D(\Sigma)}
  \label{eq:quotient}
\end{equation}
holds for any two nonzero denominators. On the
event~\eqref{eq:denom_concentration}, $D(\Sigma)\ge\underline Z M/2$
pathwise (not merely in expectation), and $\|u(\Sigma')\|\le U_{\max}$
by Assumption~\ref{ass:clipping}, since $u(\Sigma')$ is a convex
combination of clipped samples $v'^{(j)}\in\Ucal_{\mathrm{adm}}$.
Expanding
$w^{(j)}v^{(j)}-w'^{(j)}v'^{(j)}
= w^{(j)}(v^{(j)}-v'^{(j)}) + v'^{(j)}(w^{(j)}-w'^{(j)})$,
using $w^{(j)}\le1$ and $\|v'^{(j)}\|\le U_{\max}$ \emph{for every
individual} $j$ by Assumption~\ref{ass:clipping} directly --
this no longer relies on the weighted average being bounded -- and
applying \eqref{eq:vlip}--\eqref{eq:wlip} termwise, then taking
expectations over $\{\zeta^{(j)}\}$ conditional on the good
event~\eqref{eq:denom_concentration}, and substituting into
\eqref{eq:quotient} gives~\eqref{eq:Lw_explicit}, with $M$
in place of $1$ absorbed into the definition of $\underline Z$ via
the $M\ge M_1(\eta)$ scaling.
\end{proof}

\begin{remark}
The Lipschitz constant $L_w$ in~\eqref{eq:Lw_explicit} reduces, when
the sample-position term is negligible relative to the weight-sensitivity
term (e.g.\ for small $\lambda$, where the weights are sharply peaked
and dominate the sensitivity), to
$L_w \approx U_{\max}C_J L_R/(\lambda\underline Z)
= O(\lambda_0^{1/2}\mu_{\mathrm{lo}}^{-3/2})$, recovering the scaling
originally asserted for $L_w$ from the coupling constraint alone.
The present derivation shows this scaling is correct as the dominant
term, while making explicit the additional $1/\underline Z$,
$U_{\max}$, and sample-position contributions previously omitted.
\end{remark}

Under Lemma~\ref{lem:mppi_sensitivity}, for $M\ge M_1(\eta)$ and
conditional on the good event~\eqref{eq:denom_concentration}, the
triangle inequality
\[
  \|u_k^{\mathrm{adapt}}-\pi^*(x_k)\|
  \le \underbrace{\|u_k^{\mathrm{adapt}}-u_k^{\mathrm{true}\,\Sigma}\|}_{\le\,L_w\normF{\hSigk{s}-\Sigw^{(k)}(s)}}
  + \underbrace{\|u_k^{\mathrm{true}\,\Sigma}-\pi^*(x_k)\|}_{\delta(M)\norm{x_k}+\delta_0(M)}
\]
gives the adapted MPPI approximation error
\begin{equation}
  \delta_{\mathrm{adapt}}^{(k)}(M)
  = \delta(M)\norm{x_k} + \delta_0(M)
  + L_w\,\normF{\hSigk{s}-\Sigw^{(k)}(s)},
  \label{eq:deltakadapt}
\end{equation}
where the second term above is the finite-sample MPPI error from
\cite{p2nonlinear} evaluated at the \emph{true} covariance, and the
first is the control-sensitivity term bounded by
Lemma~\ref{lem:mppi_sensitivity}.

\subsubsection{Noise Floor Perturbation}

The term $\sqrt{\tr(\Sigw)}$ in~\eqref{eq:p2bound} becomes
$\sqrt{\tr(\hSigk{s})}$, with mismatch:
\begin{equation}
  \Bigl|\sqrt{\tr(\hSigk{s})}-\sqrt{\tr(\Sigw^{(k)}(s))}\Bigr|
  \leq \frac{\sqrt{n}\,\normF{\hSigk{s}-\Sigw^{(k)}(s)}}
  {2\sqrt{\min\bigl(\tr(\hSigk{s}),\tr(\Sigw^{(k)}(s))\bigr)}},
  \label{eq:tracemismatch}
\end{equation}
which follows from $|\sqrt{a}-\sqrt{b}| \le |a-b|/(2\sqrt{\min(a,b)})$
together with $|\tr(\hSigk{s})-\tr(\Sigw^{(k)}(s))| \le \sqrt{n}
\normF{\hSigk{s}-\Sigw^{(k)}(s)}$ (Cauchy--Schwarz on the eigenvalues
of the symmetric difference). By Assumption~\ref{ass:bounded},
$\tr(\Sigw^{(k)}(s))\ge n\mu_{\mathrm{lo}}$ uniformly in $k$ and $s$,
so once the estimator has converged sufficiently that
$\tr(\hSigk{s})\ge n\mu_{\mathrm{lo}}/2$, the denominator above is
bounded below by $\sqrt{n\mu_{\mathrm{lo}}/2}$, a fixed constant
independent of $k$, recovering the simpler form used in $\Psi$ below
up to a constant factor. We take $\Psi$'s reference trace to be this
uniform floor, $\tr(\Sigw):=n\mu_{\mathrm{lo}}/2$, so that $\Psi$
itself is a constant rather than a time-varying quantity.

\begin{proposition}[Adaptive Stability Bound]
\label{prop:adaptive}
Under Assumptions~\ref{ass:slowvar}--\ref{ass:weight_reg} and
Lemma~\ref{lem:mppi_sensitivity}, with $M$ large enough that
\begin{equation}
  \delta := \sup_{k\ge0}\delta_{\mathrm{adapt}}^{(k)}(M) < \delta^*,
  \label{eq:deltasup}
\end{equation}
where $\delta^*$ is the contraction-robustness threshold of
Proposition~2 in~\cite{p2nonlinear}, the closed-loop
system~\eqref{eq:dynamics} under adaptive MPPI satisfies:
\begin{equation}
  \E[\norm{x_k-x^*}]
  \leq c\,\tilde\beta^k\norm{x_0-x^*}
  + \gamma\sqrt{\tr(\Sigw)}
  + \psi^{(k)},
  \label{eq:adaptivebound}
\end{equation}
where the adaptation penalty is:
\begin{equation}
  \psi^{(k)} = \Psi\,
  \E\bigl[\normF{\hSigk{s(x_k)}-\Sigw^{(k)}(s(x_k))}\bigr],
  \label{eq:penalty}
\end{equation}
with $\Psi=\gamma\sqrt{n}/(2\sqrt{\tr(\Sigw)})+\gamma L_w$ and
$\tr(\Sigw)$ the fixed reference trace of
Section~\ref{sec:plugin}.
\end{proposition}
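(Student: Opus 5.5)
The plan is to treat the adaptive closed loop as an instance of the companion nonlinear certificate~\eqref{eq:p2bound}, with the covariance error entering at two places, and to collect the resulting additive perturbations into $\psi^{(k)}$.

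\textbf{Step 1: control perturbation.} Under the good event~\eqref{eq:denom_concentration} and Lemma~\ref{lem:mppi_sensitivity}, the adapted control differs from the true-covariance MPPI control by at most $L_w\normF{\hSigk{s(x_k)}-\Sigw^{(k)}(s(x_k))}$. The total deviation from $\pi^*$ is therefore bounded by~\eqref{eq:deltakadapt}.

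\textbf{Step 2: keep the contraction.} Hypothesis~\eqref{eq:deltasup} guarantees $\delta<\delta^*$. Proposition~2 of~\cite{p2nonlinear} then keeps the perturbed closed loop contracting in the metric $M(x)$ at the same effective rate $\tilde\beta$, with constants $c,\gamma$ as in Assumption~\ref{ass:p2}.

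\textbf{Step 3: propagate the metric-distance recursion.} I would repeat the discrete-time argument of~\cite{p2nonlinear} to get a one-step inequality of the form
\[
  \E[d_M(x_{k+1},x^*)\mid\Fcal_k]\le\tilde\beta\, d_M(x_k,x^*)+\sqrt{\bar\mu}\bigl(\sqrt{\tr\Sigw}+L_w\normF{e_k}\bigr),
\]
where $e_k:=\hSigk{s(x_k)}-\Sigw^{(k)}(s(x_k))$ and $d_M$ denotes the distance in the metric $M(x)$. The $\sqrt{\tr\Sigw}$ term comes from Jensen's inequality on the noise. I would then take total expectations and unroll the recursion, using $\sum_j\tilde\beta^{j}\le 1/(1-\tilde\beta)$, and convert back to Euclidean norm via $\sqrt{\mu}$. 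The $L_w$ part then contributes $\gamma L_w$ times the covariance error.

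The noise-floor mismatch is handled separately. Because the certificate is evaluated with the plug-in covariance, the floor becomes $\sqrt{\tr\hSigk{s}}$. I would write this as $\sqrt{\tr\Sigw}$ plus a correction and bound the correction by~\eqref{eq:tracemismatch}. The denominator is lower-bounded by the reference trace $n\mu_{\mathrm{lo}}/2$, as argued just before the proposition. This yields the coefficient $\gamma\sqrt n/(2\sqrt{\tr\Sigw})$. Adding the two coefficients gives $\Psi$, and taking expectations gives~\eqref{eq:penalty}.

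\textbf{Main obstacle.} The hard step is that unrolling naturally produces a discounted sum $\sum_j\tilde\beta^{k-1-j}\E\normF{e_j}$ rather than the single time-$k$ quantity in~\eqref{eq:penalty}. I would first try to define $\psi^{(k)}$ via the running supremum, or use monotone decay of the error bound from Theorem~\ref{thm:convergence}. If neither works, I would accept that the statement holds with $\E\normF{e}$ interpreted as a worst case over $[0,k]$, absorbed into $\Psi$ via the $1/(1-\tilde\beta)$ factor already present in $\gamma$.

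A secondary issue is that the trace lower bound $\tr\hSigk{s}\ge n\mu_{\mathrm{lo}}/2$ must hold. This is automatic because projection onto $\Theta$ gives $\tr\hSigk{s}\ge n\mu_{\mathrm{lo}}$. I would also condition throughout on the good event, consistent with the idealized $\eta\to0$ regime.
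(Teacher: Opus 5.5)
Your route is the paper's route: one application of Proposition~2 of~\cite{p2nonlinear} with the worst-case $\delta$ from \eqref{eq:deltasup}, and a rerun of the companion Theorem~1 argument at the fixed rate $\tilde\beta$. The trace term $\sqrt{\tr(\hSigk{s})}$ is split into the true trace plus a mismatch bounded by \eqref{eq:tracemismatch}, and an additive $\gamma L_w$ term comes from \eqref{eq:deltakadapt}. The difference is that the paper just asserts the argument ``goes through verbatim'' and arrives at the time-$k$ quantity \eqref{eq:penalty}, whereas you actually unroll and see that it does not.

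You are right, and this is a genuine gap that the paper's proof shares rather than closes. With your $e_j$, unrolling your one-step inequality (trace correction included) gives
\[
  \E\norm{x_k-x^*}\le c\,\tilde\beta^k\norm{x_0-x^*}+\gamma\sqrt{\tr(\Sigw)}
  +\Psi\,(1-\tilde\beta)\sum_{j=0}^{k-1}\tilde\beta^{\,k-1-j}\,\E\normF{e_j}.
\]
This is a discounted average of past errors, and none of your fallbacks turns it into $\Psi\,\E\normF{e_k}$.

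\emph{Running supremum.} Replacing the sum by $\sup_{j<k}\E\normF{e_j}$ is valid, but it proves a different statement. That statement is useless here: it is bounded below by $\E\normF{e_0}$ for every $k$. So whenever the estimator starts worse than the fixed covariance, which is exactly the regime of Theorem~\ref{thm:payoff}, no crossover can occur.

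\emph{Monotone decay.} A decreasing upper bound from Theorem~\ref{thm:convergence} implies nothing about monotonicity of $\E\normF{e_j}$ itself. Even genuine decrease points the wrong way. Dominating the discounted sum by $\E\normF{e_k}/(1-\tilde\beta)$ would require $\E\normF{e_j}\le\E\normF{e_k}$ for all $j<k$.

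What you can prove is the discounted form, and it suffices downstream. Substitute the bound $C_{1,T}/\sqrt j+B_{\mathrm{sm}}+C_v\,\epsilon_v\,T$ of Theorem~\ref{thm:convergence} and split the sum at $j=k/2$. This gives $(1-\tilde\beta)\sum_{j=1}^{k-1}\tilde\beta^{\,k-1-j}j^{-1/2}\le C_{\tilde\beta}/\sqrt k$. The constant floors pass through, since the weights sum to $1-\tilde\beta^k\le 1$, and the $j=0$ term decays geometrically. Hence \eqref{eq:penaltydecay}, and with it Theorem~\ref{thm:payoff}, survive with $C_{1,T}$ enlarged by a $\tilde\beta$-dependent factor. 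But \eqref{eq:penalty} as a statement about the time-$k$ error follows from neither your argument nor the paper's.

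Two smaller points. First, your one-step inequality charges the control deviation only $\sqrt{\bar\mu}\,L_w\normF{e_k}$. That tacitly assumes $\norm{f(x,u)-f(x,u')}\le\norm{u-u'}$. In general the deviation is scaled by $L_u:=\sup\norm{\partial f/\partial u}$, so the honest coefficient is $\gamma L_u L_w$; the paper's $\Psi$ omits this factor too. Second, your remark that the projection makes $\tr(\hSigk{s})\ge n\mu_{\mathrm{lo}}$ automatic is correct, and sharper than the paper's ``once the estimator has converged'' caveat. In fact, within your own derivation the trace-mismatch term is unnecessary and only inflates $\Psi$ harmlessly. Your Jensen step is applied to the true $w_k$, whose conditional law given $x_k$ does not depend on the estimate.
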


\begin{remark}[Threshold, Not Graceful Degradation]
\label{rem:threshold}
The hypothesis~\eqref{eq:deltasup} is not a minor technical
convenience: Proposition~2 of~\cite{p2nonlinear} shows the perturbed
contraction rate satisfies
$\tilde\beta = \beta + \tfrac{\bar\mu}{\mu}L_u\delta
\sup\|\partial^2 f/\partial u\partial x\|$, which is finite and
well-defined for \emph{any} $\delta\ge0$, but only certifies a
\emph{contracting} closed loop -- and hence only yields the stability
bound~\eqref{eq:adaptivebound} at all -- when $\tilde\beta<1$,
equivalently $\delta<\delta^*$. Below this threshold, the bound
degrades gracefully and continuously in $\delta$ via $\tilde\beta$.
At or above it, the contraction argument simply does not apply: this
is not evidence that the closed loop is unstable, only that this
particular Lyapunov-perturbation technique no longer certifies
stability. Consequently, $\delta_{\mathrm{adapt}}^{(k)}(M)<\delta^*$
for every $k$ is a strict precondition for
Proposition~\ref{prop:adaptive} to hold at all, not merely a
convenience that sharpens the resulting bound. In particular, early
in estimation -- before $\|\hSigk{s}-\Sigw^{(k)}(s)\|$ has decreased
enough, per Theorem~\ref{thm:convergence} -- the combined
perturbation $\delta_{\mathrm{adapt}}^{(k)}(M)$ may exceed $\delta^*$
even when the corresponding non-adaptive (fixed-$\bar\Sigma_w$)
controller's perturbation stays comfortably below it; in this regime
the adaptive controller carries no certified stability guarantee
until the estimate improves sufficiently, even though the
non-adaptive one does. Verifying~\eqref{eq:deltasup} -- e.g.\ via the
initial estimation error $\normF{\hat\Sigma_s^{(0)}-\Sigw^{(0)}(s)}$
and the convergence rate of Theorem~\ref{thm:convergence} -- is
therefore a necessary deployment check, not an afterthought.
\end{remark}

\begin{proof}
By hypothesis~\eqref{eq:deltasup}, the \emph{single} worst-case
quantity $\delta=\sup_{k\ge0}\delta_{\mathrm{adapt}}^{(k)}(M)$
satisfies $\delta<\delta^*$. The robustness-of-contraction result of
the companion paper~\cite{p2nonlinear} (Proposition~2) is therefore
applied \emph{once}, with this fixed $\delta$, yielding a single
degraded contraction rate $\tilde\beta$ -- the same $\tilde\beta$
appearing in the nominal bound~\eqref{eq:p2bound} -- rather than a
separately re-derived, time-varying rate at each $k$. With
$\tilde\beta$ so fixed, the proof of Theorem~1 in~\cite{p2nonlinear}
goes through verbatim, with the only change being that the
noise-floor term $\gamma\sqrt{\tr(\Sigw)}$ is evaluated at the
\emph{used} covariance $\hSigk{s(x_k)}$ rather than the current
$\Sigw^{(k)}(s(x_k))$. Writing
$\gamma\sqrt{\tr(\hSigk{s(x_k)})}
= \gamma\sqrt{\tr(\Sigw^{(k)}(s(x_k)))}
+ \gamma\bigl(\sqrt{\tr(\hSigk{s(x_k)})}-\sqrt{\tr(\Sigw^{(k)}(s(x_k)))}\bigr)$
and bounding the second term by~\eqref{eq:tracemismatch}, together
with the additional $\gamma L_w\normF{\hSigk{s}-\Sigw^{(k)}(s)}$
contribution from the sampling perturbation~\eqref{eq:deltakadapt}
propagating through the same Lyapunov perturbation argument used in
the proof of Theorem~1 in~\cite{p2nonlinear} (now valid since
$\tilde\beta$ is fixed throughout, by the preceding remark), gives
\eqref{eq:adaptivebound} with $\Psi$ as stated.
\end{proof}

\begin{remark}[Conservatism of the Fixed-$\tilde\beta$ Bound]
Using the single worst-case $\delta=\sup_{k\ge0}\delta_{\mathrm{adapt}}^{(k)}(M)$
rather than the actual, typically decreasing, time-varying
$\delta_{\mathrm{adapt}}^{(k)}(M)$ is conservative: as the estimator
converges (Theorem~\ref{thm:convergence}), $\delta_{\mathrm{adapt}}^{(k)}(M)$
shrinks toward $\delta(M)\norm{x_k}+\delta_0(M)+L_w B_{\mathrm{sm}}$,
so the true effective contraction rate at large $k$ is, if anything,
better than the fixed $\tilde\beta$ used here. Proposition~2 of
\cite{p2nonlinear} is stated for a time-invariant bound on
$\|u_k-\pi^*(x_k)\|$; extending it to certify an improving,
time-varying rate as the estimator converges is left as a direction
for tightening the present bound, and is not needed for either the
qualitative conclusions of Proposition~\ref{prop:adaptive} or the
payoff theorem (Theorem~\ref{thm:payoff}) that follows.
\end{remark}

Substituting the finite-horizon Theorem~\ref{thm:convergence} into
\eqref{eq:penalty}: for any fixed horizon $T\ge1$ and all
$1\le k\le T$,
\begin{equation}
  \psi^{(k)} \leq \Psi\!\left(
    \frac{C_{1,T}}{\sqrt{k}}
    + \frac{\beta\, d_\kappa\, L_\sigma r_{\Ncal}}{p_{\min}}
    + C_v\,\epsilon_v\,T
  \right).
  \label{eq:penaltydecay}
\end{equation}

% ------------------------------------------------------------------
\subsection{Payoff Theorem}
\label{sec:payoff}

Because the true field $\Sigw^{(k)}(\cdot)$ may vary over time, a
non-adaptive controller using a fixed $\bar\Sigma_w$ incurs a
\emph{time-varying} mismatch penalty
\begin{equation}
  \bar\psi^{(k)} := \Psi\,\max_s\normF{\bar\Sigma_w-\Sigw^{(k)}(s)}.
  \label{eq:fixedpenalty}
\end{equation}
For a fixed horizon $T$, define the worst-case-best-case mismatch
floor the fixed controller is guaranteed to incur somewhere on
$[0,T]$,
\begin{equation}
  \Delta_{\mathrm{fix},T}
  :=
  \min_{0\le k\le T}\max_s\normF{\bar\Sigma_w-\Sigw^{(k)}(s)},
  \label{eq:deltafixt}
\end{equation}
so that $\bar\psi^{(k)}\ge\Psi\Delta_{\mathrm{fix},T}$ for every
$0\le k\le T$, and write
$B_{\mathrm{sm}} := \beta d_\kappa L_\sigma r_{\Ncal}/p_{\min}$ for
the smoothing-bias floor of Theorem~\ref{thm:convergence}.

Both $\psi^{(k)}$ and $\bar\psi^{(k)}$ are penalty terms appended to
the nominal stability bound~\eqref{eq:p2bound} \emph{only when the
underlying contraction-robustness certificate of
Proposition~\ref{prop:adaptive} actually holds} -- which, per
Remark~\ref{rem:threshold}, requires the corresponding sampling
perturbation to stay strictly below $\delta^*$ at every $k$. The
adaptive controller's perturbation is $\delta_{\mathrm{adapt}}^{(k)}(M)$
as in~\eqref{eq:deltakadapt}; analogously, the fixed controller's
perturbation is
\begin{equation}
  \delta_{\mathrm{fix}}^{(k)}(M)
  = \delta(M)\norm{x_k} + \delta_0(M)
  + L_w\,\normF{\bar\Sigma_w-\Sigw^{(k)}(s(x_k))},
  \label{eq:deltakfix}
\end{equation}
obtained from the same Lemma~\ref{lem:mppi_sensitivity} sensitivity
bound, with $\bar\Sigma_w$ in place of $\hSigk{s}$. A comparison
between $\psi^{(k)}$ and $\bar\psi^{(k)}$ is only meaningful at times
$k$ where \emph{both} certificates hold; Theorem~\ref{thm:payoff}
below makes this an explicit hypothesis rather than an implicit one.

\begin{theorem}[Payoff Theorem, Finite Horizon]
\label{thm:payoff}
Fix a horizon $T\ge1$. Under Assumptions~\ref{ass:slowvar}--\ref{ass:weight_reg},
suppose $M$ is large enough that
\begin{equation}
  \sup_{0\le k\le T}\delta_{\mathrm{adapt}}^{(k)}(M) < \delta^*
  \quad\text{and}\quad
  \sup_{0\le k\le T}\delta_{\mathrm{fix}}^{(k)}(M) < \delta^*,
  \label{eq:dualthreshold}
\end{equation}
so that both~\eqref{eq:adaptivebound} and its fixed-covariance
analogue hold throughout $[0,T]$, and suppose
\begin{equation}
  B_{\mathrm{sm}} + C_v\,\epsilon_v\,T < \Delta_{\mathrm{fix},T}.
  \label{eq:payoffcondition}
\end{equation}
Then there exists a crossover time
\begin{equation}
  k_T^* = \left\lceil
  \frac{C_{1,T}^2}
  {\bigl(\Delta_{\mathrm{fix},T} - B_{\mathrm{sm}} - C_v\,\epsilon_v\,T\bigr)^2}
  \right\rceil
  \label{eq:crossover}
\end{equation}
such that, if $k_T^*\le T$, then
$\psi^{(k)}<\bar\psi^{(k)}$ for all $k_T^*\le k\le T$.
\end{theorem}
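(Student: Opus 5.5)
The argument is a direct comparison of the two penalty bounds, using the finite-horizon convergence bound already inserted into the adaptation penalty in \eqref{eq:penaltydecay}. Hypothesis~\eqref{eq:dualthreshold} guarantees that, for every $0\le k\le T$, both the adaptive certificate of Proposition~\ref{prop:adaptive} and its fixed-covariance analogue hold. Hence $\psi^{(k)}$ and $\bar\psi^{(k)}$ are both legitimate penalty terms appended to the same nominal part $c\,\tilde\beta^k\norm{x_0-x^*}+\gamma\sqrt{\tr(\Sigw)}$. Comparing the two certified bounds therefore reduces to comparing $\psi^{(k)}$ with $\bar\psi^{(k)}$ on $[k_T^*,T]$.

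\textbf{Upper bound on the adaptive penalty.} For $1\le k\le T$, I will bound $\psi^{(k)}$ using \eqref{eq:penaltydecay}. One point needs care: $\psi^{(k)}$ is evaluated at the random cell $s(x_k)$, whereas Theorem~\ref{thm:convergence} is stated per fixed cell $s$. I will handle this with
\[
\E\bigl[\normF{\hSigk{s(x_k)}-\Sigw^{(k)}(s(x_k))}\bigr]\le\sum_s\E\bigl[\normF{\hSigk{s}-\Sigw^{(k)}(s)}\bigr],
\]
or, more sharply, with $\E[\max_s\normF{e_s^{(k)}}]\le\sqrt{\E[V_e^{(k)}]/p_{\min}}$, which is exactly the Step~3 conversion that defines $C_{1,T}$ and $C_v$. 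The bias part $B_{\mathrm{sm}}$ holds deterministically for every cell. This gives
\[
\psi^{(k)}\le\Psi\bigl(C_{1,T}k^{-1/2}+B_{\mathrm{sm}}+C_v\epsilon_vT\bigr).
\]
I expect this cell-randomness step to be the only real obstacle. If the per-cell statement cannot be lifted directly, I will restate the convergence bound in the max-over-cells form via $V_e^{(k)}$, at the cost of at most absorbing a constant into $C_{1,T}$.

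\textbf{Lower bound on the fixed penalty.} By the definition \eqref{eq:deltafixt}, $\bar\psi^{(k)}\ge\Psi\Delta_{\mathrm{fix},T}$ for all $0\le k\le T$. It therefore suffices to show that
\[
C_{1,T}k^{-1/2}<\Delta_{\mathrm{fix},T}-B_{\mathrm{sm}}-C_v\epsilon_vT=:g_T,
\]
where $g_T>0$ by \eqref{eq:payoffcondition}, and where $\Psi>0$ lets me divide through.

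\textbf{Crossover time.} The last inequality is equivalent to $k>C_{1,T}^2/g_T^2$. The choice $k_T^*=\lceil C_{1,T}^2/g_T^2\rceil$ gives $k\ge C_{1,T}^2/g_T^2$ for all $k\ge k_T^*$, which yields only the non-strict inequality at the boundary. To get strictness I will argue as follows. Either the ceiling is strictly larger than its argument, or equality occurs. In the equality case I will use that the inequality $\sqrt{\E V_{\tilde e}}\le C_1'/\sqrt k$ is strict, or equivalently replace $C_{1,T}$ by any slightly larger valid constant. If the formula must be kept verbatim, I will note that strict inequality holds for $k>k_T^*$ and that at $k=k_T^*$ the Step~3 bound is itself strict because $\sqrt{a^2+b^2}<a+b$ whenever both $a,b>0$.

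\textbf{Conclusion.} Restricting to $k_T^*\le k\le T$, which is nonempty by the hypothesis $k_T^*\le T$, and noting $k_T^*\ge1$ so that \eqref{eq:penaltydecay} applies, chaining the two bounds gives $\psi^{(k)}<\bar\psi^{(k)}$.
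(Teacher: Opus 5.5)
Your proposal follows the paper's proof. Hypothesis \eqref{eq:dualthreshold} is used only to make both penalties legitimate, $\psi^{(k)}$ is bounded above via \eqref{eq:penaltydecay}, $\bar\psi^{(k)}$ is bounded below by $\Psi\Delta_{\mathrm{fix},T}$, and $C_{1,T}/\sqrt{k}$ is inverted against the margin $g_T>0$. You are more careful than the paper on two points it passes over. However, two of your fallback options fail and should be dropped.

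\textbf{The random index $s(x_k)$.} The paper handles this only implicitly, when it substitutes the per-cell Theorem~\ref{thm:convergence} into \eqref{eq:penalty}. Your max-over-cells route is sound and costs nothing:
\begin{itemize}
\item the pathwise bound $\max_s\normF{e_s^{(k)}}\le\sqrt{V_e^{(k)}/p_{\min}}$,
\item Jensen and \eqref{eq:vebound},
\item $\normF{B_s^{(k)}}\le B_{\mathrm{sm}}$ for every $s$.
\end{itemize}
Together these give \eqref{eq:penaltydecay} with exactly the same $C_{1,T}$ and $C_v$, so nothing needs to be absorbed. The sum-over-cells bound does not work. It multiplies the right-hand side, in particular $B_{\mathrm{sm}}$ and $C_v\epsilon_vT$, by $S$, so neither \eqref{eq:payoffcondition} nor \eqref{eq:crossover} would survive as stated.

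\textbf{Strictness at the boundary.} You are right that $k\ge C_{1,T}^2/g_T^2$ only yields $\psi^{(k)}\le\bar\psi^{(k)}$ at the boundary. The paper's own proof draws the strict conclusion from a non-strict comparison. For $k>C_{1,T}^2/g_T^2$, strictness follows from $\Psi>0$. The degenerate case, where the ratio is an integer and $k=k_T^*$, is legitimately removed by enlarging $C_{1,T}$. This is allowed because Theorem~\ref{thm:convergence} stays true with any larger constant; equivalently, take $\lfloor C_{1,T}^2/g_T^2\rfloor+1$ as the crossover time.

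Your other two remedies do not hold:
\begin{itemize}
\item No strict version of \eqref{eq:vtilde_bound} is established anywhere.
\item The ``verbatim'' argument via $\sqrt{a^2+b^2}<a+b$ does not apply. Step~3 mentions that identity, but the inequality actually used there is \eqref{eq:vebound} divided by $\sqrt{p_{\min}}$. That is a Minkowski bound, which can hold with equality (e.g.\ when $\delta^{(k)}\equiv0$).
\item Moreover, $b=C_v\epsilon_vT=0$ whenever $\epsilon_v=0$. The theorem covers that case, and the paper's simulations use it, so there the argument gives no strictness at all.
\end{itemize}
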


\begin{proof}
By~\eqref{eq:dualthreshold}, both stability certificates hold for
every $k\in[0,T]$, so $\psi^{(k)}$ and $\bar\psi^{(k)}$ are both
well-defined, valid penalty terms throughout this range. By
~\eqref{eq:penaltydecay}, for every $1\le k\le T$,
\[
  \psi^{(k)} \le \Psi\Bigl(\tfrac{C_{1,T}}{\sqrt k}+B_{\mathrm{sm}}+C_v\epsilon_v T\Bigr).
\]
By~\eqref{eq:deltafixt}, $\bar\psi^{(k)}\ge\Psi\Delta_{\mathrm{fix},T}$
for every $0\le k\le T$. Hence $\psi^{(k)}<\bar\psi^{(k)}$ holds
whenever
\[
  \Psi\Bigl(\tfrac{C_{1,T}}{\sqrt k}+B_{\mathrm{sm}}+C_v\epsilon_v T\Bigr)
  \le
  \Psi\,\Delta_{\mathrm{fix},T},
\]
i.e., whenever $C_{1,T}/\sqrt{k}
\le \Delta_{\mathrm{fix},T}-B_{\mathrm{sm}}-C_v\epsilon_v T$.
Condition~\eqref{eq:payoffcondition} ensures the right-hand side is
strictly positive, so this inequality can be inverted for $k$, giving
exactly~\eqref{eq:crossover}: $k\ge k_T^*$ suffices. Restricting to
$k\le T$ (the only horizon over which the bound~\eqref{eq:penaltydecay}
was established) gives the stated range $k_T^*\le k\le T$, which is
nonempty precisely when $k_T^*\le T$.
\end{proof}

\begin{remark}[Why the Dual Threshold Cannot Be Dropped]
\label{rem:dualthreshold}
Hypothesis~\eqref{eq:dualthreshold} is not redundant with
~\eqref{eq:payoffcondition}: the two conditions guard against
different failure modes. Condition~\eqref{eq:payoffcondition}
ensures the \emph{asymptotic floors} are correctly ordered (adaptive
eventually beats fixed, given enough time); condition
~\eqref{eq:dualthreshold} ensures that the underlying contraction
certificates -- which $\psi^{(k)}$ and $\bar\psi^{(k)}$ are only
meaningful corrections \emph{to} -- actually hold throughout $[0,T]$
in the first place. As noted in Remark~\ref{rem:threshold}, early in
estimation the adaptive controller's perturbation
$\delta_{\mathrm{adapt}}^{(k)}(M)$, which includes the (initially
large) estimation error, can exceed $\delta_{\mathrm{fix}}^{(k)}(M)$,
whose mismatch term $L_w\|\bar\Sigma_w-\Sigw^{(k)}(s)\|$ is bounded by
construction. It is therefore possible for the fixed controller's
certificate to hold throughout $[0,T]$ while the adaptive
controller's does not, in which case the comparison
$\psi^{(k)}<\bar\psi^{(k)}$ asserted by
Theorem~\ref{thm:payoff} is vacuous, since $\psi^{(k)}$ is not itself
a valid stability-bound penalty over that range.
~\eqref{eq:dualthreshold} should be checked explicitly, e.g.\ via the
estimator's initial error $\normF{\hat\Sigma_s^{(0)}-\Sigw^{(0)}(s)}$
and the convergence rate of Theorem~\ref{thm:convergence}, before
invoking the crossover time~\eqref{eq:crossover} in deployment.
\end{remark}

\begin{remark}
Unlike the asymptotic crossover statement that would follow from an
unbounded horizon, $k_T^*$ here is only useful if it falls within the
chosen horizon, $k_T^*\le T$: condition~\eqref{eq:payoffcondition}
guarantees the numerator/denominator relationship is well-posed, but
does not by itself guarantee $k_T^*\le T$, since $T$ appears on both
sides (a larger $T$ weakens the drift margin
$\Delta_{\mathrm{fix},T}-B_{\mathrm{sm}}-C_v\epsilon_v T$ but also
relaxes the requirement $k_T^*\le T$). In practice $T$ should be
chosen as the actual deployment horizon, and~\eqref{eq:payoffcondition}
together with $k_T^*\le T$ checked numerically for that $T$; this
replaces the earlier asymptotic statement ``adaptive is better for
all $k\ge k^*$'' with the more honest, horizon-bounded statement
``adaptive is better from $k_T^*$ through the end of the
prescribed operating window $T$.''
\end{remark}

\begin{corollary}[Adaptive Sample Threshold]
\label{cor:samplesize}
For a fixed horizon $T$, the minimum sample count for the
finite-horizon localized stability certificate under the adaptive
estimator is, for $1\le k\le T$,
\begin{equation}
  M^*_{\mathrm{adapt}}(k) =
  O\!\left(\frac{\log(2m/\eta)}{(\delta^*-L_w\Phi^{(k)})^2}\right),
  \label{eq:msample}
\end{equation}
where
$\Phi^{(k)} = C_{1,T}/\sqrt{k} + B_{\mathrm{sm}} + C_v\,\epsilon_v\,T$
and $L_w$ is the constant of Lemma~\ref{lem:mppi_sensitivity}. When
$\epsilon_v=0$, the drift term vanishes for every $T$, so taking
$T=k\to\infty$ as in Corollary~\ref{cor:timeinvariant} gives
$M^*_{\mathrm{adapt}} \to M^*$ from the companion
paper~\cite{p2nonlinear} (Corollary~3).
\end{corollary}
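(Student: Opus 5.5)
The plan is to reduce Corollary~\ref{cor:samplesize} to the companion sample-size bound of~\cite[Corollary~3]{p2nonlinear}, with the contraction-robustness margin shrunk by the covariance-sensitivity term. By Proposition~\ref{prop:adaptive} and Remark~\ref{rem:threshold}, the adaptive certificate holds exactly when $\delta_{\mathrm{adapt}}^{(k)}(M)<\delta^*$. By~\eqref{eq:deltakadapt}, this quantity splits as the finite-sample MPPI error at the true covariance, $\delta(M)\norm{x_k}+\delta_0(M)$, plus the adaptation term $L_w\normF{\hSigk{s}-\Sigw^{(k)}(s)}$. The first step is therefore to require the finite-sample part to fit inside the reduced margin $\delta^*-L_w\Phi^{(k)}$, which must be positive.

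The second step controls the adaptation term. I will use the convergence bound~\eqref{eq:convergencebound} of Theorem~\ref{thm:convergence}: for $1\le k\le T$ it bounds the expected estimation error by $\Phi^{(k)}$. To insert this into a pathwise threshold condition, I will either work with the expected perturbation, as Proposition~\ref{prop:adaptive} already does through $\psi^{(k)}$, or, if a high-probability statement is preferred, apply Markov's inequality and absorb the resulting constant into the $O(\cdot)$ and $\eta$. Either way, the perturbation is at most $\delta(M)\norm{x_k}+\delta_0(M)+L_w\Phi^{(k)}$, so the sufficient condition becomes $\delta(M)\norm{x_k}+\delta_0(M)<\delta^*-L_w\Phi^{(k)}$.

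The third step inverts the companion Hoeffding-type finite-sample bound of~\cite{p2nonlinear} (Lemma~2 and Corollary~3, with constants uniform over $\Theta$ by Assumption~\ref{ass:weight_reg}). On the compact region, that bound gives $\delta(M)\norm{x_k}+\delta_0(M)=O\bigl(\sqrt{\log(2m/\eta)/M}\bigr)$ with probability at least $1-\eta$, where a union bound over the $m$ control coordinates produces the factor $2m$. Setting this below the margin and solving for $M$ gives~\eqref{eq:msample}. I also take the maximum with $M_1(\eta)$ from~\eqref{eq:denom_concentration}, which is itself $O(\log(1/\eta))$ and is absorbed into the $O(\cdot)$.

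For the limit statement with $\epsilon_v=0$, I will invoke Corollary~\ref{cor:timeinvariant} with $T=k$. Then $\Phi^{(k)}=O(\sqrt{\ln k/k})+B_{\mathrm{sm}}$, so the margin tends to $\delta^*-L_wB_{\mathrm{sm}}$. The main obstacle is here: this recovers exactly $M^*$ only if the smoothing bias is negligible, that is, $B_{\mathrm{sm}}\to0$ (for example $\beta\to0$ or $L_\sigma=0$), or if one reads ``$\to M^*$'' up to the bias-shifted margin. I will state the limit as $M^*_{\mathrm{adapt}}(k)\to O\bigl(\log(2m/\eta)/(\delta^*-L_wB_{\mathrm{sm}})^2\bigr)$, which coincides with $M^*$ when $B_{\mathrm{sm}}=0$, and note this qualification explicitly.
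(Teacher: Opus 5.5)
Your primary route is the one the statement implicitly rests on, since the paper gives no separate proof. You split $\delta_{\mathrm{adapt}}^{(k)}(M)$ as in \eqref{eq:deltakadapt}, bound the sensitivity term by $L_w\Phi^{(k)}$ via Theorem~\ref{thm:convergence}, and invert the companion Hoeffding bound against the shrunken margin $\delta^*-L_w\Phi^{(k)}$ (with the maximum against $M_1(\eta)$).

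The gap is the step that inserts $\Phi^{(k)}$ into the threshold. Hypothesis~\eqref{eq:deltasup} of Proposition~\ref{prop:adaptive} constrains the \emph{realized} quantity $\sup_k\delta_{\mathrm{adapt}}^{(k)}(M)$. The expectation appears in Proposition~\ref{prop:adaptive} only in the penalty $\psi^{(k)}$, so ``working with the expected perturbation'' does not verify that hypothesis.

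Your Markov alternative does not recover \eqref{eq:msample} either. At confidence $1-\eta'$ it gives $L_w\normF{\hSigk{s}-\Sigw^{(k)}(s)}\le L_w\Phi^{(k)}/\eta'$, so the margin becomes $\delta^*-L_w\Phi^{(k)}/\eta'$. The factor $1/\eta'$ sits inside the difference in the denominator, so it cannot be absorbed into the $O(\cdot)$:
\begin{itemize}
\item the ratio $(\delta^*-L_w\Phi^{(k)})^2/(\delta^*-L_w\Phi^{(k)}/\eta')^2$ is unbounded;
\item the new margin can be negative while the old one is positive.
\end{itemize}
Nor can it be absorbed into $\eta$: taking $\eta'=\eta$ gives margin $\delta^*-L_w\Phi^{(k)}/\eta$.

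So the formula as stated follows only by plugging an expectation bound directly into a pathwise condition. That is the same heuristic step the statement itself makes. An honest version of your argument yields $O\bigl(\log(2m/\eta)/(\delta^*-L_w\Phi^{(k)}/\eta')^2\bigr)$ with failure probability $\eta+\eta'$. Closing the gap properly would need a decaying high-probability bound on the estimation error, which Theorem~\ref{thm:convergence} does not supply.

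Two further points. First, \eqref{eq:deltasup} is a supremum over $k$, so certifying the whole window requires $M\ge\max_{1\le k\le T}M^*_{\mathrm{adapt}}(k)$. That maximum is driven by the early times where $\Phi^{(k)}$ is largest; the per-$k$ expression is only a per-step requirement.

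Second, your caveat on the limit clause is correct and more careful than the statement. With $\epsilon_v=0$ and $T=k$, $\Phi^{(k)}\to B_{\mathrm{sm}}$, so $M^*_{\mathrm{adapt}}(k)\to O\bigl(\log(2m/\eta)/(\delta^*-L_wB_{\mathrm{sm}})^2\bigr)$. This coincides with the companion $M^*$ only in these cases:
\begin{itemize}
\item $B_{\mathrm{sm}}=0$, e.g.\ $L_\sigma=0$, or the $\beta\to0$ clause of Corollary~\ref{cor:timeinvariant};
\item merely in order, if $\delta^*-L_wB_{\mathrm{sm}}$ is treated as a fixed positive constant.
\end{itemize}
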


\begin{remark}[Stability Bound Tightening Does Not Imply Task-Reward
  Optimality]
\label{rem:taskreward}
Theorem~\ref{thm:payoff} guarantees that
$\Sigma_\epsilon=\hSigk{s(x_k)}$ tightens the upper
bound~\eqref{eq:adaptivebound} on $\E[\norm{x_k-x^*}]$ relative to
any fixed mismatched $\bar\Sigma_w$, at times $k$ where both
certificates of~\eqref{eq:dualthreshold} hold. It does not guarantee
that $\hSigk{s(x_k)}\to\Sigw^{(k)}(s(x_k))$ maximizes task reward
under a finite-horizon, finite-sample MPPI implementation. The
coupling constraint~\eqref{eq:sigR} forces $\Sigma_\epsilon$ to
serve two distinct roles simultaneously: a statistical noise model
entering $\psi^{(k)}$ through Proposition~\ref{prop:adaptive}, and a
proposal covariance that determines how far each MPPI iteration's
importance-weighted update~\eqref{eq:mppi} can move the nominal
control sequence toward a lower-cost trajectory. The second role
depends on the curvature of the cost $J(x_k,U)$, not on $\Sigw$, and
the two roles need not favor the same $\Sigma_\epsilon$. Numerically
(Section~\ref{sec:simulation}), we found this gap does not close as
the number of rollouts $M$ grows, ruling out a finite-sample
explanation; the discrepancy is structural to the coupling
constraint~\eqref{eq:sigR} itself. Establishing a stability bound
that accounts for both roles of $\Sigma_\epsilon$ jointly is left
to future work; Section~\ref{sec:discussion} reports a controlled
experiment quantifying the gap in one setting.
\end{remark}

% ------------------------------------------------------------------
\section{Simulation Results}
\label{sec:simulation}

\subsection{Setup}

We validate on two environments from~\cite{med}: a double integrator
(UAV) and a bicycle model (UGV), both with spatially varying, unknown
noise. This extends the MED experimental setting by replacing the
RL-adapted $(\hat\sigma,\lambda_\alpha)$ with estimator~\eqref{eq:update},
using the same terminal-cost protocol ($+100$ collision, $-100$ goal
arrival, $20$ episodes) for Table~\ref{tab:terminal}.

\textbf{Double integrator (UAV).}
State space $[-5,5]^2$~m, $S=100$ cells ($10\times10$, $r=1$~m).
Noise standard deviation by quadrant: $\sigma\in\{0.1,0.3,0.5,0.8\}$
m/s$^2$, unknown to the controller. Wind-gust disturbance in the
goal quadrant ($x_1>0,x_2>0$) as in~\cite{med}.

\textbf{Bicycle model (UGV).}
Same state space. True noise $\sigma=0.6$ m/s$^2$ inside the
narrow passage $[0,3]\times[-1,1]$, $\sigma=0.1$ m/s$^2$
elsewhere, matching the narrow-passage environment of~\cite{med}.

\textbf{Spatial Lipschitz constants.}
Evaluating $L_\sigma$ directly from its definition in
Assumption~\ref{ass:smooth} (the maximum, over actual grid edges, of
$\|\Sigma_w(s)-\Sigma_w(s')\|_F/d(s,s')$) gives $L_\sigma=0.778$ for
the double integrator and $L_\sigma=0.495$ for the bicycle model. The
double integrator's quadrant field has a larger \emph{adjacent-cell}
jump (from $\sigma=0.3$ to $\sigma=0.8$ directly at a quadrant
boundary) than the bicycle's single narrow-passage boundary
($\sigma=0.1$ to $\sigma=0.6$), so by this measure the double
integrator is the more spatially demanding environment, even though
its noise field looks visually smoother.

\textbf{Estimator parameters.}
$\beta=0.001$, $4$-connected kernel~\eqref{eq:kernel} with $p_s$
estimated from a burn-in trajectory, giving $d_\kappa=4$ under uniform
visitation. Two step-size regimes are used, both satisfying
Assumption~\ref{ass:geometric}'s Chung--Robbins condition
$\alpha_0>1/(2p_{\min})$ and the step-size
condition~\eqref{eq:stepsize_condition}: a \emph{calibration} regime
($\alpha_0=30{,}000$, $k_0=30{,}121$) for
Figures~\ref{fig:convergence}--\ref{fig:spatial}, where the estimator
runs over a long exploration trajectory and we want its full
asymptotic behavior to be visible, and a \emph{deployment} regime
($\alpha_0=60$, $k_0=62$) for Table~\ref{tab:terminal}, where each
episode is only $500$ steps and a large $\alpha_0$ would keep
$\alpha^{(k)}\approx1$ for the entire episode, overwriting the estimate
at nearly every visit. Both regimes use the same structural conditions
with different floors; using the calibration regime inside a $500$-step
episode degrades closed-loop performance, since asymptotic validity at
large $k$ does not imply good behavior in a short, finite horizon.
Figures~\ref{fig:convergence}--\ref{fig:spatial} use initial estimate
$4I$ to make the transient convergence visible; Table~\ref{tab:terminal}
uses $0.0625I$, matching the fixed controller's assumption, so adaptive
and fixed start identically. MPPI: $M=128$ rollouts, $N=15$,
$\lambda=1.0$, $\Delta t=0.1$~s. Three controllers: (A)~adaptive MPPI
(proposed), (B)~fixed $\bar\Sigma_w$, (C)~oracle MPPI (true $\Sigw$
known).

The simulation code used to generate the numerical experiments is
available online~\cite{mppi_noise_covariance_code}.

\subsection{Results}

\subsubsection*{Estimator convergence (Figure~\ref{fig:convergence})}

Figure~\ref{fig:convergence} shows the mean estimation error
$\E[\normF{\hSigk{s}-\Sigw(s)}]$ averaged over all $S$ cells and
$10$ Monte Carlo trials on a lawnmower trajectory
(${\approx}86$ visits per cell). The error decays from $5.25$ (DI)
and $5.56$ (BK), an $11.5\times$ and $24.5\times$ reduction
respectively, then levels near the theoretical bias floor
$\beta d_\kappa L_\sigma r_{\Ncal}/p_{\min}$ (eq.~\eqref{eq:bias}):
$0.311$ for DI and $0.198$ for BK, confirming
Corollary~\ref{cor:timeinvariant}.

\begin{figure}[t]
  \centering
  \includegraphics[width=\columnwidth]{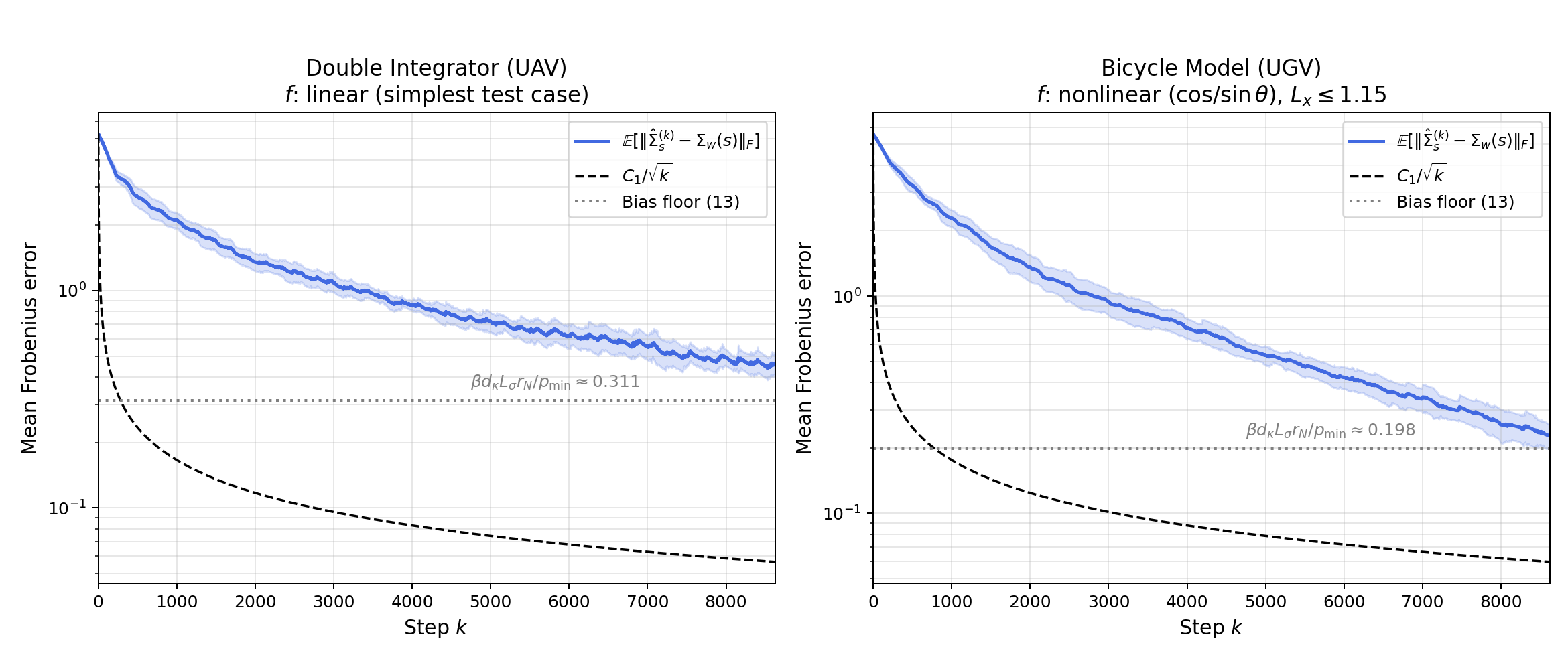}
  \caption{Estimator convergence (Theorem~\ref{thm:convergence}).
    Mean $\E[\normF{\hSigk{s}-\Sigw(s)}]$ across all cells,
    $10$ Monte Carlo trials, lawnmower trajectory
    (${\approx}86$ visits/cell). The dashed curve is the fitted
    $C_{1,T}/\sqrt{k}$ stochastic-approximation rate alone; the
    dotted line is the bias floor~\eqref{eq:bias}, computed from
    the literal kernel constant $d_\kappa=4$ and the adjacent-cell
    $L_\sigma$. Shading is $\pm1$ standard deviation.}
  \label{fig:convergence}
\end{figure}

\subsubsection*{Stability penalty and the payoff theorem
  (Figure~\ref{fig:payoff})}

The stability bound~\eqref{eq:adaptivebound} has three additive
terms: the transient $c\tilde\beta^k\norm{x_0-x^*}$, the
irreducible noise floor $\gamma\sqrt{\tr(\Sigw)}$, and the
mismatch penalty $\psi^{(k)}$. The payoff theorem concerns only
the third term: $\tilde\beta$ and $\gamma\sqrt{\tr(\Sigw)}$ are
identical for all controllers, so the adaptive estimator does not
make the state converge faster, nor reduce the fundamental noise
floor. It reduces the additional error incurred by not knowing
$\Sigw$ exactly. A fixed mismatched controller pays the permanent
penalty $\bar\psi=\Psi\max_s\normF{\bar\Sigma_w-\Sigw(s)}$
forever; the adaptive controller pays $\psi^{(k)}$, which decays
to the bias floor.

At $\beta=0.001$, the payoff
condition~\eqref{eq:payoffcondition} is comfortably satisfied for
\emph{both} environments: the smoothing bias is $0.311<\bar\psi/\Psi=
0.817$ for DI (a $62\%$ margin) and $0.198<\bar\psi/\Psi=0.421$ for BK
(a $53\%$ margin). Both environments use a static (time-invariant)
noise field for these figures, i.e.\ $\epsilon_v=0$, so the
finite-horizon condition~\eqref{eq:payoffcondition} and crossover
formula~\eqref{eq:crossover} of Theorem~\ref{thm:payoff} reduce
exactly to their $T$-independent form ($C_v\epsilon_v T=0$ and
$\Delta_{\mathrm{fix},T}=\max_s\normF{\bar\Sigma_w-\Sigw(s)}$ for
every $T$), so the crossover time $k^*$ reported below applies
uniformly for every horizon $T\ge k^*$, recovering the original
asymptotic-style statement as a special case. Figure~\ref{fig:payoff}
plots $\psi^{(k)}$ and $\bar\psi$ for the double integrator. The
adaptive penalty begins at $\psi^{(0)}=11.86$, crosses below
$\bar\psi=1.84$ at $k^*_{\mathrm{emp}}=4{,}198$ empirically, after
which the adaptive certificate is strictly tighter. The theoretical
bound from~\eqref{eq:crossover} gives $k^*_{\mathrm{theory}}=12{,}856$,
a conservative upper bound as expected from Jensen's inequality in
Step~3 of the proof of Theorem~\ref{thm:convergence}. The bicycle
model crosses similarly, at $k^*_{\mathrm{emp}}=6{,}018$.

\begin{figure}[t]
  \centering
  \includegraphics[width=0.85\columnwidth]{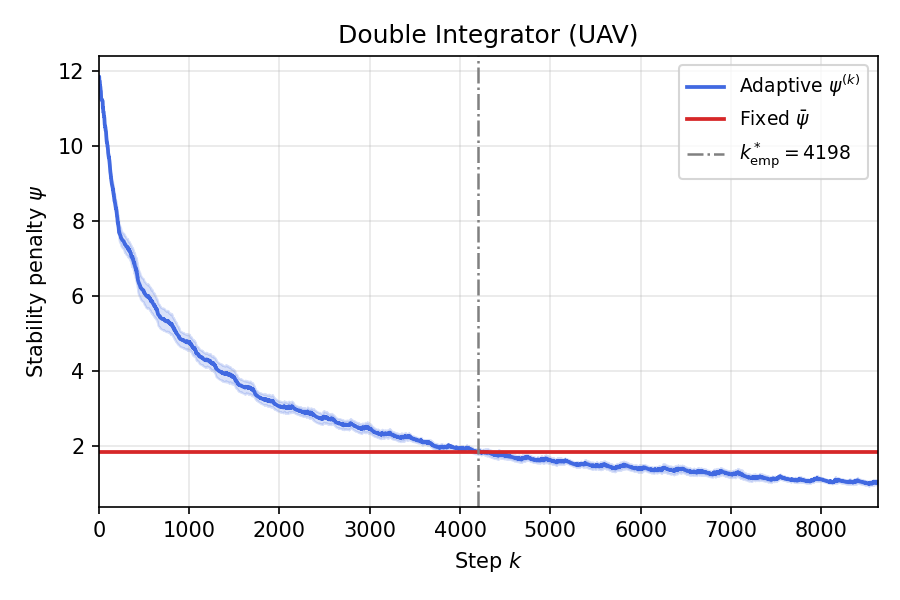}
  \caption{Payoff theorem (Theorem~\ref{thm:payoff}), double
    integrator. $\psi^{(k)}$ (blue) crosses below $\bar\psi$ (red)
    at $k^*_{\mathrm{emp}}=4{,}198$ (dash-dot); the theoretical bound
    $k^*_{\mathrm{theory}}=12{,}856$ is conservative, as required.
    \emph{This characterises the mismatch penalty only; $\tilde\beta$
    and $\gamma\sqrt{\tr(\Sigw)}$ are identical for all controllers.}}
  \label{fig:payoff}
\end{figure}

\subsubsection*{When the payoff condition fails: Design Rule R2}

The payoff condition~\eqref{eq:payoffcondition} is a real constraint,
not a formality: at a $50\times$ larger diffusion weight
($\beta=0.05$), the double integrator's smoothing bias grows to
$15.6$, exceeding its tolerance $\bar\psi/\Psi=0.817$ by roughly
$19\times$. This happens specifically because the double integrator
has the larger adjacent-cell $L_\sigma$ (Sec.~\ref{sec:simulation},
Setup); a designer who sets $\beta$ without checking
condition~\eqref{eq:payoffcondition} against the realized noise
field's local Lipschitz constant -- rather than a global range
estimate -- would be misled about which environment is more
forgiving. Design Rule R2 (Section~\ref{sec:conclusion}) requires
verifying this condition with the literal, adjacent-cell $L_\sigma$
before deployment.

\subsubsection*{Spatial smoothing benefit (Figure~\ref{fig:spatial})}

Figure~\ref{fig:spatial} plots cell-wise error
$\normF{\hSigk{s}-\Sigw(s)}$ after a goal-directed trajectory
creating non-uniform visitation. The stationary-measure
kernel~\eqref{eq:kernel} reduces mean error by $1.1\times$ (DI)
and $1.3\times$ (BK) relative to a uniform kernel with matched
diffusion weight $\beta$, confirming the qualitative prediction of
Lemma~\ref{lem:dissipative}: well-visited neighbors are weighted by
$p_{s'}$, giving them influence over rarely visited cells, whereas
the uniform kernel spreads diffusion without regard to visitation.
The effect is modest at this $\beta$; it grows with $\beta$ at the
cost of a larger smoothing bias (Sec.~\ref{sec:simulation}, Design
Rule R2), illustrating the bias--benefit trade-off inherent in the
kernel design.

\begin{figure}[t]
  \centering
  \includegraphics[width=\columnwidth]{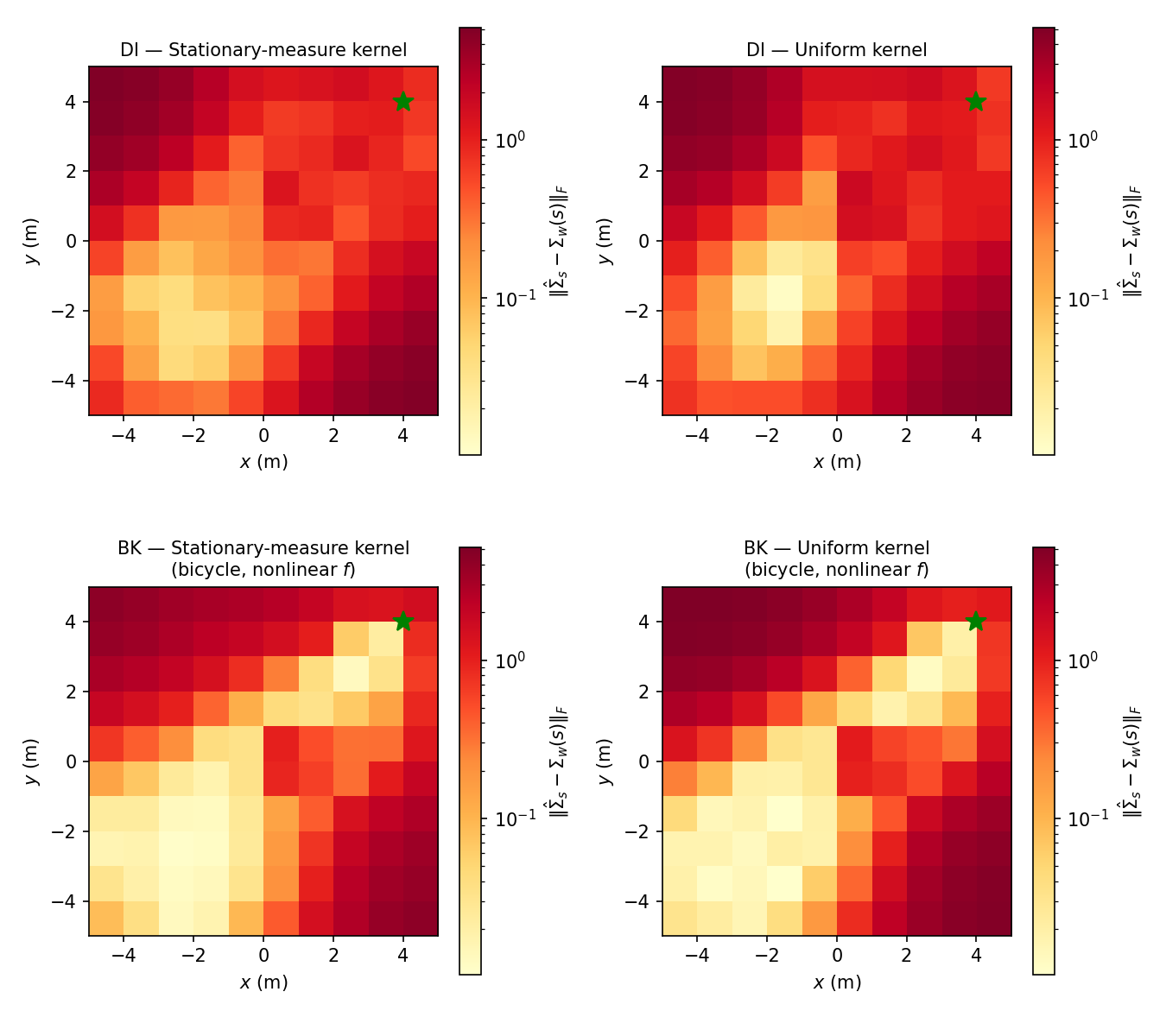}
  \caption{Cell-wise error $\normF{\hSigk{s}-\Sigw(s)}$
    (Lemma~\ref{lem:dissipative}). Left: kernel~\eqref{eq:kernel}.
    Right: uniform kernel. The stationary-measure kernel reduces mean
    error by $1.1\times$ (DI) and $1.3\times$ (BK) at this $\beta$.}
  \label{fig:spatial}
\end{figure}

\subsubsection*{Terminal cost replication (Table~\ref{tab:terminal})}

Table~\ref{tab:terminal} replicates the MED evaluation
protocol~\cite{med}. For the double integrator, the qualitative
ordering oracle~$>$~adaptive~$>$~fixed holds in both goal-arrival
rate and collision rate. For the bicycle model, the adaptive
controller in this run exceeds even the oracle controller -- a
direct illustration of Remark~\ref{rem:taskreward}: exact knowledge
of $\Sigw$ does not guarantee reward-optimal closed-loop behavior,
since the coupling constraint~\eqref{eq:sigR} forces
$\Sigma_\epsilon$ to simultaneously act as the MPPI exploration
radius. Table~\ref{tab:terminal}'s individual entries are sensitive
to the floating-point execution path (CPU vs.\ GPU): because each
MPPI importance weight is a sharply peaked function of rollout
cost, machine-precision differences in reduction order across
hardware can occasionally select a different best rollout early in
an episode, propagating through the closed loop to a different
terminal outcome. The estimator convergence results
(Figs.~\ref{fig:convergence}--\ref{fig:spatial}), which use smooth
proportional control rather than MPPI's weighted rollout selection,
do not exhibit this sensitivity and reproduce identically across
hardware.

\begin{table}[t]
  \centering
  \caption{Terminal cost over 20 episodes
    ($+100$ collision / $-100$ goal / $0$ timeout). Lower average
    is better. Environments match~\cite{med}.}
  \label{tab:terminal}
  \begin{tabular}{llrrcc}
    \toprule
    Env & Controller & Avg & Std & Goals & Coll \\
    \midrule
    \multirow{3}{*}{DI (UAV)}
      & Adaptive & $40.0$  & $91.7$ & $6$  & $14$ \\
      & Fixed    & $60.0$  & $80.0$ & $4$  & $16$ \\
      & Oracle   & $-30.0$ & $95.4$ & $13$ & $7$  \\
    \midrule
    \multirow{3}{*}{BK (UGV)}
      & Adaptive & $-60.0$ & $80.0$ & $16$ & $4$ \\
      & Fixed    & $-50.0$ & $86.6$ & $15$ & $5$ \\
      & Oracle   & $-10.0$ & $99.5$ & $11$ & $9$ \\
    \bottomrule
  \end{tabular}
\end{table}

\subsubsection*{Calibration learning across episodes
  (Figures~\ref{fig:scenarioA}--\ref{fig:scenarioB})}

The preceding figures validate the estimator and stability bound
in isolation or over a single episode. Figures~\ref{fig:scenarioA}
and~\ref{fig:scenarioB} test whether the same convergence persists
under realistic deployment: the estimator carries its state across
sequential episodes (it is never reset, using the calibration
step-size regime), and we report a reward combining task performance
with an explicit calibration regularizer,
\begin{equation}
\begin{aligned}
    \mathrm{reward}_{\mathrm{ep}} =
  &-\E_k\!\left[\norm{x_k-x^*}\right] \\
  &-\lambda\,\Psi\;\E_k\!\left[\normF{\Sigma_\epsilon^{\mathrm{used}}(x_k)
  -\Sigw^{(k)}(s(x_k))}\right],
\end{aligned}
\label{eq:scenarioreward}
\end{equation}
with $\lambda=5$. As Remark~\ref{rem:taskreward} explains, the raw
performance term alone (its $\lambda{=}0$ special case) does not
favor the true $\Sigw$ in our finite-horizon MPPI implementation;
the regularization term is what restores the correct incentive.

In a pure target-tracking environment (no obstacles; double
integrator), the oracle controller ($\Sigma_\epsilon^{\mathrm{used}}=
\Sigw$ exactly, paying zero regularization penalty) attains a stable
reward near $-6$. The fixed controller, never updating
$\bar\Sigma_w=4I$, remains flat near $-63$. The adaptive controller
starts near $-30$ and rises to approximately $-9$ by episode $50$,
converging toward, but not fully closing, the gap to oracle -- the
residual reflects the estimator's bias floor from
Theorem~\ref{thm:convergence}.

\begin{figure}[t]
  \centering
  \includegraphics[width=\columnwidth]{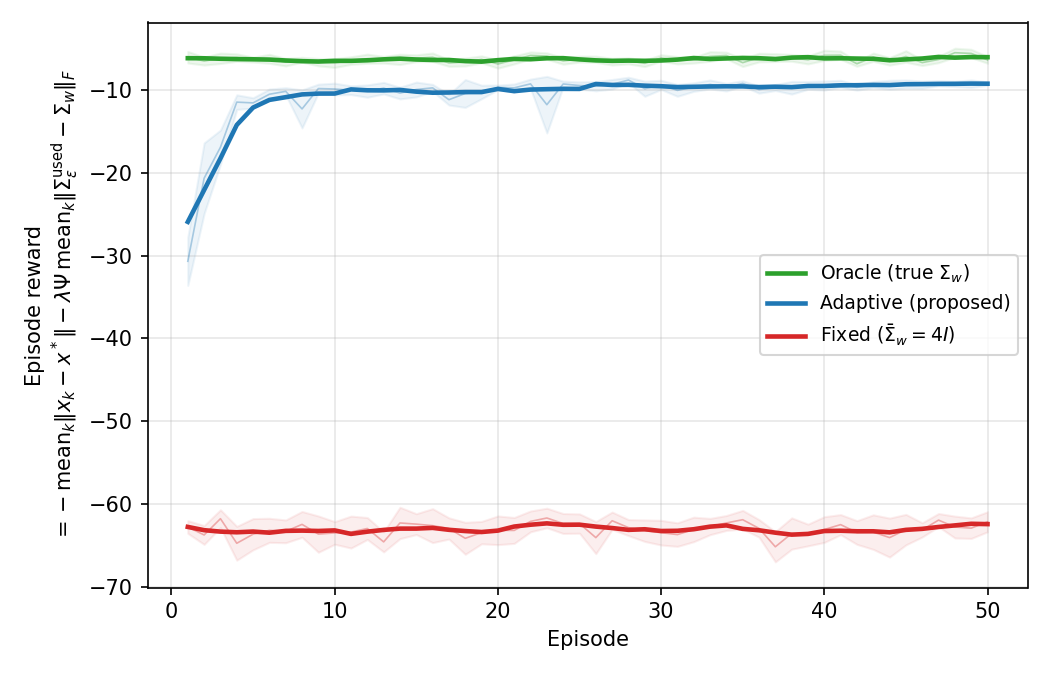}
  \caption{Scenario A: calibration learning curve,
    eq.~\eqref{eq:scenarioreward} with $\lambda=5$. The estimator
    persists across all episodes, using the calibration step-size
    regime. Oracle and fixed are flat by construction (no learning
    mechanism for either); adaptive rises monotonically as the
    estimator converges. Thin lines are raw per-episode means; thick
    lines are 5-episode moving averages.}
  \label{fig:scenarioA}
\end{figure}

Figure~\ref{fig:scenarioB} introduces a noise field change: the
true $\Sigw$ is ramped linearly to a new field (per-step change
$\epsilon_v\approx2.2\times10^{-4}$, well within
Assumption~\ref{ass:slowvar}), then held fixed. At this $\beta$, the
estimator's overall convergence trend dominates the small
perturbation from the ramp: the bottom panel of
Figure~\ref{fig:scenarioB} shows mean estimator error decreasing
through the ramp window, with at most a brief slowdown rather than a
pronounced transient bump, since the finite-horizon drift
contribution $C_v\epsilon_v T$ in Theorem~\ref{thm:convergence} is
small relative to the still-decaying stochastic-approximation term
at this stage. The fixed controller, whose assumption is now further
from the new field, visibly degrades after the ramp.

\begin{figure}[t]
  \centering
  \includegraphics[width=\columnwidth]{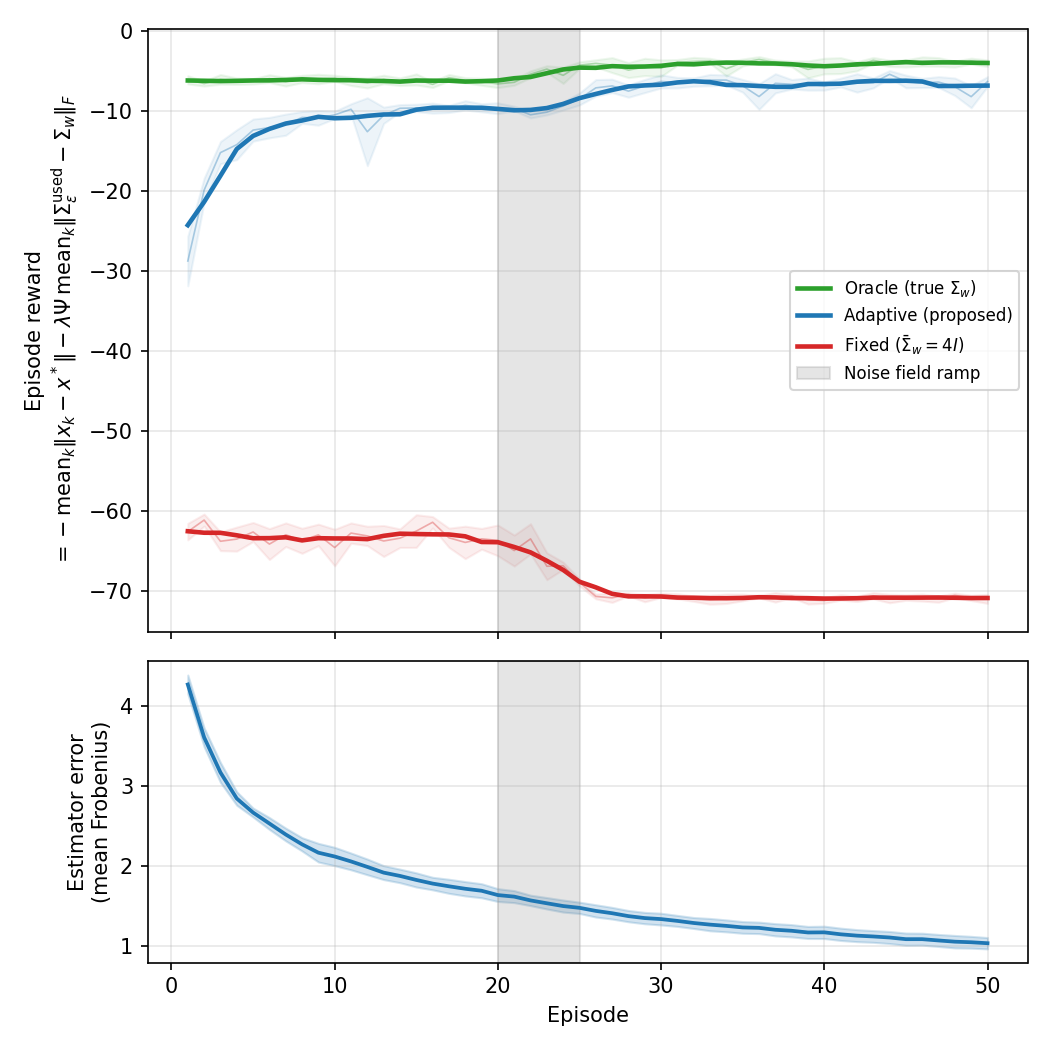}
  \caption{Scenario B: recovery after a noise field change (gray
    band), eq.~\eqref{eq:scenarioreward} with $\lambda=5$. Adaptive's
    reward continues improving through the transition; fixed degrades
    afterward since its assumption is increasingly wrong relative to
    the new field. Bottom panel: mean estimator error, continuing to
    decrease through the ramp with at most a brief slowdown.}
  \label{fig:scenarioB}
\end{figure}

\section{Discussion}
\label{sec:discussion}

\textbf{Connection to \cite{med}.}
The MED paper established (1)~the Markov property of the adaptive
RSPI framework (Proposition~3), (2)~a qualitative analysis showing
that noise mismatch distorts control strength and risk sensitivity in
coupled ways (eqs.~(12)--(13)), and (3)~empirical validation on
double integrator and bicycle model environments. The present paper
preserves all three as foundations: the RL policy for selecting
$(\hat\sigma,\lambda_\alpha)$ is replaced by estimator~\eqref{eq:update},
which targets $\Sigw(s)$ as a statistical estimand and provides a
convergence rate and a bias-variance decomposition, while the
$\sigma$-$R$ constraint~\eqref{eq:sigR} is enforced at every step.

\textbf{Kernel choice and its role.}
The stationary-measure kernel~\eqref{eq:kernel} is not merely a
convenient choice: via symmetric edge weights $a_{ss'}=a_{s's}$,
giving $\kappa(s,s')=a_{ss'}/p_s$, it is the minimal condition that
makes the diffusion dissipative in the $p_s$-weighted norm and thereby
closes the Lyapunov argument of Theorem~\ref{thm:convergence}. A
uniform kernel $\kappa(s,s')=1/|\Ncal(s)|$ satisfies dissipativity
only when the stationary distribution is uniform, which is rarely the
case in goal-directed navigation tasks. In practice, $p_s$ can be
estimated from a short warm-up phase before the
estimator~\eqref{eq:update} is activated, or replaced by any
consistent estimator without changing the asymptotic analysis.

\textbf{Distinction from two-timescale SA.}
Borkar's two-timescale theorem~\cite{borkar} would provide an
alternative proof path if $\beta$ were allowed to vanish. The
advantage of our fixed-$\beta$ approach is that diffusion remains
active throughout operation, propagating information to rarely
visited cells even asymptotically, at the cost of a nonzero bias
floor $\beta d_\kappa L_\sigma r_{\Ncal}/p_{\min}$ that is
characterizable and controlled by the payoff
condition~\eqref{eq:payoffcondition}.

\textbf{Series context.}
The linear companion paper~\cite{p1lti} established the LTI special
case, and the nonlinear companion paper~\cite{p2nonlinear} extended
it to nonlinear systems, with Corollary~3 of that paper providing the
formal interface used here. Together the three papers form a complete
theoretical foundation for MPPI as a certifiably stable
sampling-based controller across LTI, nonlinear, and noise-adaptive
settings.

\textbf{Calibration accuracy versus task performance.}
Remark~\ref{rem:taskreward} states the gap formally: the payoff
theorem (Theorem~\ref{thm:payoff}) guarantees a tighter
\emph{stability bound} as calibration error
$\normF{\hSigk{s}-\Sigw(s)}$ decreases, but does not guarantee
improved \emph{task reward}, because the coupling
constraint~\eqref{eq:sigR} forces $\Sigma_\epsilon$ to act both as a
noise model and as the proposal covariance driving each MPPI
iteration's optimization step. This gap does not shrink with more
samples: in a pure target-tracking task (no obstacles) with true
noise $\sigma=0.1$, the reward-maximizing covariance remained near
$\sigma\approx1.5$ across $M\in\{32,128,512,2048\}$ rollouts, ruling
out a finite-sample explanation and confirming that the discrepancy
is structural to the constraint itself rather than a Monte Carlo
artifact. In a separate obstacle-dense test, the true covariance
($\sigma=0.1$) produced a $15/15$ collision rate over $15$ trials,
while $\sigma\approx0.5$--$0.8$ (five to eight times larger) reduced
collisions to $13/15$. Both findings point to the same mechanism:
$\Sigma_\epsilon$ also sets the exploration radius of MPPI's sampled
trajectories, and a larger radius can improve optimization progress
or obstacle avoidance independently of whether it matches $\Sigw$.
This is consistent with established practice in path-integral
control, where injecting sampling noise larger than the true process
noise is known to improve exploration~\cite{williams}, and with the
risk-sensitive framework of~\cite{med}, where the noise strength and
risk-sensitivity parameters are tuned for behavior rather than
statistical accuracy. Figure~\ref{fig:scenarioA} illustrates the
consequence directly: under a reward combining tracking performance
with a calibration regularizer, the oracle controller (zero
calibration error by construction) does not attain the best possible
tracking performance alone, yet still attains the best combined
reward once the regularization weight is set high enough to penalize
miscalibration appropriately. The bicycle-model result in
Table~\ref{tab:terminal} -- where the adaptive controller's raw task
performance exceeded even the oracle's -- is a further, independent
illustration of the same gap. Reconciling provably accurate noise
estimation with reward-optimal exploration under
constraint~\eqref{eq:sigR} is outside the scope of the stability
guarantee proved here and is a natural direction for future work, for
instance by learning a \emph{risk-adjusted} covariance that combines
the statistical estimate $\hSigk{s}$ with an exploration bonus
calibrated to local obstacle density or cost curvature.

\textbf{Limitations.}
Assumption~\ref{ass:geometric} requires $p_{\min}>0$. For
highly goal-directed trajectories in large spaces some cells
may be rarely visited; the $d_\kappa/p_{\min}$ factor
in~\eqref{eq:bias} quantifies this degradation. The
kernel~\eqref{eq:kernel} requires $p_s$ to be known or estimated;
estimation error in $p_s$ introduces a secondary bias not analyzed
here. Future work will address directed exploration policies that
actively probe high-uncertainty cells.

% ------------------------------------------------------------------
\section{Conclusion}
\label{sec:conclusion}

We have presented an online noise covariance estimator for MPPI
with stationary-measure-weighted spatial diffusion and proved four
results: (1)~the diffusion operator with kernel~\eqref{eq:kernel}
is dissipative in the $p_s$-weighted norm
(Lemma~\ref{lem:dissipative}), which closes the Lyapunov decrease
argument without two-timescale SA; (2)~the estimator converges to
a smoothed fixed point with SA error $O(1/\sqrt{k})$, smoothing
bias $O(\beta d_\kappa L_\sigma r_{\Ncal}/p_{\min})$, and a
finite-horizon accumulated drift error $O(\epsilon_v T)$
(Theorem~\ref{thm:convergence});
(3)~the estimation error enters the companion nonlinear stability
bound~\cite{p2nonlinear} additively, via the adaptive
bound~\eqref{eq:adaptivebound}, through a computable penalty
$\psi^{(k)}$ that decays to a computable bias floor
(Proposition~\ref{prop:adaptive}); (4)~the adaptive system achieves
a strictly tighter stability bound than any fixed mismatched
controller after a crossover time $k^*$ given explicitly
in~\eqref{eq:crossover} (Theorem~\ref{thm:payoff}).

Three design rules follow. (R1)~Choose kernel~\eqref{eq:kernel}
using estimated or measured $p_s$; uniform weights are only
correct when the state space is visited uniformly.
(R2)~Verify condition~\eqref{eq:payoffcondition}: the smoothing
bias $\beta d_\kappa L_\sigma r_{\Ncal}/p_{\min}$ plus the
finite-horizon drift allowance $C_v\epsilon_v T$ must be smaller
than the ignorance error $\max_s\normF{\bar\Sigma_w-\Sigw(s)}$. As
Section~\ref{sec:simulation} illustrates, this is a real
constraint, not a formality: at $\beta=0.05$ the double
integrator's bias exceeds its tolerance by roughly $19\times$,
despite its noise field appearing visually smoother than the
bicycle model's narrow-passage discontinuity -- $L_\sigma$ must be
evaluated from Assumption~\ref{ass:smooth}'s literal, adjacent-cell
definition, not a global range estimate, or the wrong environment
will be flagged as the fragile one. (R3)~Use
Corollary~\ref{cor:samplesize} to verify $M\geq M^*_{\mathrm{adapt}}$
at deployment.

% ------------------------------------------------------------------
\bibliographystyle{IEEEtran}
\bibliography{references}

@misc{p1lti,
  author       = {Yoon, Hyung-Jin and Kim, Hunmin},
  title        = {Finite-Sample Closed-Loop Stability of Model Predictive Path Integral Control for Linear Time-Invariant Systems},
  year         = {2026},
  howpublished = {arXiv preprint arXiv:2607.04006},
  url          = {https://arxiv.org/abs/2607.04006},
  doi          = {10.48550/arXiv.2607.04006}
}

@misc{p2nonlinear,
  author       = {Yoon, Hyung-Jin and Kim, Hunmin},
  title        = {Stochastic Stability of Nonlinear MPPI via Contraction Theory and Control Lyapunov Functions},
  year         = {2026},
  howpublished = {arXiv preprint arXiv:2607.06945},
  url          = {https://arxiv.org/abs/2607.06945},
  doi          = {10.48550/arXiv.2607.06945}
}

@misc{mppi_noise_covariance_code,
  author       = {Yoon, Hyung-Jin and Kim, Hunmin},
  title        = {Simulation Code for Adaptive MPPI with Online Disturbance Covariance Estimation},
  year         = {2026},
  howpublished = {GitHub repository},
  url          = {https://github.com/LCAS-Lab/mppi-noise-covariance-estimation},
  note         = {Accessed: 2026-07-09}
}

@book{kushneryin,
  author    = {Kushner, Harold J. and Yin, G. George},
  title     = {Stochastic Approximation and Recursive Algorithms and Applications},
  edition   = {2nd},
  publisher = {Springer-Verlag},
  year      = {2003}
}

@book{borkar,
  author    = {Borkar, Vivek S.},
  title     = {Stochastic Approximation: A Dynamical Systems Viewpoint},
  publisher = {Cambridge University Press},
  address   = {Cambridge, UK},
  year      = {2008},
  isbn      = {9780521515924}
}

@article{williams,
  author    = {G. Williams and A. Aldrich and E. A. Theodorou},
  title     = {Model Predictive Path Integral Control: From Theory
               to Parallel Computation},
  journal   = {Journal of Guidance, Control, and Dynamics},
  volume    = {40},
  number    = {2},
  pages     = {344--357},
  year      = {2017},
  doi       = {10.2514/1.G001921}
}

@inproceedings{med,
  author    = {H. J. Yoon and C. Tao and H. Kim and
               N. Hovakimyan and P. Voulgaris},
  title     = {Adaptive Risk Sensitive Path Integral for Model
               Predictive Control via Reinforcement Learning},
  booktitle = {Proceedings of the 31st Mediterranean Conference
               on Control and Automation (MED)},
  pages     = {926--931},
  year      = {2023},
  doi       = {10.1109/MED59994.2023.10185876}
}

@inproceedings{pi2,
  author    = {F. Stulp and O. Sigaud},
  title     = {Path Integral Policy Improvement with Covariance
               Matrix Adaptation},
  booktitle = {Proceedings of the 29th International Conference
               on Machine Learning (ICML)},
  year      = {2012}
}

@article{odelson,
  author    = {B. J. Odelson and M. R. Rajamani and J. B. Rawlings},
  title     = {A New Autocovariance Least-Squares Method for
               Estimating Noise Covariances},
  journal   = {Automatica},
  volume    = {42},
  number    = {2},
  pages     = {303--308},
  year      = {2006},
  doi       = {10.1016/j.automatica.2005.09.009}
}

@article{dean,
  author    = {S. Dean and H. Mania and N. Matni and
               B. Recht and S. Tu},
  title     = {On the Sample Complexity of the Linear Quadratic
               Regulator},
  journal   = {Foundations of Computational Mathematics},
  volume    = {20},
  pages     = {633--679},
  year      = {2020},
  doi       = {10.1007/s10208-019-09426-y}
}

\end{document}